\let\originalparagraph\paragraph
\renewcommand{\paragraph}[2][.]{\originalparagraph{#2#1}}
\newtheorem{theorem}{Theorem}
\newtheorem{claim}[theorem]{Claim}
\newtheorem{prop}[theorem]{Proposition}
\newtheorem{definition}[theorem]{Definition}
\newtheorem{lemma}[theorem]{Lemma}
\newtheorem{corollary}[theorem]{Corollary}
\newcommand{\E}{\ensuremath{\mathcal{E}}}
\newcommand{\de}{:=}
\newcommand{\set}[1]{\ensuremath{\{{#1}\}}}
\newcommand{\condset}[2]{\ensuremath{\set{{#1}\;|\;{#2}}}}
\newcommand{\tup}[1]{\ensuremath{\langle{#1}\rangle}}
\newcommand{\p}{\mathcal{P}}
\newcommand{\NN}{\ensuremath{\mathbb{N}}\xspace}
\newcommand{\bs}{\ensuremath{\backslash}}
\newcommand{\C}{\ensuremath{\mathcal{C}}\xspace}
\newcommand{\A}{\ensuremath{\mathcal{A}}\xspace}
\newcommand{\B}{\mathbb{B}}
\newcommand{\Bs}{\B_{\mathrm{std}}}
\newcommand{\Bm}{\B_{\mathrm{maj}}}
\newcommand{\Aut}[1]{\mathrm{Aut}(#1)}
\newcommand{\EV}[1]{\mathrm{EV}_{#1}}
\newcommand{\tEV}[1]{\overline{\mathrm{EV}}_{#1}}
\newcommand{\oEV}[1]{\ora{\mathrm{EV}}_{#1}}
\newcommand{\ora}[1]{\vec{#1}}
\newcommand{\npart}[1]{|{#1}|}
\newcommand{\spart}[1]{\|{#1}\|}
\newcommand{\Sym}[1]{\ensuremath{\mathrm{Sym}_{#1}}\xspace}
\newcommand{\sstabu}[1]{\ensuremath{\mathrm{Stab}_U\set{#1}}\xspace}
\newcommand{\stabu}[1]{\ensuremath{\mathrm{Stab}_U(#1)}\xspace}
\newcommand{\stab}[1]{\ensuremath{\mathrm{Stab}(#1)}\xspace}
\newcommand{\sstabn}[1]{\ensuremath{\mathrm{Stab}_n\set{#1}}\xspace}
\newcommand{\ifp}{\ensuremath{\mathrm{ifp}}}
\newcommand{\poly}{\ensuremath{\mathrm{poly}}\xspace}
\newcommand{\FO}{\ensuremath{\mathrm{FO}}\xspace}
\newcommand{\tarb}{\ensuremath{\tau_{\mathrm{arb}}}\xspace}
\newcommand{\CT}{\#}
\newcommand{\PT}{\ensuremath{\mathrm{P}}\xspace}
\newcommand{\Pp}{\ensuremath{\PT/\poly}\xspace}
\newcommand{\NP}{\ensuremath{\mathrm{NP}}\xspace}
\newcommand{\FP}{\ensuremath{\mathrm{FP}}\xspace}
\newcommand{\Lo}{\mathcal{L}}
\newcommand{\LN}{\ensuremath{\Lo+\mathrm{\le}}\xspace}
\newcommand{\LG}{\ensuremath{\Lo+\mathrm{\Advice}}\xspace}
\newcommand{\FPN}{\ensuremath{\mathrm{FP}+\mathrm{\le}}\xspace}
\newcommand{\FPC}{\ensuremath{\mathrm{FPC}}\xspace}
\newcommand{\FPG}{\ensuremath{\mathrm{FP}+\Advice}\xspace}
\newcommand{\FPCG}{\ensuremath{\mathrm{FPC}+\Advice}\xspace}
\newcommand{\CPTC}{\ensuremath{\mathrm{CPTC}}\xspace}
\newcommand{\FON}{\ensuremath{\mathrm{FO}+\mathrm{\le}}\xspace}
\newcommand{\Advice}{\Upsilon}
\newcommand{\Orb}[2]{\ensuremath{\mathrm{Orb}_{#1}({#2})}\xspace}
\newcommand{\sse}{\subseteq}
\newcommand{\ssn}{\subsetneq}
\newcommand{\spe}{\supseteq}
\newcommand{\es}{\emptyset}
\newcommand{\inv}{^{-1}}
\newcommand{\ra}{\rightarrow}
\newcommand{\agr}{\textsc{agree}}
\newcommand{\numf}{\textsc{num}}
\newcommand{\asize}{\textsc{asize}}
\newcommand{\orb}{\mathrm{Orb}}
\newcommand{\supp}{\textsc{supp}}
\newcommand{\spp}{\textsc{overlap}}
\newcommand{\SP}{\mathsf{SP}}
\newcommand{\spt}[1]{{\mathrm{sp}(#1)}}
\newcommand{\ospt}[1]{{\ora{\mathrm{sp}}(#1)}}
\newcommand{\ind}[2]{\ensuremath{[#1:#2]}}
\newcommand{\induce}[1]{\ensuremath{\bar{#1}}}
\newcommand{\str}[1]{\ensuremath{\mathcal{#1}}}
\newcommand{\fin}{\ensuremath{\mathrm{fin}}}
\renewcommand{\bar}[1]{\overline{#1}}
\renewcommand\paragraph[1]{\medskip {\noindent \sffamily\normalsize\bfseries #1}}
\newcommand\iparagraph[1]{\medskip {\noindent \sffamily\normalsize #1}}
\newcommand{\ourtitle}{On Symmetric Circuits and Fixed-Point Logics
\thanks{A short version of this paper is to appear in the proceedings
of STACS 2014.}}
\title{\ourtitle}
\author{Matthew Anderson}
\author{Anuj Dawar}
\affil{University of Cambridge Computer Laboratory\\
15 JJ Thomson Ave, Cambridge, CB3 0FD, UK\\
\texttt{firstname.lastname@cl.cam.ac.uk}}
\begin{document}

\maketitle

\begin{abstract}
We study properties of relational structures such as graphs that are
decided by families of Boolean circuits.  Circuits that decide such
properties are necessarily invariant to permutations of the elements
of the input structures.  We focus on families of circuits that are
symmetric, i.e., circuits whose invariance is witnessed by
automorphisms of the circuit induced by the permutation of the input
structure.  We show that the expressive power of such families is
closely tied to definability in logic.  In particular,
we show that the queries defined on structures by uniform
families of symmetric Boolean circuits with majority gates are exactly
those definable in fixed-point logic with counting.  This shows that
inexpressibility results in the latter logic lead to lower bounds
against polynomial-size families of symmetric circuits.
\end{abstract}

\section{Introduction}
\label{sec:intro}

A property of graphs on $n$ vertices can be seen as a Boolean function
which takes as inputs the $\binom{n}{2}$ potential edges (each of
which can be $0$ or $1$) and outputs either $0$ or $1$.  For the
function to really determine a property of the graph, as opposed to a function of a 
particular presentation of it, the function must be invariant under re-ordering
the vertices of the graph.  That is, permuting the
$\binom{n}{2}$ inputs according to some permutation of $[n]$ leaves 
the value of the function unchanged.  We call such Boolean
functions \emph{invariant}.  Note that this does not require the
Boolean function to be invariant under \emph{all} permutations of its
inputs, which would mean that it was entirely determined by the number
of inputs that are set to $1$.

The interest in invariant functions arises in the context of
characterising the properties of finite 
relational structures (such as finite graphs) that are decidable in
polynomial time.  It is a long-standing open problem in descriptive
complexity to give a characterisation of the polynomial-time
properties of finite relational structures (or, indeed, just graphs)
as the classes of structures definable in some suitable logic (see,
for instance,~\cite[Chapter 11]{EF06}).  It is known that fixed-point
logic $\FP$ and its extension with counting $\FPC$ are strictly less
expressive than deterministic polynomial time P \cite{CFI92}.

It is easy to see that every polynomial-time property of graphs is
decided by a $\PT$-uniform family of polynomial-size circuits that are
\emph{invariant} in the sense above.  On the other hand, when a
property of graphs is expressed in a formal logic, it gives rise to a
family of circuits that is \emph{explicitly invariant} or
\emph{symmetric}.  By this we mean that its invariance is witnessed by
the automorphisms of the circuit itself.  For instance, any sentence
of $\FP$ translates into a polynomial-size family of symmetric Boolean
circuits, while any sentence of $\FPC$ translates into a
polynomial-size family of symmetric Boolean circuits with majority
gates.

Concretely, a circuit $C_n$ consists of a directed acyclic graph whose
internal gates are marked by operations from a basis (e.g., the
standard Boolean basis $\Bs \de \{\text{AND, OR, NOT}\}$ or the majority basis
$\Bm = \Bs \cup \set{\text{MAJ}}$) and input gates which are marked
with pairs of vertices representing potential edges of an $n$-vertex
input graph.  Such a circuit is \emph{symmetric}
if $C_n$ has an automorphism $\pi$ induced by each
permutation $\sigma$ of the $n$ vertices, i.e., $\pi$ moves the input
gates of $C_n$ according to $\sigma$ and preserves operations and
wiring of the internal gates of $C_n$.  Clearly, any
symmetric circuit is invariant.

Are symmetric circuits a weaker model of computation than
invariant circuits?  We aim at characterising the properties that can
be decided by uniform families of symmetric circuits.  Our main result
shows that, indeed, any property that is decided by a uniform
polynomial-size family of symmetric majority circuits can
be expressed in $\FPC$.

\begin{theorem}
  \label{thm:intro-main}
  A graph property is decided by a $\PT$-uniform polynomial-size
  family of symmetric majority circuits if, and only if, it is defined
  by a fixed-point with counting sentence.
\end{theorem}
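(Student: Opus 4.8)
The plan is to prove the two directions separately. The easy direction is right-to-left: every \FPC sentence translates into a \PT-uniform polynomial-size family of symmetric majority circuits. This is essentially folklore; one proceeds by induction on the structure of the formula, noting that counting terms become towers of majority gates, first-order quantifiers become big AND/OR gates indexed by tuples of vertices, and the fixed-point operator is unwound a polynomial number of times (the inflationary fixed point stabilises in polynomially many stages on an $n$-element structure). At each stage the natural action of $\Sym{n}$ on tuples of vertices induces an automorphism of the circuit, so the resulting family is symmetric; uniformity is clear from the inductive construction. I would state this as a lemma and give the construction in a few lines, since the real content is in the converse.

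The hard direction is left-to-right: given a \PT-uniform polynomial-size family $(C_n)$ of symmetric majority circuits deciding a property, construct an equivalent \FPC sentence. The strategy is to simulate the evaluation of $C_n$ on an input graph $G$ inside \FPC. The key difficulty is that the gates of $C_n$ are abstract objects with no canonical names, so we cannot directly quantify over them in a logic that only sees the vertices of $G$. The resolution is the \emph{support} machinery: since $C_n$ is symmetric, $\Sym{n}$ acts on its gates, and for each gate $g$ the stabiliser $\stabn{g}$ is a subgroup of $\Sym{n}$; one shows (this should be an earlier structural result in the paper, via the orbit–stabiliser theorem and the polynomial size bound) that every gate has a \emph{support} of bounded size --- a small set $S \subseteq [n]$ such that every permutation fixing $S$ pointwise fixes $g$. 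Then a gate is determined by its orbit together with a choice of how its support sits inside $[n]$, i.e.\ by a tuple of bounded length over $[n]$ plus finitely much orbit data coming from the uniformity machine. This lets us represent gates by bounded tuples of vertices, which \emph{are} nameable in \FPC.

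With this representation in hand, the remaining steps are: (i) use \PT-uniformity to extract, in \FPC (or rather in \FPC-with-built-in-order applied to the orbit data, which the bounded-support encoding makes order-invariant), the relations describing the gate types, the wiring, and the assignment of input gates to vertex pairs; (ii) write an inflationary fixed-point formula that computes the Boolean value of every gate from the values of its children --- AND/OR gates are immediate, and a MAJ gate over a (polynomially large, but \FPC-countable) multiset of children is handled by a counting term comparing the number of true children to the number of false ones; (iii) evaluate the output gate. The fixed point converges because the circuit is a DAG, so its gates can be stratified by depth; one iterates the evaluation relation until it stabilises, which happens within a polynomial number of stages, and \FPC can count stages. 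Finally one checks that the whole construction is invariant under $\Sym{n}$ --- this is where symmetry of the circuit is used a second time, ensuring that the value assigned by the formula to a tuple-encoded gate does not depend on the particular tuple chosen to encode it.

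The main obstacle is step (i) together with the support-size bound underlying the gate representation: one must show that polynomial size forces supports of bounded (constant) size, and that the uniformity machine's description of $C_n$ can be massaged into \FPC-definable relations on these bounded tuples despite the uniformity algorithm a priori needing an ordering of the vertices. Handling the orbits of gates whose support does not determine them uniquely --- so that a single tuple may encode several gates, distinguished only by ``finite'' local data --- is the delicate point, and it is where the bulk of the technical work (the support theory and the careful interplay between the symmetric group action and the uniformity algorithm) will lie.
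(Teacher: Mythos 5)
Your proposal follows essentially the same route as the paper: the easy direction by the standard formula-to-circuit translation, and the hard direction via the Support Theorem (polynomial size plus symmetry forces constant-size supports), an Immerman--Vardi interpretation of the uniform circuit over the ordered number sort, and an inflationary fixed-point relation recording, for each gate, which injective assignments to its canonical support make it evaluate to true, with counting terms handling MAJ gates. The only step you underestimate is the MAJ case: because different children have supports of different sizes, ``counting the true children'' becomes a sum of fractions $|A_h \cap \mathrm{EV}_h|/|A_h|$ over the children $h$, which the paper evaluates in $\FPC$ by grouping children according to their support-size parameters and dividing --- a refinement of, not a departure from, your plan.
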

A consequence of this result is that inexpressibility results that
have been proved for $\FPC$ can be translated into lower bound results
for symmetric circuits.  For instance, it follows (using~\cite{Daw98})
that there is no polynomial-size family of symmetric majority
circuits deciding 3-colourability or Hamiltonicity of graphs.

We also achieve a characterisation similar to Theorem~\ref{thm:intro-main}
of symmetric Boolean circuits.
\begin{theorem}
  \label{thm:intro-bool}
  A graph property is decided by a $\PT$-uniform polynomial-size
  family of symmetric Boolean circuits if, and only if, it is defined
  by a fixed-point sentence interpreted in $\mathcal{G} \oplus
  \tup{[n], \le}$, i.e., the structure that is the disjoint union of
  an $n$-vertex graph $\mathcal{G}$ with a linear order of length
  $n$.
\end{theorem}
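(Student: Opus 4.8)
The plan is to reduce this statement to the already-established Theorem~\ref{thm:intro-main} by a two-way simulation between symmetric Boolean circuits and symmetric majority circuits that have access to a linear order. For the easy direction — a fixed-point sentence interpreted in $\mathcal{G}\oplus\tup{[n],\le}$ yields a symmetric Boolean circuit — I would first observe that such a sentence is an ordinary $\FP$-sentence over the expanded vocabulary, so by the standard translation of $\FP$ into bounded-depth, polynomial-size circuits it gives a $\PT$-uniform polynomial-size family of circuits over $\Bs$ whose input gates correspond to the atoms of $\mathcal{G}\oplus\tup{[n],\le}$. The subtlety is that the $\le$ part of the input is \emph{fixed} (it is the standard order on $[n]$), so these input gates carry constant values and can be hardwired; what remains is a circuit on the $\binom{n}{2}$ graph-edge inputs. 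The key point is that the resulting circuit is symmetric: a permutation $\sigma\in\Sym{n}$ of the vertices of $\mathcal{G}$ acts on the $\le$-inputs by moving the pair $(i,j)$ to $(\sigma i,\sigma j)$, and although this does not fix the hardwired constants, the $\FP$-formula is itself invariant, so one obtains an automorphism of the unhardwired circuit; one then checks that hardwiring commutes with this automorphism up to the induced relabelling, exactly as in the support-based arguments used for Theorem~\ref{thm:intro-main}. Concretely I would build, for each $\sigma$, the automorphism $\pi_\sigma$ from the two syntactic components (the graph part and the order part) of the $\FP$-translation and verify the wiring is preserved.

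The hard direction — a $\PT$-uniform polynomial-size family of symmetric Boolean circuits is captured by an $\FP$-sentence over $\mathcal{G}\oplus\tup{[n],\le}$ — is where the real work lies, and here I would reuse the machinery behind Theorem~\ref{thm:intro-main} essentially verbatim, observing only where majority and counting enter and showing they are not needed once a linear order is present. Recall that the proof of Theorem~\ref{thm:intro-main} must be proceeding by (i) a structural analysis of symmetric circuits via orbits and supports of gates under the $\Sym{n}$-action, showing each gate has a support of bounded size, so that gates are describable by tuples of vertices; (ii) an $\FPC$-definable evaluation of the circuit on a given input graph, where counting is used both to index the polynomially-many orbits of gates and to evaluate majority gates. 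For a Boolean circuit over $\Bs$ there are no majority gates, so the only use of counting is the bookkeeping in step (ii): enumerating orbit representatives and summing over the (bounded-support) children of a gate. But over the structure $\mathcal{G}\oplus\tup{[n],\le}$ we have a linear order available, and it is classical that $\FP$ with a linear order can do exactly this kind of polynomially-bounded counting and enumeration — $\FP+\le$ captures P, so in particular it can simulate the $\FPC$-evaluation procedure restricted to Boolean bases. I would therefore lift the entire support/orbit analysis to the ordered setting (the support bound argument is purely group-theoretic and does not care about the basis), and then replace each counting construct in the circuit-evaluation formula by its $\FP+\le$ equivalent, yielding an $\FP$-sentence interpreted in $\mathcal{G}\oplus\tup{[n],\le}$.

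The main obstacle I anticipate is making the support/orbit machinery genuinely basis-independent and checking that nothing in the passage from $\FPC$ to $\FP+\le$ secretly relies on the \emph{order-invariance} of the original circuit family being incompatible with the order we have added. More precisely: the automorphisms witnessing symmetry are automorphisms of the \emph{Boolean} circuit and act only on the graph-edge inputs, not on the order; so when we evaluate the circuit inside $\mathcal{G}\oplus\tup{[n],\le}$ we are using the order purely as scaffolding for computation, and we must ensure that the final $\FP$-sentence, though it mentions $\le$, defines a property of $\mathcal{G}$ alone — this is automatic because the order component of $\mathcal{G}\oplus\tup{[n],\le}$ is canonical, but it must be stated. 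A secondary technical point is the uniformity bookkeeping: the $\PT$-uniformity of the circuit family must be converted into an $\FP+\le$-definable (equivalently, P-computable, via $\FP+\le = \PT$) description of the circuit, which plugs into the evaluation formula; this is routine given Theorem~\ref{thm:intro-main}'s analogous step but needs the ordered analogue of whatever "uniformity is definable" lemma that proof uses. I would close by remarking that the two directions together give the biconditional, and that the Boolean case is in a sense \emph{easier} than the majority case once one notices that "$\FPC$ minus counting, plus a linear order" is just "$\FP$ plus a linear order."
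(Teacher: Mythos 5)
Your proposal follows essentially the same route as the paper: both directions are obtained from the machinery used for Theorem~\ref{thm:intro-main} (rigidification, the Support Theorem, the Immerman--Vardi interpretation of the uniform circuit in $\tup{[n],\le}$, and the inductive definition of the relations $\oEV{g}$), together with the observation that counting enters only at MAJ gates, so that for the standard basis the evaluation formula already lives in $\FPN$, i.e.\ $\FP$ interpreted in $\mathcal{G}\oplus\tup{[n],\le}$. One small correction to your easy direction: because the linear order sits on a sort disjoint from the vertex set, a vertex permutation fixes the $\le$-inputs outright (it does not move $(i,j)$ to $(\sigma i,\sigma j)$), which is precisely why hardwiring them causes no difficulty --- a point your final paragraph in fact already recognises.
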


Note that symmetric majority circuits can be transformed into
symmetric Boolean circuits.  But, since $\FP$, even interpreted over
$\mathcal{G} \oplus \tup{[n], \le}$, is strictly less expressive
than $\FPC$, our results imply that any such translation must involve
a super-polynomial blow-up in size.  Similarly, our results imply with
\cite{CFI92} that \emph{invariant} Boolean circuits cannot be
transformed into symmetric circuits (even with majority gates) without
a super-polynomial blow-up in size.  On the other hand, it is clear
that symmetric majority circuits can still be translated into
\emph{invariant} Boolean circuits with only a polynomial blow-up.

\paragraph{Support.}
The main technical tool in establishing the translation from uniform
families of symmetric circuits to sentences in fixed-point logics is a
\emph{support theorem} (stated informally below) that establishes
properties of the stabiliser groups of gates in symmetric circuits.

We say that a set $X \subseteq [n]$ \emph{supports} a gate $g$ in a
symmetric circuit $C$ on an $n$-element input structure if every automorphism of
$C$ that is generated by a permutation of $[n]$ fixing $X$ also fixes
$g$.  It is not difficult to see that for a family of symmetric
circuits obtained from a given first-order formula
$\phi$ there is a constant $k$ such that all gates in all circuits of the family have a support of size at most $k$.
To be precise, the gates in such a circuit correspond to subformulas
$\psi$ of $\phi$ along with an assignment of values from $[n]$ to the
free variables of $\psi$.  The set of elements of $[n]$ appearing in
such an assignment forms a support of the gate and its size is
bounded by the number of free variables $\psi$.  Using the fact that
any formula of $\FP$ is equivalent, on structures of size $n$, to a
first-order formula with a constant bound $k$ on the number of
variables and similarly any formula of $\FPC$ is equivalent to a
first-order formula \emph{with majority quantifiers}
(see~\cite{Ott97}) and a constant bound on the number of variables, we
see that the resulting circuits have supports of constant bounded
size.  Our main technical result is that the existence of supports of
bounded size holds, in fact, for all polynomial-size families of
symmetric circuits.  In its general form, we show the following
theorem in Section~\ref{sec:support} via an involved combinatorial
argument.

\begin{theorem}[Informal Support Thm]
  \label{thm:intro-support}
  Let $C$ be a symmetric circuit with $s$ gates over a graph of size
  $n$. If $n$ is sufficiently large and $s$ is sub-exponential in $n$, then
  every gate in $C$ has a support of size $O\hspace{-.5ex}\left(\frac{\log s}{\log n}\right)\hspace{-.5ex}.$
\end{theorem}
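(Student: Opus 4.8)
The plan is to establish the support theorem by a careful analysis of the stabiliser groups $\mathrm{Stab}(g)$ of gates $g$ in a symmetric circuit $C$, exploiting the fact that these are subgroups of $\Sym{n}$ of small index. The key observation is that if $g$ is a gate and $\mathrm{Stab}(g) \le \Sym{n}$ has index $[\Sym{n} : \mathrm{Stab}(g)]$ bounded by the number of gates $s$ in the orbit of $g$ (and hence by $s$), then since $s$ is sub-exponential in $n$, this index is much smaller than $n!$. A classical theorem on subgroups of the symmetric group of small index (a result going back to Bochert, or in the form most useful here, a consequence of the classification or of Bochert-style bounds) tells us that any such subgroup contains the pointwise stabiliser of a small set: concretely, there is a set $X \subseteq [n]$ of size $O(\log s / \log n)$ such that $\mathrm{Stab}_{[n] \setminus X}(\Sym{n}) \le \mathrm{Stab}(g)$, where $\mathrm{Stab}_{[n]\setminus X}$ denotes the subgroup fixing every element outside $X$. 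Such a set $X$ is by definition a support of $g$. So the first step is to make this reduction precise: bound $[\Sym{n} : \mathrm{Stab}(g)]$ by the orbit size of $g$ under the action of $\Sym{n}$ on gates, which is at most $s$.

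\smallskip
\noindent \textbf{Second}, I would invoke (or, since the paper promises ``an involved combinatorial argument,'' prove from scratch) the group-theoretic input. The cleanest self-contained route avoids the full classification of finite simple groups: one shows directly that if $H \le \Sym{n}$ has index at most $s$ with $s$ sub-exponential, then the natural action of $H$ on $[n]$ has an orbit structure that is ``almost trivial'' — all but $O(\log s / \log n)$ points lie in orbits on which $H$ acts as the full symmetric or alternating group, and moreover these large orbits can be merged and controlled. The combinatorial heart is an argument by induction on $n$: if every orbit of $H$ on $[n]$ has size at most $n/2$, one derives that $[\Sym{n}:H]$ is at least roughly $\binom{n}{n/2}$, which is exponential, a contradiction; so $H$ has an orbit of size $> n/2$, and restricting attention to the setwise stabiliser of that orbit and iterating, peeling off a bounded number of ``exceptional'' points at each stage, yields the bound. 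One must take care that the primitive constituents on the large orbits really are $\Sym{}$ or $\Alt{}$ (here one uses a bound on the order of a primitive group that is not $\Sym{}$ or $\Alt{}$, e.g.\ $|H| \le 4^n$ or a polynomial-in-$n$ bound on the degree via Bochert), and that ``fixing $X$ pointwise'' in $\Sym{n}$ translates through the induced action on gates into ``fixing $g$.''

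\smallskip
\noindent \textbf{Third}, I would assemble the pieces: given an arbitrary gate $g$ of $C$, its stabiliser $\mathrm{Stab}(g)$ in $\Sym{n}$ has index at most $s$ (the total number of gates), $s$ is sub-exponential by hypothesis, $n$ is large, so the group-theoretic lemma produces $X$ with $|X| = O(\log s / \log n)$ and $\mathrm{Stab}_{[n] \setminus X} \le \mathrm{Stab}(g)$; unwinding the definition of ``support'' (every permutation fixing $X$ pointwise induces an automorphism of $C$ fixing $g$), this $X$ is the desired support. Finally I would double-check the quantitative threshold: for the bound $O(\log s / \log n)$ to be meaningful and for the induction to close, one needs $\log s = o(n \log n)$, i.e.\ $s$ genuinely sub-exponential, and $n$ above an absolute constant coming from the primitive-group order bound.

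\smallskip
\noindent The step I expect to be the main obstacle is the second one — the purely group-theoretic/combinatorial lemma bounding supports in terms of index. Getting the \emph{right} constant in $O(\log s/\log n)$, and in particular handling the interaction between several large orbits (the stabiliser need not act on $[n]$ transitively, and the exceptional points from different orbits must be accumulated without blowing up the bound), is delicate; one must also be careful that the inductive ``peeling'' preserves the symmetric-circuit structure, i.e.\ that passing to a subgroup and a sub-action corresponds to a legitimate operation on $C$. A secondary subtlety is that a support as defined is closed under intersection only under an additional hypothesis (one typically needs the circuit to be ``transparent'' or to first prove a lemma that minimal supports are unique), which may be needed to make the bound apply uniformly to all gates rather than just to a single gate.
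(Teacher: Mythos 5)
There is a genuine gap at the heart of your Step 2. The group-theoretic lemma you want to invoke --- that any subgroup $H \le \Sym{n}$ of sub-exponential index contains the pointwise stabiliser of a set of size $O(\log s/\log n)$ --- is false. The alternating group $\Alt{n}$ has index $2$ in $\Sym{n}$, yet the smallest $X$ with $\mathrm{Stab}_{[n]\setminus X} \le \Alt{n}$ has size $n-1$, since the pointwise stabiliser of any smaller $X$ contains a transposition. You half-notice the danger when you say the primitive constituents on large orbits are ``$\Sym{}$ or $\Alt{}$,'' but in the $\Alt{}$ case there is no small support at all, so no amount of care with Bochert-style bounds can rescue a purely index-based argument. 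The stabiliser of a gate could a priori be such a group, and nothing in your outline rules this out.

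The paper's proof is structured precisely to get around this. It replaces supports by the unique coarsest \emph{supporting partition} $\SP(g)$ of the stabiliser (this also resolves your closing worry about intersections of supports, via the operation $\E$ and Proposition~\ref{prop:intersect_gen}). The combinatorial lemmas (Lemmas~\ref{lem:small-large} and~\ref{lem:largepart}) bound the index of the \emph{setwise} stabiliser of a partition and only yield a large part --- hence a small support --- under the additional hypothesis that $\SP(g)$ has at most $n/2$ parts. For $\Alt{n}$ that hypothesis fails ($\SP(\Alt{n})$ is the discrete partition), and ruling it out is exactly where the circuit structure, not group theory, does the work: one takes a topologically first internal gate $g$ with $\npart{\SP(g)} > n/2$, uses the inductively established small supports of its children to build a large family of ``useful and independent'' gate--automorphism pairs, and shows (Claim~\ref{claim:p}) that the orbit of the set of children of $g$ --- hence of $g$ itself --- has size at least $2^{n/(7k')^2}$, contradicting the sub-exponential orbit bound. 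This induction over the circuit DAG, using rigidity to tie $g$ to its children, is the missing ingredient your outline would need; without it the theorem cannot be deduced from the index of $\mathrm{Stab}(g)$ alone.
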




In the typical instantiation of the Support Theorem the circuit $C$
contains a polynomial number of gates $s = \poly(n)$ and hence the
theorem implies that every gate has a support that is bounded in size
by a constant.  The proof of the Support Theorem mainly relies on the
structural properties of symmetric circuits and is largely independent
of the semantics of such circuits; this means it may be of independent
interest for other circuit bases and in other settings.

\paragraph{Symmetric Circuits and $\FP$.}
In Section~\ref{sec:sym-def} we show that each polynomial-size 
family $\C$ of symmetric circuits 
can be translated into a formula of
fixed-point logic.  If the family $\C$ is $\PT$-uniform, by the
Immerman-Vardi Theorem \cite{V82,I86} there is an $\FP$-definable
interpretation of the circuit $C_n$ in the ordered structure
$\tup{[n],\le}$.  
We show that the support of a gate is computable in polynomial time, and hence we can also interpret the support of each gate
in $\tup{[n],\le}$.  The circuit $C_n$ can be
evaluated on an input graph $\str{G}$ by fixing a bijection between
$[n]$ and the universe $U$ of $\str{G}$.  We associate with each gate
of $g$ of $C_n$ the set of those bijections that cause $g$ to evaluate to $1$ on $\str{G}$. 
This set of bijections admits a compact
(i.e., polynomial-size) representation as the set of injective maps
from the support of $g$ to $U$.  We show that these compact
representations can be inductively defined by formulas of $\FP$, or
$\FPC$ if the circuit also admits majority gates.

Thus, we obtain that $\PT$-uniform family of symmetric
Boolean circuits can be translated into formulas of $\FP$ interpreted
in $\str{G}$ combined with a disjoint linear order
$\tup{[|\str{G}|],\le}$, while families containing majority gates
can be simulated by sentences of $\FPC$.  The reverse
containment follows using classical techniques. As a consequence we
obtain the equivalences of Theorems~\ref{thm:intro-main}~\&~\ref{thm:intro-bool}, and a number
of more general results as this sequence of arguments naturally
extends to: (i) inputs given as an arbitrary relational structure,
(ii) outputs defining arbitrary relational queries, 
and (iii) non-uniform circuits,
provided the logic is allowed additional advice on the disjoint linear
order.

\paragraph{Related Work.}
We note that the term ``symmetric circuit'' is used by Denenberg et
al. in \cite{DGS86} to mean what we call invariant circuits.  They
give a characterisation of first-order definability in terms of a
restricted invariance condition, namely circuits that are invariant
and whose relativisation to subsets of the universe remains invariant.
Our definition of symmetric circuits follows that in \cite{O97} where
Otto describes it as the ``natural and straightforward combinatorial
condition to guarantee generic or isomorphism-invariant performance.''
He then combines it with a size restriction on the orbits of gates
along with a strong uniformity condition, which he calls ``coherence'', to
give an exact characterisation of definability in infinitary logic.  A
key element in his construction is the proof that if the orbits of
gates in such a circuit are polynomially bounded in size then they
have supports of bounded size.  We remove the assumption of coherence
from this argument and show that constant size supports exist in any
polynomial-size symmetric circuit.  This requires a generalisation of
what Otto calls a ``base'' to supporting partitions.  See
Section~\ref{sec:coherence} for more discussion of connections with
prior work.

\section{Preliminaries}
\label{sec:prelim}

Let $[n]$ denote the set of positive integers $\set{1,\ldots,n}$. 
Let $\Sym{S}$ denote the group of all permutations of the set $S$.
When $S=[n]$, we write $\Sym{n}$ for $\Sym{[n]}$. 

\subsection{Vocabularies, Structures, and Logics}

A \emph{relational vocabulary} (always denoted by $\tau$) is a finite
sequence of relation symbols $(R_1^{r_1},\ldots,R_k^{r_k})$ where for
each $i \in [k]$ the relation symbol $R_i$ has an associated arity
$r_i \in \NN$.  A \emph{$\tau$-structure} $\str{A}$ is a tuple
$\tup{A,R_1^{\str{A}},\ldots,R_k^{\str{A}}}$ consisting of (i) a
non-empty set $A$ called the \emph{universe} of $\str{A}$, and (ii)
relations $R_i^{\str{A}} \sse A^{r_i}$ for $i \in [k]$.  Members of
the universe $A$ are called \emph{elements} of $\str{A}$.  A
\emph{multi-sorted} structure is one whose universe is given as a
disjoint union of several distinct \emph{sorts}.  Define the
\emph{size} of a structure $|\str{A}|$ to be the cardinality of its
universe.  All structures considered in this paper are \emph{finite},
i.e., their universes have finite cardinality.  Let $\fin[\tau]$
denote the set of all finite $\tau$-structures.

\paragraph{First-Order and Fixed-Point Logics.}
Let $\FO(\tau)$ denote \emph{first-order logic} with respect to the
vocabulary $\tau$.  The logic $\FO(\tau)$ is the set of formulas whose
atoms are formed using the relation symbols in $\tau$, an equality
symbol $=$, an infinite sequence of variables $(x,y,z\ldots)$, and
that are closed under the Boolean connectives ($\wedge$ and $\vee$),
negation ($\neg$), and universal and existential quantification
($\forall$ and $\exists$).  Let \emph{fixed-point logic} $\FP(\tau)$
denote the extension of $\FO(\tau)$ to include an inflationary fixed-point
operator $\ifp$.  Assume standard syntax and semantics for \FO and \FP
(see the textbook \cite{EF06} for more background).  For a formula
$\phi$ write $\phi(x)$ to indicate that $x$ is the tuple of the free
variables of $\phi$.  For a logic $\Lo$, a formula $\phi(x) \in
\Lo(\tau)$ with $k$ free variables, $\str{A} \in \fin[\tau]$, and tuple
$a \in A^k$ write $\str{A} \models_{\Lo} \phi[a]$ to express that the
tuple $a$ makes the formula $\phi$ true in the structure $\str{A}$
with respect to the logic $\Lo$.  We usually drop the subscript $\Lo$
and write  $\str{A} \models \phi[a]$ when no confusion would arise.

\paragraph{Logics with Disjoint Advice.}
Let $\tarb$ be a relational vocabulary without a binary relation
symbol $\le$.  Let $\Advice : \NN \rightarrow \fin[\tarb \uplus
  \set{\le^2}]$ be an \emph{advice function}, where for $n \in \NN$,
$\Advice(n)$ has universe $\set{0,1,\ldots,n}$ naturally ordered by
$\le$.  For a logic $\Lo$, typically $\FO$ or $\FP$, let $(\LG)(\tau)$
denote the set of formulas of $\Lo(\tau')$ where $\tau' \de \tau \uplus
\tarb \uplus \set{\le^2}$ and $\tau$ is a vocabulary disjoint from
$\tarb \uplus \set{\le^2}$.  For a structure $\str{A} \in \fin[\tau]$
define the semantics of $\phi \in (\LG)(\tau)$ to be $\str{A}
\models_{(\LG)} \phi \text{ iff } \str{A}^\Advice \models_{\Lo} \phi,$
where $\str{A}^\Advice \de \str{A} \oplus \Advice(|\str{A}|)$ is the
multi-sorted $\tau'$-structure formed by taking the disjoint union of
$\str{A}$ with a structure coding a linear order of corresponding
cardinality endowed with interpretations of the relations in $\tarb$.
The universe of the multi-sorted structure $\str{A}^\Advice$ is written
as $A \uplus \set{0,1,\ldots,|A|}$; refer to $A$ as the \emph{point
  sort} of $\str{A}^\Advice$ 
  and to
$\set{0,1,\ldots,|A|}$ as the \emph{number sort} of $\str{A}^\Advice$.
We are primarily interested in the
special case when $\tarb$ is empty and hence $\Advice(|\str{A}|) =
\tup{\set{0,1,\ldots,|A|},\le}$ is simply a linear order.  Denote
formulas of this logic by $(\LN)(\tau)$ and extended structures by
$\str{A}^{\le}$ to emphasise the disjoint linear order.

Let $\FPC(\tau)$ denote the extension of $(\FPN)(\tau)$ with a
counting operator $\CT_x$ where $x$ is a point or number variable.  For a
structure $\str{A} \in \fin[\tau]$ and a formula $\phi(x) \in
\FPC(\tau)$, $\CT_x \phi(x)$ is a term denoting the element in the number
sort corresponding to $|\condset{a \in \str{A}}{\str{A} \models
  \phi[a]}|$.  See \cite[Section 8.4.2]{EF06} for more details.
Finally, we consider the extension of fixed-point logic with both
advice functions and counting quantifiers $(\FPCG)(\tau)$.

Using $k$-tuples of number variables, it is possible in $\FPN$ and
$\FPC$ to represent numbers up to $n^k$ and perform arithmetic
operations on them.  We omit details but use such constructions
freely. 

\subsection{Symmetric and Uniform Circuits}
\label{sec:boolean}

A \emph{Boolean basis} (always denoted by $\B$) is a finite set of
Boolean functions from $\set{0,1}^*$ to $\set{0,1}$.  We consider only
bases containing symmetric functions, i.e., for all $f \in \B$, $f(x)
= f(y)$ for all $n \in \NN$ and $x,y \in \set{0,1}^n$ with the same
number of ones.  The \emph{standard Boolean basis} $\Bs$ consists of
unbounded fan-in AND, OR, and unary NOT operators.  The \emph{majority
  basis} $\Bm$ extends the standard basis with an operator MAJ which
is one iff the number of ones in the input is at least the number of
zeroes.

\begin{definition}[Circuits on Structures]
A \emph{Boolean $(\B,\tau)$-circuit $C$ with universe $U$ computing a
  $q$-ary query $Q$} is a structure
$\tup{G,W,\Omega,\Sigma,\Lambda}$.
\begin{itemize}
\item $G$ is a set called the \emph{gates} of $C$.  The \emph{size} of $C$ is $|C| \de |G|$.
\item $W \subseteq G \times G$ is a binary relation called the
  \emph{wires} of the circuit.  We require that $(G,W)$ forms a \emph{directed
    acyclic graph}.  Call the gates with no incoming wires
  \emph{input gates}, and all other gates \emph{internal gates}.
  Gates $h$ with $(h,g) \in W$ are called the \emph{children} of
  $g$.
\item $\Omega$ is an injective function from $U^q$ to $G$.  The gates
  in the image of $\Omega$ are called the \emph{output gates}.  When $q
  = 0$, $\Omega$ is a constant function mapping to a single output
  gate.
\item $\Sigma$ is a function from $G$ to $\B \uplus \tau \uplus
  \set{0,1}$ which maps input gates into $\tau \uplus \set{0,1}$ with
  $|\Sigma\inv(0)|,|\Sigma\inv(1)| \le 1$ and internal gates into
  $\B$.  Call the input gates marked with a relation from $\tau$
  \emph{relational gates} and the input gates marked with 0 or 1
  \emph{constant gates}.
\item $\Lambda$ is a sequence of injective functions $(\Lambda_R)_{R \in
    \tau}$ where for each $R \in \tau$, $\Lambda_R$ maps each relational
  gate $g$ with $R = \Sigma(g)$ to $\Lambda_R(g) \in U^r$
  where $r$ is the arity of $R$.  Where no ambiguity arises, we write
  $\Lambda(g)$ for $\Lambda_R(g)$.
\end{itemize}
\end{definition}


Let $C$ be a Boolean $(\B,\tau)$-circuit with universe $U$, $\str{A}
\in \fin[\tau]$ with $|\str{A}| = |U|$, and $\gamma : A \rightarrow U$
be a bijection.  Let $\gamma \str{A}$ denote the $\tau$-structure over
the universe $U$ obtained by relabelling the universe of $\str{A}$
according to $\gamma$.  Recursively evaluate $C$ on $\gamma\str{A}$ by
determining a value $C[\gamma\str{A}](g)$ for each gate $g$: (i) a
constant gate evaluates to the bit given by $\Sigma(g)$, (ii) a
relational gate evaluates to 1 iff $\gamma\str{A} \models
\Sigma(g)(\Lambda(g))$, and (iii) an internal gate evaluates to the
result of applying the Boolean operation $\Sigma(g)$ to the values for
$g$'s children.
$C$ defines the $q$-ary query $Q \sse \str{A}^q$ where $a \in Q$ iff $C[\gamma
  \str{A}](\Omega(\gamma a)) = 1$.

\begin{definition}[Invariant Circuit]
 Let $C$ be a $(\B,\tau)$-circuit with universe $U$ computing a
 $q$-ary query.  The circuit $C$ is \emph{invariant} if for every
 $\str{A} \in \fin[\tau]$ with $|\str{A}| = |U|$, $a \in \str{A}^q$,
 and bijections $\gamma_1,\gamma_2$ from $A$ to $U$,
 $C[\gamma_1\str{A}](\Omega(\gamma_1a)) =
 C[\gamma_2\str{A}](\Omega(\gamma_2a)).$
\end{definition}

Invariance indicates that $C$ computes a property of $\tau$-structures
which is invariant to presentations of the structure.  Moreover, for
an invariant circuit $C$ only the size of $U$ matters and we often
write $C = C_n$ to emphasise that $C_n$ has a universe of size $n$.
A \emph{family} $\C = (C_n)_{n \in \NN}$ of invariant
$(\B,\tau)$-circuits naturally computes a $q$-ary query on
$\tau$-structures.  When $q = 0$ the family computes a Boolean
property of structures.  We now discuss a structural property of
circuits called \emph{symmetry} that implies invariance.

\paragraph{Symmetric Circuits.} 
Permuting a circuit's universe may induce automorphisms of the
circuit.

\begin{definition}[Induced Automorphism]
  Let $C = \tup{G,W,\Omega,\Sigma,\Lambda}$ be a $(\B,\tau)$-circuit
  with universe $U$ computing a
 $q$-ary query.  Let $\sigma \in \Sym{U}$.  If there is a
  bijection $\pi$ from $G$ to $G$ such that
  \begin{itemize}
    \item for all gates $g,h \in G$, $W(g,h)$ iff $W(\pi(g),\pi(h))$, 
    \item for all output tuples $x \in U^q$, $\pi\Omega(x) = \Omega(\sigma(x))$, 
    \item for all gates $g \in G$, $\Sigma(g) = \Sigma(\pi(g))$, and
    \item for each relational gate $g \in G$, $\sigma\Lambda(g) =
      \Lambda(\pi(g))$,
  \end{itemize}
  we say $\sigma$ \emph{induces the automorphism $\pi$} of $C$.
\end{definition}

The principle goal of this paper is to understand the computational
power of circuit classes with the following type of structural
symmetry.

\begin{definition}[Symmetric]\label{def:symmetric}
  A circuit $C$ with universe $U$ is called \emph{symmetric}
  if for every permutation $\sigma \in \Sym{U}$, $\sigma$ induces an
  automorphism of $C$.
\end{definition}
It is not difficult to see that, for a symmetric circuit $C$, there is
a homomorphism $h: \Sym{U} \ra \Aut{C}$ (where $\Aut{C}$ denotes the
automorphism group of $C$) such that $h(\sigma)$ is an automorphism
induced by $\sigma$.  
As long as some element of $U$ appears in the label of some
input gate of $C$, $h$ is an injective homomorphism.  Henceforth we
assume that this is always the case as otherwise $C$ has no relational inputs and computes a constant function.  Circuits where the homomorphism is not also surjective introduce artifacts into our arguments.  To avoid this we require the circuits we consider to be \emph{rigid}.   

\begin{definition}[Rigid]
  \label{def:rigid}
  Let $C = \tup{G,W,\Omega,\Sigma,\Lambda}$ be a $(\B,\tau)$-circuit
  with universe $U$.  Call $C$ \emph{rigid} if there do not exist
  distinct gates $g,g' \in G$ with $\Sigma(g) = \Sigma(g')$,
  $\Lambda(g) = \Lambda(g')$, $\Omega\inv(g) = \Omega\inv(g')$, and
  for every $g'' \in G$, $W(g'',g)$ iff $W(g'',g')$.
\end{definition}

To show that for rigid symmetric circuits $C$, any injective
homomorphism from $\Sym{U}$ to $\Aut{C}$ is surjective, it suffices to
show that each $\sigma \in \Sym{U}$ induces a unique automorphism in
$\Aut{C}$.

\begin{prop}
  \label{prop:rigid}
  Let $C$ be a rigid circuit with universe $U$, and $\sigma \in
  \Sym{U}$.  If $\sigma$ induces an automorphism of $C$, that
  automorphism is unique.
\end{prop}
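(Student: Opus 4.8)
The plan is to argue by induction on the DAG structure of $C$, using rigidity to pin down the image of each gate under any induced automorphism. Suppose $\sigma \in \Sym{U}$ induces automorphisms $\pi$ and $\pi'$ of $C$; I want to show $\pi = \pi'$. Consider the composition $\rho \de \pi' \circ \pi^{-1}$. Since $\pi$ and $\pi'$ are both induced by $\sigma$, the map $\rho$ is an automorphism of $C$ that is induced by the identity permutation $\mathrm{id}_U$: indeed it preserves wiring, labels $\Sigma$, and satisfies $\mathrm{id}_U \circ \Lambda(g) = \Lambda(\rho(g))$ for relational gates and $\rho\,\Omega(x) = \Omega(\mathrm{id}_U(x)) = \Omega(x)$ for output tuples. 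So it suffices to show that the only automorphism of a rigid circuit induced by the identity is the identity automorphism.

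First I would handle the input and output gates directly from the definition of ``induced by $\mathrm{id}_U$''. For a relational gate $g$ labelled with relation $R$, any gate in the image $\rho(g)$ must also be a relational gate labelled $R$ (since $\Sigma$ is preserved) with $\Lambda_R(\rho(g)) = \Lambda_R(g)$; since $\Lambda_R$ is injective, $\rho(g) = g$. For a constant gate, $\Sigma$-preservation together with $|\Sigma^{-1}(0)|, |\Sigma^{-1}(1)| \le 1$ forces $\rho(g) = g$. For an output gate $g = \Omega(x)$, we have $\rho(\Omega(x)) = \Omega(x)$, so $\rho$ fixes it. Thus $\rho$ fixes all input gates and all output gates.

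Next I would propagate this up the DAG by induction on the longest path from an input gate to a gate $g$. Suppose $\rho$ fixes every child of an internal gate $g$ (which holds for the base case, gates all of whose children are input gates, and in general by the induction hypothesis since children are strictly lower in the DAG). The set of children of $\rho(g)$ is $\{\rho(h) : (h,g) \in W\} = \{h : (h,g)\in W\}$, i.e., $\rho(g)$ has exactly the same set of children as $g$. Moreover $\Sigma(\rho(g)) = \Sigma(g)$, and $\Omega^{-1}(\rho(g)) = \Omega^{-1}(g)$ because $\rho$ commutes with $\Omega$ via $\mathrm{id}_U$ (in particular both are either output gates assigned the same tuple, or both non-output). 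These are precisely the four conditions in the definition of rigidity (Definition~\ref{def:rigid}): if $\rho(g) \ne g$ we would have two distinct gates agreeing on $\Sigma$, on $\Lambda$ (vacuously, as neither is relational — or we already handled relational gates), on $\Omega^{-1}$, and having the same in-neighbourhood, contradicting rigidity. Hence $\rho(g) = g$, completing the induction, so $\rho = \mathrm{id}$ and therefore $\pi = \pi'$.

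The only real subtlety — and the step I would treat most carefully — is the bookkeeping around gates that are simultaneously relational/constant and the handling of the $\Lambda$ and $\Omega^{-1}$ conditions in the rigidity clause: one must make sure that when applying rigidity to an internal gate $g$, the conditions ``$\Lambda(g) = \Lambda(g')$'' and ``$\Omega^{-1}(g) = \Omega^{-1}(g')$'' are genuinely met by $g$ and $\rho(g)$. The former is vacuous for internal gates since $\Lambda$ is only defined on relational gates, and the latter follows from $\rho$ being induced by $\mathrm{id}_U$; I would state these as small observations before running the induction so that the appeal to Definition~\ref{def:rigid} is clean. Everything else is a routine structural induction on the acyclic wiring relation.
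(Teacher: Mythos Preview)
Your proposal is correct and takes essentially the same approach as the paper: a structural induction on the height of gates, handling input and output gates directly and then using the rigidity clause to force uniqueness at internal gates. The only cosmetic difference is that you first compose to reduce to the case of an automorphism induced by the identity, whereas the paper compares $\pi$ and $\pi'$ directly throughout; this changes nothing of substance.
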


We defer the proof of this proposition to Section~\ref{subsec:rigid} were we also show that symmetric circuits can be transformed into equivalent rigid symmetric circuits in polynomial time, and hence show that rigidity can be assumed of circuits without loss of generality in our setting.  
For a rigid symmetric circuit $C$, the group of automorphisms of
$C$ is exactly $\Sym{U}$ acting faithfully.  We shall therefore abuse
notation and use these interchangeably.  In particular, we shall write
$\sigma g$ to denote the image of a gate $g$ in $C$ under the action
of the automorphism induced by a permutation $\sigma$ in $\Sym{U}$.

An examination of the definitions suffices to show that symmetry
implies invariance.  In symmetric circuits it is useful to consider
those permutations which induce automorphisms that fix gates.  Let
$\p$ be a partition of a set $U$.  Let the \emph{pointwise stabiliser}
of $\p$ be $\stabu{\p} \de \condset { \sigma \in \Sym{U}}{\forall P
  \in \p, \sigma P = P},$ and similarly define the \emph{setwise
  stabiliser} $\sstabu{\p} \de \condset{ \sigma \in \Sym{U}}{\forall P
  \in \p, \sigma P \in \p}$.  For a gate $g$ in a rigid symmetric circuit $C$ with
universe $U$, let the \emph{stabiliser} of $g$ be $\stabu{g} \de
\condset{ \sigma \in \Sym{U}}{ \induce{\sigma}(g) = g}$, and let the \emph{orbit} of $g$
under the automorphism group $\Aut{C}$ of $C$ be $\orb(g) \de
\condset{\pi g}{\pi \in \Aut{C}}$.
  In each case, when $U=[n]$, we
write $\mathrm{Stab}_n$ instead of $\mathrm{Stab}_{[n]}$.

\paragraph{Uniform Circuits.}
One natural class of circuits are those with
polynomial-size descriptions that can be generated by a deterministic
polynomial-time machine.

\begin{definition}[\PT and \Pp-Uniform]
  A $(\B,\tau)$-circuit family $\C = (C_n)_{n \in \NN}$ computing a
  $q$-ary query is \emph{\Pp-uniform} if there exists an integer $t
  \ge q$ and function $\Advice : \NN \rightarrow \set{0,1}^*$ which
  takes an integer $n$ to a binary string $\Advice(n)$ such that
  $|\Advice(n)| = \poly(n)$, and $\Advice(n)$
  describes\footnote{Formally one must define a particular way of
    encoding circuits via binary strings.  However, since the details
    of the representation are largely irrelevant for our purposes we omit them.}
  the circuit $C_n$ whose gates are indexed by $t$-tuples of
  $\set{0,1,\ldots,n}$, inputs are labelled by $t$-tuples of $[n]$,
  and outputs are labelled by $q$-tuples of $[n]$.  Moreover, if there
  exists a deterministic Turing machine $M$ that for each integer $n$
  computes $\Advice(n)$ from $1^n$ in time $\poly(n)$ call $\C$
  \emph{\PT-uniform}.
\end{definition}

Note that such uniform families implicitly have polynomial size.
It follows from the Immerman-Vardi Theorem \cite{V82,I86} that any
$\PT$-uniform family $\C = (C_n)_{n \in \NN}$ of circuits is definable
by an $\FP$ interpretation in the sense that there is a sequence of
formulas $(\phi_G, \phi_W,
  \phi_\Omega,(\phi_s)_{s \in \B \uplus \tau \uplus \set{0,1}},
  (\phi_{\Lambda_R})_{R\in \tau})$ which, interpreted in the structure $\tup{[n],\le}$,
  describes the circuit $C_n =  \tup{G,W,\Omega,\Sigma,\Lambda}$, with
  
  \begin{itemize} 
  \item $G \sse [n]^t$ such that $g \in G$ iff
    $ \tup{[n],\le}\models \phi_G[g]$.
  \item For all $g,g' \in G$ and 
    $W(g,g')$ iff $\tup{[n],\le} \models \phi_W[g,g']$.
  \item For all $g \in G$ and $a \in [n]^q$, $\Omega(a)
    = g$ iff $\tup{[n],\le} \models \phi_\Omega[a,g]$.
  \item For all $g \in G$ and $s \in \B \uplus \tau \uplus \set{0,1}$,
    $\Sigma(g) = s$ iff $\tup{[n],\le}\models \phi_s[g]$.
  \item For all relational gates $g \in G$ and $a \in
    [n]^r$,
    $\Lambda_R(g) = a$ iff $\tup{[n],\le} \models
    \phi_\Lambda[g,a]$, where $r$ is the arity of $R = \Sigma(g)$.
  \end{itemize}

\noindent More generally, if  $\C = (C_n)_{n \in \NN}$ is a \emph{\Pp-uniform}
family of circuits, there is an $(\FPG)$-definable interpretation of
$C_n$ in $\str{A}^\Advice$ for a suitable advice function $\Advice$.

Over ordered structures 
neither $\PT$-uniform nor $\Pp$-uniform circuits need compute
invariant queries as their computation may
implicitly depend on the order associated with $[n]$.  To obtain
invariance for such circuits we assert symmetry.  The next section
proves a natural property of symmetric circuits that ultimately
implies that \emph{symmetric} \PT-uniform circuits coincide with
$\FP$ definitions on the standard and majority bases.

\section{Symmetry and Support}
\label{sec:support}

In this section we analyse the structural properties of symmetric
circuits. 
We begin with a formal definition of support.
\begin{definition}[Support]\label{def:support}
Let $C$ be a rigid symmetric circuit with universe $U$ and let $g$ be a gate in
$C$. A set $X \subseteq U$ \emph{supports} $g$ if, for any
permutation $\sigma \in \Sym{U}$ such that $\sigma x = x$ for all $x
\in X$, we have $\sigma g=g$ (i.e., $\sigma \in \stabu{g}$).  
\end{definition}



In this section we show how to associate supports of constant size
in a canonical way to all gates in \emph{any} rigid symmetric circuit of
polynomial size.  Indeed, our result is more general as it associates
moderately growing supports to gates in circuits of sub-exponential
size.  As a preliminary to the proof, we introduce, in
Section~\ref{subsec:partitions}, the more general notion of a
\emph{supporting partition} for a permutation group.  We show how to
associate a canonical such supporting partition with any permutation
group $G$ and obtain bounds on the size of such a partition based on
the index of $G$ in the symmetric group.  These results are then used
in, Section~\ref{subsec:suppthm}, to bound the size of partitions supporting stabiliser groups of gates  based
on the size of the circuit, proving our main technical result---the
Support Theorem.

\subsection{Supporting Partitions}\label{subsec:partitions}

The notion of a supporting partition generalises the notion of a
support of a gate by replacing the set with a partition and
the stabiliser group of the gate with an arbitrary permutation group.

\begin{definition}[Supporting Partition]
Let $G \sse \Sym{U}$ be a group and $\p$ a partition of $U$.  We say
that $\p$ is a \emph{supporting partition} of $G$ if $\stabu{\p} \sse
G$.  
\end{definition}

For intuition consider two extremes.  When $G$ has supporting
partition $\p = \set{U}$, it indicates $G =\Sym{U}$.  Saying that $G$
has supporting partition $\p =
\set{\set{u_1},\set{u_2},\ldots,\set{u_{|U|}}}$ indicates only that
$G$ contains the identity permutation, which is always true.

A natural partial order on partitions is the coarseness relation,
i.e., $\p'$ is as coarse as $\p$, denoted $\p' \spe \p$, if every part in $\p$ is
contained in some part of $\p'$.  For two partitions $\p$ and $\p'$,
there is a most refined partition that is as coarse as either partition:
\begin{definition}
  \label{def:E}
  Let $\p,\p'$ be partitions of $U$.  Define a binary relation
  $\sim$ on $U$: $u_1 \sim u_2$ iff there exists $P \in \p \cup \p'$
  such that $u_1,u_2 \in P$.  Let $\E(\p,\p')$ denote the
  partition of $U$ corresponding to the equivalence classes of $U$
  under the transitive closure of $\sim$.
\end{definition}
Now it is easy to show that $\E$ preserves supporting partitions (the proof
is similar to that of (*) on page 379 of \cite{O97}).
\begin{prop}
  \label{prop:intersect_gen}
Let $G \sse \Sym{U}$ be a group and $\p,\p'$ be supporting partitions
of $G$.  Then  $\E(\p,\p')$ is also a supporting partition of $G$.
\end{prop}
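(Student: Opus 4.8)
The plan is to show that the pointwise stabiliser $\stabu{\E(\p,\p')}$ is generated, as a subgroup of $\Sym{U}$, by transpositions each of which already lies in $\stabu{\p}$ or in $\stabu{\p'}$. Since $\stabu{\p} \sse G$ and $\stabu{\p'} \sse G$ by hypothesis and $G$ is a group, it then follows at once that $\stabu{\E(\p,\p')} \sse G$, which is exactly the assertion that $\E(\p,\p')$ is a supporting partition of $G$.

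First I would observe that a permutation of $U$ fixes every block of $\E(\p,\p')$ setwise if and only if it restricts to a permutation of each block; hence $\stabu{\E(\p,\p')}$ is the internal direct product $\prod_{Q} \mathrm{Sym}(Q)$ over the blocks $Q$ of $\E(\p,\p')$, and in particular it is generated by the transpositions $(a\,b)$ with $a$ and $b$ lying in a common block of $\E(\p,\p')$. So it suffices to show that every such transposition lies in $G$. Fix such $a,b$ in a common block $Q$. By Definition~\ref{def:E} there is a chain $a = c_0, c_1, \ldots, c_m = b$ in which each consecutive pair $c_i, c_{i+1}$ is contained in a single part of $\p$ or a single part of $\p'$. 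For whichever of the two partitions this is, the transposition $(c_i\,c_{i+1})$ moves only the two points $c_i, c_{i+1}$, keeps them inside their common part, and fixes every other part pointwise; hence $(c_i\,c_{i+1}) \in \stabu{\p} \cup \stabu{\p'} \sse G$ (when $c_i = c_{i+1}$ this is the identity, which is harmless).

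It remains to assemble $(a\,b)$ from these adjacent transpositions. Using the elementary identity $(x\,z) = (x\,y)(y\,z)(x\,y)$ together with induction on $m$ — taking $x = c_0$, $y = c_{m-1}$, $z = c_m$ — one writes $(c_0\,c_m)$ as a product of the transpositions $(c_i\,c_{i+1})$, $0 \le i < m$. Since each factor lies in $G$ and $G$ is closed under products, $(a\,b) = (c_0\,c_m) \in G$. This shows that all generators of $\stabu{\E(\p,\p')}$ are in $G$, so $\stabu{\E(\p,\p')} \sse G$ and $\E(\p,\p')$ supports $G$, as required. I do not expect a genuine obstacle here: the only point needing care is the bookkeeping for degenerate chains (repeated points, or parts of size one, which make some of the transpositions trivial), and this is absorbed by the parenthetical remarks above rather than requiring separate cases. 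This argument is the analogue of $(*)$ on page~379 of \cite{O97}, adapted from ``bases'' to supporting partitions.
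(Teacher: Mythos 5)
Your proof is correct and follows essentially the same route as the paper's: both reduce to showing that each transposition of two elements in a common block of $\E(\p,\p')$ lies in $G$, by chaining adjacent transpositions along the path witnessing the equivalence, each of which lies in $\stabu{\p}$ or $\stabu{\p'}$. Your conjugation identity $(x\,z)=(x\,y)(y\,z)(x\,y)$ yields exactly the explicit product the paper writes out.
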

\begin{proof}
  Let $\E \de \E(\p,\p') = \set{E_1,\ldots,E_m}$.  Suppose $\sigma \in
  \stab{\E}$ and we now show that $\sigma \in G$.   Because the parts $E_i$ are
  disjoint write $\sigma$ as $\sigma_1\cdots \sigma_m$ where $\sigma_i
  \in \Sym{E_i}$ (i.e., it permutes only the elements of $E_i$).
  Indeed each $\sigma_i$ may be written as a sequence of
  transpositions of elements in $E_i$.  Thus it suffices to show that
  each transposition $(u u')$ with $u,u' \in E_i$ can be written as a
  sequence of permutations in $\stab{\p} \cup \stab{\p'} \sse
  G$.  Since $u,u' \in E_i$ there is a sequence of elements of $u_1,
  \ldots, u_\ell$ with $u_1 = u$, $u_\ell = u'$ and $u_j \sim u_{j+1}$
  for $j \in [\ell-1]$ witnessing the path from $u$ to $u'$.  By the
  definition of $\sim$, for each $j \in [\ell-1]$ there is $P \in \p
  \cup \p'$ such that $u_j, u_{j+1} \in P$ and therefore
  $(u_ju_{j+1})$ is a transposition in $\stab{\p} \cup \stab{\p'}$.
  Conclude that the transposition $(uu') = (u_1u_\ell) =
  (u_1u_2)(u_2u_3)\cdots
  (u_{\ell-2}u_{\ell-1})(u_{\ell-1}u_\ell)(u_{\ell-2}u_{\ell-1})\cdots
  (u_1u_2)$ is a sequence of transpositions from $\stab{\p} \cup
  \stab{\p'}$ and the proof is complete.
\end{proof}
This implies that each permutation group has a unique coarsest
partition that supports it.
\begin{lemma} \label{lem:supp-gen} 
Each permutation group $G\sse \Sym{U}$ has a unique coarsest supporting partition.
\end{lemma}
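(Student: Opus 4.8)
The plan is to deduce the existence of a unique coarsest supporting partition from Proposition~\ref{prop:intersect_gen} together with the finiteness of $U$. First I would observe that the set of supporting partitions of $G$ is nonempty: the discrete partition $\set{\set{u} : u \in U}$ has pointwise stabiliser equal to the trivial group $\set{\mathrm{id}}$, which is contained in $G$, so it is always a supporting partition. Hence we may form
\[
  \p^* := \E(\p_1, \E(\p_2, \ldots \E(\p_{k-1}, \p_k) \ldots)),
\]
where $\p_1, \ldots, \p_k$ is an enumeration of all supporting partitions of $G$ (there are finitely many since $U$ is finite). By repeated application of Proposition~\ref{prop:intersect_gen}, $\p^*$ is itself a supporting partition of $G$.

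Next I would argue that $\p^*$ is as coarse as every supporting partition $\p$ of $G$, i.e.\ $\p^* \spe \p$. This follows from two facts about $\E$: for any partitions $\p, \p'$ of $U$ we have $\E(\p,\p') \spe \p$ and $\E(\p,\p') \spe \p'$ (immediate from Definition~\ref{def:E}, since each part of $\p$ or $\p'$ lies inside a single equivalence class of the transitive closure of $\sim$), and the coarseness relation $\spe$ is transitive. Since each $\p_i$ appears somewhere in the nested expression defining $\p^*$, an easy induction on the nesting depth shows $\p^* \spe \p_i$ for every $i$, hence $\p^*$ is a common upper bound, in the coarseness order, of all supporting partitions.

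Finally, uniqueness: if $\p^*$ and $\q$ are both coarsest supporting partitions, then by what we just showed $\p^* \spe \q$ and $\q \spe \p^*$; since $\spe$ is a partial order (antisymmetric) on partitions, $\p^* = \q$. I do not anticipate a genuine obstacle here—the only point requiring mild care is confirming that $\E(\p,\p')$ really is coarser than both arguments, which is a direct unwinding of the definition of the transitive closure of $\sim$, and checking that the iterated-$\E$ construction terminates, which is guaranteed by the finiteness of the collection of partitions of the finite set $U$. One could alternatively phrase the argument abstractly: the supporting partitions of $G$ form a sub-join-semilattice of the lattice of partitions of $U$ (closed under the join operation $\E$), and a finite join-semilattice with a bottom-compatible structure has a greatest element; but the explicit iterated-$\E$ argument above is self-contained and avoids invoking additional lattice-theoretic machinery.
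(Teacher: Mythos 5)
Your proof is correct and follows essentially the same route as the paper: both arguments rest on Proposition~\ref{prop:intersect_gen} together with the observation that $\E(\p,\p')$ is at least as coarse as each of $\p$ and $\p'$. The paper's version only spells out the uniqueness step (deriving a contradiction from two distinct coarsest supporting partitions), whereas you also make the existence part explicit via the discrete partition and the iterated-$\E$ construction; this is a harmless elaboration, not a different method.
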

\begin{proof}
  Suppose $G$ has two distinct coarsest
  partitions $\p,\p'$ of the universe $U$ that support it, then
  Proposition~\ref{prop:intersect_gen} implies that the coarser
  partition $\E(\p,\p')$ also supports $G$.  This is a contradiction.
\end{proof}

We write $\SP(G)$ for \emph{the unique coarsest partition supporting} $G$.
For a partition $\p$ of $U$ and a permutation $\sigma \in \Sym{U}$,
we write $\sigma\p$ for the partition $\condset{\sigma P}{P \in \p}$.
Note that this commutes with the operation $\E$, so $\sigma\E(\p,\p')
= \E(\sigma\p,\sigma\p')$.
The next lemma shows how supporting partitions are affected by the
conjugacy action of $\Sym{U}$.
\begin{lemma}\label{lem:supp-conjug1}
If $\p$ is a partition supporting a group $G$, then for any $\sigma
\in \Sym{U}$, $\sigma\p$ supports the group $\sigma G \sigma^{-1}$.
\end{lemma}

\begin{proof}
Let $\pi \in \stabu{\sigma \p}$
and let $P$ be a part in $\p$, then:
$$ (\sigma^{-1}\pi\sigma)P = (\sigma^{-1}\pi)(\sigma P) =
\sigma^{-1}\sigma P = P,$$
where the second equality follows from the fact that $\pi$ fixes
$\sigma P$.  Thus, $\sigma^{-1}\pi\sigma$ fixes $\p$ pointwise,
therefore $\sigma^{-1}\pi\sigma \in G$ and hence $\pi \in \sigma G
\sigma^{-1}$.
\end{proof}
This indicates how the unique coarsest supporting partition of a group translates under conjugation.
\begin{lemma}\label{lem:supp-conjug}
For any $G \sse \Sym{U}$ and any $\sigma \in \Sym{U}$, $\sigma \SP(G) = \SP(\sigma G \sigma^{-1})$.
\end{lemma}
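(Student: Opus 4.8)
The plan is to show both containments $\sigma\SP(G) \spe \SP(\sigma G \sigma\inv)$ and $\SP(\sigma G \sigma\inv) \spe \sigma\SP(G)$, each of which follows almost immediately from Lemma~\ref{lem:supp-conjug1} together with the uniqueness of the coarsest supporting partition (Lemma~\ref{lem:supp-gen}).

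First I would apply Lemma~\ref{lem:supp-conjug1} with $\p = \SP(G)$: since $\SP(G)$ supports $G$, the partition $\sigma\SP(G)$ supports $\sigma G \sigma\inv$. By definition $\SP(\sigma G \sigma\inv)$ is the \emph{coarsest} partition supporting $\sigma G \sigma\inv$, so $\SP(\sigma G \sigma\inv) \spe \sigma\SP(G)$.

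For the reverse inequality I would run the same argument in the other direction. Apply Lemma~\ref{lem:supp-conjug1} again, this time with the group $\sigma G \sigma\inv$, the partition $\SP(\sigma G \sigma\inv)$, and the permutation $\sigma\inv$: since $\SP(\sigma G \sigma\inv)$ supports $\sigma G \sigma\inv$, the partition $\sigma\inv\SP(\sigma G \sigma\inv)$ supports $\sigma\inv(\sigma G \sigma\inv)\sigma = G$. By coarsest-ness of $\SP(G)$ we get $\SP(G) \spe \sigma\inv\SP(\sigma G \sigma\inv)$, and applying $\sigma$ to both sides (which preserves the coarseness order, since $\sigma$ is a bijection) yields $\sigma\SP(G) \spe \SP(\sigma G \sigma\inv)$. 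Combining the two containments, and using that $\spe$ is antisymmetric on partitions, gives $\sigma\SP(G) = \SP(\sigma G \sigma\inv)$, as required.

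There is essentially no obstacle here: the only points to be a little careful about are that applying $\sigma$ to a partition is monotone with respect to $\spe$ (immediate, since $\sigma$ is a bijection on $U$ and hence maps the part-containment structure of one partition onto that of its image) and that $\sigma\inv(\sigma G \sigma\inv)\sigma = G$ (immediate group algebra). The substance of the lemma is entirely carried by Lemma~\ref{lem:supp-conjug1} and the uniqueness statement of Lemma~\ref{lem:supp-gen}.
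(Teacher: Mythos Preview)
Your proof is correct. Both your argument and the paper's rely on Lemma~\ref{lem:supp-conjug1} as the essential ingredient, but they finish in slightly different ways. The paper's one-line proof appeals to the fact that the operation $\E$ commutes with the action of $\sigma$ (stated just before Lemma~\ref{lem:supp-conjug1}): since $\sigma$ bijectively carries the collection of supporting partitions of $G$ to that of $\sigma G\sigma^{-1}$, and $\SP$ is obtained by coarsening via $\E$, commutation gives $\sigma\SP(G)=\SP(\sigma G\sigma^{-1})$ in one stroke. You instead bypass $\E$ entirely and use the universal property of $\SP$ (Lemma~\ref{lem:supp-gen}) twice, applying Lemma~\ref{lem:supp-conjug1} once for $\sigma$ and once for $\sigma^{-1}$, together with the observation that $\sigma$ preserves the coarseness order. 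Your route is arguably more self-contained, since it does not need the earlier remark about $\E$ commuting with $\sigma$; the paper's route is more conceptual in that it views $\SP$ as a construction rather than just a maximal element. Either way the content is the same and the argument is short.
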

\begin{proof}
Immediate from Lemma~\ref{lem:supp-conjug1} and the fact that the
action of $\E$ commutes with $\sigma$.
\end{proof}
We conclude that any group $G$ is sandwiched between the pointwise and setwise
stabilisers of $\SP(G)$.
\begin{lemma}\label{lem:supp-group}
For any group $G \sse \Sym{U}$, we have $\stabu{\SP(G)} \sse G \sse
\sstabu{\SP(G)}$. 
\end{lemma}
\begin{proof}
The first inclusion is by definition of supporting partitions.  For
the second, note that if $\sigma \in G$, then $\sigma G \sigma^{-1} =
G$.  Then, by Lemma~\ref{lem:supp-conjug}, $\sigma\SP(G) = \SP(G)$.
\end{proof}   
Note that these bounds need not be tight.
For example, if $G$ is the alternating group on $U$
(or, indeed, any transitive, primitive subgroup of $\Sym{U}$),
then $\SP(G)$ is the partition of $U$ into singletons.  In this case,
$\stabu{\SP(G)}$ is the trivial group while $\sstabu{\SP(G)}$ is all
of $\Sym{U}$. 

We now use the bounds given by Lemma~\ref{lem:supp-group},
in conjunction with bounds on $G$ to obtain size bounds on $\SP(G)$.
Recall that the index of $G$ in $\Sym{U}$, denoted
$\ind{\Sym{U}}{G}$ is the number of cosets of $G$ in $\Sym{U}$ or,
equivalently, $\frac{|\Sym{U}|}{|G|}$.
The next lemma says that if $\p$ is a partition of $[n]$ where the
index of $\sstabn{\p}$ in $\Sym{n}$ is sufficiently small then the
number of parts in $\p$ is either very small or very big.
\begin{lemma}\label{lem:small-large}
For any $\epsilon$ and $n$ such that $0\leq \epsilon < 1$ and $\log n \geq
\frac{4}{\epsilon}$, if $\p$ is a partition of $[n]$ with $k$ parts,
$s \de \ind{\Sym{n}}{\sstabn{\p}}$ and $n \leq s \leq
2^{n^{1-\epsilon}}$, then $\min\set{k,n-k} \le \frac{8}{\epsilon}
\frac{\log s}{\log n}$.
\end{lemma}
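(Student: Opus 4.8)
The plan is to relate the index $s = \ind{\Sym{n}}{\sstabn{\p}}$ to the number of distinct partitions in the orbit of $\p$ under the conjugacy action of $\Sym{n}$, and then to bound that orbit size from below by a combinatorial count depending on the number of parts $k$. Concretely, the orbit of $\p$ under conjugation has size exactly $\ind{\Sym{n}}{\sstabn{\p}} = s$, since the setwise stabiliser $\sstabn{\p}$ is precisely the stabiliser of $\p$ in this action (a permutation $\sigma$ conjugates $\p$ to itself iff it permutes the parts among themselves). So the number of partitions of $[n]$ that have the same ``shape'' as $\p$ (same multiset of part sizes) is at least $s$ — in fact it equals $s$.

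The next step is to turn ``there are at least $n$ and at most $2^{n^{1-\epsilon}}$ partitions of this shape'' into the dichotomy $\min\set{k,n-k} \le \frac{8}{\epsilon}\frac{\log s}{\log n}$. Write the part sizes of $\p$ as $a_1 \ge a_2 \ge \cdots \ge a_k$ with $\sum a_i = n$. The number of partitions of $[n]$ with this exact shape is the multinomial $\binom{n}{a_1,\ldots,a_k}$ divided by the product of factorials of the multiplicities of repeated values; ignoring that denominator only helps the lower bound in one direction, so I would first get a clean lower bound on $s$ in terms of $k$ and then separately an upper bound forcing $k$ or $n-k$ to be small. The key estimates: if many parts are singletons, one can essentially ignore them; the ``interesting'' regime is when $k$ is neither tiny nor close to $n$. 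I would lower-bound the number of shape-$\p$ partitions by something like $\binom{n}{k}$ when there are at least $k$ parts of distinct behaviour, or more carefully use that choosing which elements go into (say) the smallest nontrivial parts already yields $\binom{n}{\min\set{k,n-k}}$-ish many configurations, and then use $\binom{n}{m} \ge (n/m)^m$ together with $m \le n/2$ to get $\log s \gtrsim m \log(n/m)$. Rearranging, and using $\log n \ge 4/\epsilon$ to absorb the $\log(n/m)$ term against $\log n$ and to handle the boundary $m$ close to $n/2$, should give $m = \min\set{k,n-k} \le \frac{8}{\epsilon}\frac{\log s}{\log n}$; the constant $8/\epsilon$ and the hypothesis $s \le 2^{n^{1-\epsilon}}$ are exactly what is needed to make the estimate $\log(n/m) \ge \frac{\epsilon}{2}\log n$ (or similar) go through when $m$ is as large as allowed.

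The main obstacle I anticipate is the careful bookkeeping when the partition has many repeated part sizes — especially many singletons — since then the naive multinomial overcounts badly and the ``shape count'' can be much smaller than $\binom{n}{k}$. The right move is to argue by symmetry/duality: if $k > n/2$ then there must be many singletons, and one should count by choosing the locations of the \emph{non-singleton} elements (there are $n - (\text{number of singletons}) \le 2(n-k)$ of them, roughly), which brings $n-k$ into the exponent instead of $k$; this is precisely why the statement has $\min\set{k,n-k}$ rather than $k$. So the proof naturally splits into the case $k \le n/2$ (lower-bound $s$ via $\binom{n}{k}$-type counting on which elements fill the parts) and $k > n/2$ (reduce to counting placements of the boundedly-many non-singleton elements), with the $\log n \ge 4/\epsilon$ hypothesis and the sub-exponential bound on $s$ used in each case to convert the binomial lower bound into the claimed linear-in-$\frac{\log s}{\log n}$ bound on $\min\set{k,n-k}$.
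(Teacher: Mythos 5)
Your overall strategy is the same as the paper's: identify $s=\ind{\Sym{n}}{\sstabn{\p}}$ with the size of the orbit of $\p$ under the action of $\Sym{n}$, lower-bound that orbit by a binomial coefficient in $k'=\min\set{k,n-k}$ via $\binom{n}{k'}\ge (n/k')^{k'}$, and then run the double-logarithm manipulation --- using $s\ge n$ to absorb the additive $\log(n+1)$ error and $s\le 2^{n^{1-\epsilon}}$ to turn $\log n-\log k'$ into $\ge\frac{\epsilon}{2}\log n$ --- which is exactly the paper's endgame and does produce the constant $\frac{8}{\epsilon}$ under $\log n\ge\frac{4}{\epsilon}$.

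The one step in your plan that would fail as written is the $k>\frac{n}{2}$ case. You propose to lower-bound the orbit by the number of ways to place the $t$ non-singleton elements, i.e.\ by $\binom{n}{t}$ with $t\le 2(n-k)$. But $t$ can simultaneously be close to $n$: take $\frac{n}{2}-1$ parts of size $2$ plus two singletons, so $k=\frac{n}{2}+1>\frac{n}{2}$, $n-k=\frac{n}{2}-1$, and $t=n-2$; then $\binom{n}{t}=\binom{n}{2}$, far too small to force $n-k\lesssim\frac{\log s}{\log n}$. (The actual orbit is enormous here, but only because of the arrangements \emph{within} the chosen $t$-set, which your count discards.) The paper avoids any case split and any singleton bookkeeping: it bounds $|\sstabn{\p}|\le k!\,p_1!\cdots p_k!$ and then repeatedly shifts mass between factorials (using $p_i!\,p_k!\le(p_i-1)!\,(p_k+1)!$ when $p_i>1$) to replace $p_1!\cdots p_k!$ by $(n-(k-1))!$, giving
$s\ \ge\ \frac{n!}{k!\,(n-k+1)!}\ =\ \frac{1}{n+1}\binom{n+1}{k}\ \ge\ \frac{1}{n+1}\binom{n}{k'}$
uniformly in $k$; the symmetry $\binom{n}{k}=\binom{n}{n-k}$ then delivers the $\min$ for free. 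With that substitution for your orbit lower bound, the rest of your argument goes through as planned.
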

\begin{proof}
 Let $p_1 \le p_2 \le \ldots \le p_k$ be the
  respective sizes of the parts in $\p$.  Thus,
  \begin{equation} 
    \label{eq:weak-k:1}
    \hfill s =
    \frac{n!}{|\sstabn{\p}|} \ge \frac{1}{k!}\frac{n!}{p_1! p_2!
      \cdots p_k!}.\hfill
  \end{equation}
Observe that,  if $p_i > 1$, then $p_1! p_2! \cdots p_k! \leq p_1! p_2!
\cdots (p_i -1)! \cdots (p_k+1)!$.  By repeatedly applying this, we
see that in the lower bound on $s$ given by
Equation~\eqref{eq:weak-k:1}, we can replace $p_1! p_2! \cdots p_k!$
by $(n - (k-1))!$.
Let $k' \de
  \min\set{k,n-k}$ and we have
  $$s \ge \frac{n!}{k!
  (n-(k-1))!} = \frac{1}{n+1}\binom{n+1}{k} \ge
  \frac{1}{n+1}\binom{n}{k} \ge \frac{1}{n+1}\binom{n}{k'} \ge
  \frac{1}{n+1}\left(\frac{n}{k'}\right)^{k'}$$ where the second
  inequality follows because $\binom{n}{k} = \binom{n}{n-k} =
  \binom{n}{k'}$, and the third inequality follows from a simple
  combinatorial bound.  Take the logarithm of both sides of the above
  equation to get $\log s \ge k' (\log n - \log k') - \log (n+1)$.
  Using the fact that $s \ge n \ge 2$ ($\log n \ge \frac{4}{\epsilon}
  \ge 4$) then implies that
  \begin{equation}
    \hfill 4\log s \ge k' (\log n - \log k'), \label{eq:weak-k:2}\hfill
  \end{equation}
  The definition of $k'$ implies that $k' \le \frac{n}{2}$ and $\log n
  - \log k' \ge 1$.  Plugging this into Equation~\eqref{eq:weak-k:2}
  gives that $4\log s \ge k'.$ Take the logarithm of this inequality
  and apply the upper bound on $s$ to determine that $(1-\epsilon)\log
  n + 2 \ge \log k'.$ Inserting this inequality back into
  Equation~\eqref{eq:weak-k:2} gives $4 \log s \ge k' (\epsilon \log n
  - 2).$ Since $\frac{\epsilon}{2}\log n \ge 2$, conclude that $k' \le
  \frac{8}{\epsilon}\frac{\log s}{\log n}$.
\end{proof}

We use a similar argument to establish that, under the assumptions of
the previous lemma, when the number of parts in $\p$ is small, then
the largest part is very big. 

\begin{lemma}\label{lem:largepart}
For any $\epsilon$ and $n$ such that $0\leq \epsilon < 1$ and $\log n
\geq \frac{8}{\epsilon^2}$, if $\p$ is a partition of $[n]$ with
$\npart{\p} \leq \frac{n}{2}$, $s \de \ind{\Sym{n}}{\sstabn{\p}}$
and $n \leq s \leq 2^{n^{1-\epsilon}}$, then $\p$ contains a part $P$
with at least $n - \frac{33}{\epsilon} \cdot \frac{\log s}{\log n}$
elements. 
\end{lemma}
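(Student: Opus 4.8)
The plan is to bound the number of parts first and then argue that, once there are few parts, one of them must be enormous. Observe that under the stated hypotheses ($\log n \ge 8/\epsilon^2 \ge 4/\epsilon$), Lemma~\ref{lem:small-large} applies and gives $\min\set{k, n-k} \le \frac{8}{\epsilon}\frac{\log s}{\log n}$, where $k = \npart{\p}$. Since we are assuming $\npart{\p} \le n/2$, i.e.\ $k \le n/2$, we have $\min\set{k, n-k} = k$, so in fact $k \le \frac{8}{\epsilon}\frac{\log s}{\log n}$. So the number of parts is already known to be small; the remaining work is purely about the \emph{sizes} of those parts.

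Next I would re-run the counting argument from Lemma~\ref{lem:small-large}, but this time extract information about the largest part rather than about $k$. Let $p_1 \le \cdots \le p_k$ be the part sizes and let $P$ be the largest part, $|P| = p_k$. As in the proof of Lemma~\ref{lem:small-large}, $s = \frac{n!}{|\sstabn{\p}|} \ge \frac{1}{k!}\binom{n}{p_1,\ldots,p_k} \ge \frac{1}{k!}\binom{n}{p_k}$, since the multinomial coefficient dominates any single binomial coefficient $\binom{n}{p_k}$ obtained by lumping the other parts together. Taking logarithms, $\log s \ge \log\binom{n}{p_k} - \log k!$. Now I would use $\log k! \le k \log k$ together with the bound $k \le \frac{8}{\epsilon}\frac{\log s}{\log n} \le \frac{8}{\epsilon}\cdot\frac{n^{1-\epsilon}}{\log n}$, which (given $\log n \ge 8/\epsilon^2$) makes $\log k!$ a lower-order term — concretely $\log k! \le O(\log s)$ with a small explicit constant. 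Hence $\log\binom{n}{p_k} \le c \log s$ for an explicit small constant $c$ (something like $c = 2$ after absorbing the $\log k!$ term, with room to spare in the final constant $33$).

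It then remains to convert an upper bound $\log\binom{n}{p_k} \le c\log s$ into a lower bound on $p_k$ of the form $p_k \ge n - \frac{33}{\epsilon}\frac{\log s}{\log n}$. Writing $m \de n - p_k$ (the number of elements outside the largest part), I would use $\binom{n}{p_k} = \binom{n}{m} \ge \left(\frac{n}{m}\right)^{m}$, so $c\log s \ge m(\log n - \log m)$. Since all the non-$P$ parts together have $m$ elements and there are at most $k \le n/2$ parts, $m \le n/2$, hence $\log n - \log m \ge 1$; better, I can iterate the trick used in Lemma~\ref{lem:small-large}: first get $m \le c\log s$, then take logs to get $\log m \le (1-\epsilon)\log n + O(1)$, plug back to get $c\log s \ge m(\epsilon\log n - O(1))$, and finally, using $\frac{\epsilon}{2}\log n \ge O(1)$ (which needs $\log n \ge 8/\epsilon^2$, hence the strengthened hypothesis on $\log n$ compared to Lemma~\ref{lem:small-large}), conclude $m \le \frac{33}{\epsilon}\frac{\log s}{\log n}$. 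Rearranging, $p_k = n - m \ge n - \frac{33}{\epsilon}\frac{\log s}{\log n}$, as required.

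The routine parts are the chains of elementary inequalities, essentially copies of those already appearing in Lemma~\ref{lem:small-large}. The one genuinely new point — and the step I expect to need the most care — is controlling the error term $\log k!$: one must verify that the bound on $k$ from Lemma~\ref{lem:small-large}, together with the hypothesis $\log n \ge 8/\epsilon^2$, really does force $\log k!$ to be negligible compared to $\log s$, so that the constant that ultimately appears is at most $33/\epsilon$. This is where the slightly stronger lower bound on $\log n$ (as opposed to the $4/\epsilon$ of Lemma~\ref{lem:small-large}) gets used, and tracking the constants through the two iterations of the ``take logs and resubstitute'' trick is the only delicate bookkeeping in the argument.
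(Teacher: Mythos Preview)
Your approach is the natural first attempt, but it has two gaps that prevent it from reaching the stated bound.

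First, the estimate $\log k! \le c\log s$ with ``something like $c=2$'' is wrong. From Lemma~\ref{lem:small-large} you have $k \le \frac{8}{\epsilon}\frac{\log s}{\log n}$, and since $\log s$ can be as large as $n^{1-\epsilon}$, $\log k$ can be as large as $(1-\epsilon)\log n + O(\log(1/\epsilon))$. Hence $k\log k$ can be of order $\frac{8(1-\epsilon)}{\epsilon}\log s$; that is, $c$ is of order $1/\epsilon$, not an absolute constant. Carrying this $c$ through your two iterations yields $m \le O(\frac{1}{\epsilon^2})\frac{\log s}{\log n}$ rather than $\frac{33}{\epsilon}\frac{\log s}{\log n}$. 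This is not a bookkeeping slip: the extra $1/\epsilon$ is intrinsic to the route through $\binom{n}{p_k}$ and $k!$.

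Second, your justification of $m \le n/2$ is incorrect: $k \le n/2$ bounds the \emph{number} of non-$P$ parts, not the size of their union. For instance with three parts of size $n/3$ each, $k=3 \le n/2$ but $m = 2n/3 > n/2$. Without $m \le n/2$ your first iteration step $m \le c\log s$ fails.

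The paper sidesteps both problems with a different redistribution. Rather than lumping all non-largest parts into one block of size $m$, it lets $p = p_{k-1}$ be the \emph{second}-largest part and redistributes into $\ell$ blocks of size $p$ plus singletons, giving $s \ge \frac{n!}{(p!)^{\ell}(n - \ell(p-1)+1)!}$, where the $1/k!$ has been absorbed into the last factorial. Applying Stirling's approximation and iterating on $p$ (which automatically satisfies $p \le n/2$, since the two largest parts together have size at most $n$) yields a bound on $\ell p$, and hence on $S = n - p_k$, with the correct $1/\epsilon$ dependence; a short case analysis on the size of $p$ then gives the constant $33$. For the downstream application (the Support Theorem, where $\epsilon \ge 2/3$) your weaker $O(1/\epsilon^2)$ bound would in fact suffice with a different absolute constant, but it does not prove the lemma as stated.
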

\begin{proof}
  The initial setup is the same as in the proof of
  Lemma~\ref{lem:small-large}.   Let $p_1 \le p_2
  \le \ldots \le p_k$ be the respective sizes of the parts in $\p$ and
  let $S \de \sum_{i = 1}^{k-1}p_i$.  Our aim is to show that $S \leq
  \frac{33}{\epsilon} \cdot \frac{\log s}{\log n}$. 
  Denote the size of the second largest part by $p \de p_{k-1}$.  We
  have
  \begin{equation} 
    \label{eq:rk:1}
    \hfill s = \frac{n!}{|\sstabn{\p}|} \ge \frac{1}{k!}\frac{n!}{p_1!p_2!
      \cdots p_k!}.\hfill
  \end{equation}
  Let $\ell \in \NN$ be such that
  \begin{equation}
    \label{eq:rk:2}
    \hfill \ell \le k-1, \text{ and } k-1 + \ell (p-1) \le \sum_{i =
      1}^{k-1}p_i = S. \hfill
  \end{equation}
   Provided $\p$ contains more than one part both $\ell \in \set{0,1}$
  satisfy Equation~\eqref{eq:rk:2}.   We may assume that $p > 1$
  otherwise $S \le \npart{\p}$ and we are done by
  Lemma~\ref{lem:small-large}.
 For any $\ell \ge 1$ satisfying
  Equation~\eqref{eq:rk:2}, redistributing weight from a $p_i$ to $p_j$
  with $i < j$ in a way similar to the proof of
  Lemma~\ref{lem:small-large} gives the following,
  \begin{align}
    s &\ge \frac{1}{k!}\frac{n!}{ \underbrace{1!\cdots1!}_{k-1-\ell
        \text{~times}} \underbrace{p!\cdots p!}_{\ell
        \text{~times}}(n-(k-1-\ell + \ell p))!} \ge \frac{n!}{
      (p!)^\ell (n-\ell(p-1) + 1)!} \notag \\
    &\ge \frac{(\frac{n}{e})^n}{(e \sqrt{p}(\frac{p}{e})^p)^\ell (e
      \sqrt{n}
      (\frac{n}{e})^{n-\ell(p-1)+1})}=\frac{n^{\ell(p-1)-3/2}}{p^{\ell(p+1/2)}}
    \underbrace{\frac{e^{\ell p} e^{n-\ell(p-1)+1}}{e^{\ell+1}
        e^n}}_{=1} \notag
  \end{align}
  where the third inequality follows from Stirling's Formula, i.e.,
  that for any $x \ge 2$, $(\frac{x}{e})^x \le x! \le \sqrt{2\pi
    x}(\frac{x}{e})^x e^{\frac{1}{12x}} \le e \sqrt{x}
  (\frac{x}{e})^x$.  Take the logarithm of the above equation to
  determine that
  \begin{align}
    \label{eq:rk:3}
    \log s &\ge \left[\ell(p-1)-\frac{3}{2}\right] \log n - \ell \left(p
    +\frac{1}{2}\right) \log p \\ &= \ell \left(p + \frac{1}{2}\right) (\log n
    - \log p) - \frac{3}{2}(\ell-1)\log n, \notag \\
    \label{eq:rk:4}
    \frac{5}{2} \log s &\ge \ell p (\log n - \log p) -
    \frac{3}{2}\ell \log n \ge p (\log n - \log p) -
    \frac{3}{2}\log n, \\
    \label{eq:rk:5}
    4\log s &\ge  p (\log n - \log p) \ge p, 
  \end{align}
  where Equation~\eqref{eq:rk:4} follows from Equation~\eqref{eq:rk:3}
  since $s \ge n$ and $\ell \ge 1$, and Equation~\eqref{eq:rk:5}
  follows from Equation~\eqref{eq:rk:4} because $p$ is the size of the
  second largest part of $\p$ and hence $p \le \frac{n}{2}$ and $(\log
  n - \log p) \ge 1$.  Take the logarithm of Equation~\ref{eq:rk:5}
  and use the bound on $s$ to determine that $\log p \le \log \log s +
  2 \le (1-\epsilon) \log n + 2$.  Plug this bound into
  Equation~\eqref{eq:rk:4} to get that $\frac{5}{2} \log s \ge
  \ell p (\epsilon \log n - 2) - \frac{3}{2} \ell \log n$.  Using
  $\frac{\epsilon}{2} \log n \ge 2$ and dividing by $\log n$,
  $\frac{5}{2}\frac{\log s}{\log n} \ge \ell (\frac{p\epsilon}{2} -
  \frac{3}{2})$.

    If $\frac{p \epsilon}{4} \ge \frac{3}{2}$, then $\frac{10}{\epsilon}
  \frac{\log s}{\log n} \ge \ell p$.  For the largest value of $\ell$,
  $k-1 + (\ell+1)(p-1) \ge S$, and hence $k-1 + 2\ell p
  \ge S$.  Thus  Lemma~\ref{lem:small-large} implies that
  $S \le \frac{8 + 20}{\epsilon}\frac{\log s}{\log n}$.
  Otherwise $p < \frac{6}{\epsilon}$ and hence $\log p \le 3 - \log
  \epsilon$.  Plugging this into Equation~\eqref{eq:rk:4} and using
  $\log n \ge \frac{8}{\epsilon^2} \ge 2(3 - \log \epsilon) \ge 2\log
  p$ gives $\frac{5}{2} \log s \ge \ell \frac{p-3}{2} \log n.$ If $p
  \ge 5$, then recover $\frac{25}{2\epsilon}\frac{\log s}{\log n} \ge
  \ell p$ and conclude $S \le \frac{8 +
    25}{\epsilon}\frac{\log s}{\log n}$ analogously to before.  Otherwise
  $p \le 4$, and $S \le p (k-1) \le 4 \cdot
  \frac{8}{\epsilon}\frac{\log s}{\log n}$ by
   Lemma~\ref{lem:small-large}.  Since each case concluded that
  $S \le \frac{33}{\epsilon}\frac{\log s}{\log n}$ the
  proof is complete.
\end{proof}

\subsection{Support Theorem}\label{subsec:suppthm}

Here we leverage the two combinatorial lemmas of the last subsection 
to show that in symmetric circuits of polynomial size, each gate has a
small supporting partition, and hence has a small support.

Let $g$ be a gate in a symmetric circuit $C$ over universe $U$, from
now on, we abuse notation and write $\SP(g)$ for $\SP(\stabu{g})$.
Note that, if $P$ is any part in $\SP(g)$, then $U\setminus P$ is a
support of $g$ in the sense of Definition~\ref{def:support}.  We write
$\spart{\SP(g)}$ to denote the smallest value of $|U\setminus P|$ over
all parts $P$ in $\SP(g)$.  Also, let $\SP(C)$ denote the maximum of
$\spart{\SP(g)}$ over all gates $g$ in $C$.

By the orbit-stabiliser theorem, $|\orb(g)| =
\ind{\Sym{U}}{\stabu{g}}$.  By Lemma~\ref{lem:supp-group}, we have
that $\stabu{g} \sse \sstabu{\SP(g)}$ and thus, if $s$ is an upper
bound on $|\orb(g)|$, $s \ge \ind{\Sym{U}}{\stabu{g}} \ge
\ind{\Sym{U}}{\sstabu{\SP(g)}}$.  Then, by Lemma~\ref{lem:largepart},
$g$ has a support of small size provided that (i) $s$ is
sub-exponential, and (ii) $\SP(g)$ has fewer than $n/2$ parts.  Thus,
to prove our main technical theorem, which formalises
Theorem~\ref{thm:intro-support} from the introduction, it suffices to
show that if $s$ is sufficiently sub-exponential, (ii) holds.


\begin{theorem}[Support Theorem]
  \label{thm:support} 
  For any $\epsilon$ and $n$ with $\frac{2}{3} \le \epsilon \le 1$ and
  $n > 2^{\frac{56}{\epsilon^2}}$, if $C$ is a rigid symmetric circuit over
  universe $U$ with $|U| = n$ and $s \de \max_{g \in C} |\orb(g)| \le
  2^{n^{1-\epsilon}}$, then, $\SP(C) \le \frac{33}{\epsilon}
  \frac{\log s}{\log n}.$
\end{theorem}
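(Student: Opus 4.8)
The two combinatorial lemmas already do almost all the work: once we know $\SP(g)$ has fewer than $n/2$ parts, Lemma~\ref{lem:largepart} (applied with the permutation group $\stabu{g}$, whose setwise stabiliser index is at most $s$) gives that $\SP(g)$ has a part of size at least $n - \frac{33}{\epsilon}\frac{\log s}{\log n}$, hence $\spart{\SP(g)} \le \frac{33}{\epsilon}\frac{\log s}{\log n}$, which is exactly the bound claimed. So the entire content of the theorem is the deduction that, for every gate $g$, the coarsest supporting partition $\SP(g)$ has $\npart{\SP(g)} < n/2$. I would isolate this as the key claim and prove it by contradiction: suppose $g$ is a gate (pick one, if several exist, say one minimising $\npart{\SP(g)}$, or just argue uniformly) with $\npart{\SP(g)} \ge n/2$.

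\textbf{The main step — ruling out many-part supporting partitions.} The plan is to exploit the circuit structure, specifically the fact that $g$ has children and the children's supporting partitions interact with $g$'s. The intuition from Otto's argument: if $g$ has a supporting partition with $\ge n/2$ parts, then by Lemma~\ref{lem:small-large} applied to $\stabu{g}$ we already know (since $s \le 2^{n^{1-\epsilon}}$ and $n$ is large) that $\min\{\npart{\SP(g)}, n - \npart{\SP(g)}\}$ is small — so $\npart{\SP(g)} \ge n/2$ forces $\npart{\SP(g)} \ge n - \frac{8}{\epsilon}\frac{\log s}{\log n}$, i.e. $\SP(g)$ is \emph{almost all singletons}. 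Now I want a contradiction from the circuit being symmetric and having only $s$-bounded orbits. The idea is: gate $g$ has some set of children, and $\stabu{g}$ must permute the multiset of children's orbits; more usefully, consider the value/behaviour of $g$. A gate with a near-discrete supporting partition is "moved by almost every permutation", so its orbit is enormous — but here's the subtlety: a single gate whose supporting partition is $n-k$ singletons plus one part of size $k$ has orbit bounded by roughly $n^{n-k} \cdot (\text{something})$, which can be far larger than $2^{n^{1-\epsilon}}$ when $k$ is small. Wait — that's the wrong direction; that's consistent with large orbit being forbidden. Indeed: if $\SP(g)$ has $n-k$ singletons with $k$ small, then $\sstabu{\SP(g)}$ has index in $\Sym{U}$ at least $\frac{n!}{k! \cdot (n-k)!}$-ish $= \binom{n}{k}$, which is at least $(n/k)^k$; and the \emph{pointwise} stabiliser, which is contained in $\stabu{g}$, is $\Sym{P}$ for the big part $P$ — no wait, we need $\stabu{g} \supseteq \stabu{\SP(g)}$, and $|\stabu{\SP(g)}| = k!$, so $|\orb(g)| = \ind{\Sym{U}}{\stabu{g}} \le \ind{\Sym{U}}{\stabu{\SP(g)}} = n!/k!$. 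That's an \emph{upper} bound on orbit size, not helpful directly. So the real work is extracting a \emph{lower} bound on $|\orb(g)|$ from $\npart{\SP(g)}$ being large, contradicting $s$-boundedness.

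\textbf{Getting the lower bound on orbit size.} For this I would use the circuit structure. Let me take the gate $g$ and look at its children. Key fact: $\SP(g)$ being near-discrete means $\stabu{g}$ is "small" (close to $\Sym{P}$ for a large part $P$ with tiny complement $T = U \setminus P$). But crucially, $\stabu{g}$ must stabilise (setwise) the set of children of $g$, and hence the "support structure" of the children must be compatible. I would argue: since wires are preserved by automorphisms, for each child $h$ of $g$, $\sigma h$ is a child of $\sigma g = g$ for $\sigma \in \stabu{g}$. Now consider $\SP(h)$ for children $h$. I would show that $\SP(g)$ is a "common coarsening" related to the $\SP(h)$'s via $\E$ — more precisely, that some coarsening of the join of the children's supporting partitions must support $g$ (because if $\sigma$ fixes each child, it fixes $g$... but that's not quite right either, since $\sigma$ could permute the children among themselves). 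The honest statement is: the partition $\E$ of all the $\SP(h)$ over children $h$, call it $\mathcal{Q}$, has the property that $\stabu{\mathcal{Q}}$ fixes every child of $g$ individually, hence fixes $g$, so $\mathcal{Q}$ supports $g$ and thus $\SP(g) \spe \mathcal{Q}$. The contradiction must come from a \emph{counting} argument on how many gates there can be: if every gate had a near-discrete supporting partition, then following children down to the input gates — whose supports are bounded by the arity of $\tau$ (constant!) — forces, by an inductive/Helly-type argument on the number of "non-singleton" coordinates, a bound $\npart{\SP(g)} \le n - O(1)$ that propagates up only if the fan-in and circuit size are controlled. I expect \textbf{this propagation argument — showing $\npart{\SP(g)} < n/2$ cannot survive the combination of bounded input supports, the $\E$-closure property, and the sub-exponential bound on $s$ — to be the main obstacle}, and it is presumably where the "involved combinatorial argument" the introduction promises actually lives; the clean two-lemma finish above is the easy part. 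A secondary subtlety is checking the numeric hypotheses line up: one must verify that $\frac{2}{3} \le \epsilon \le 1$ and $n > 2^{56/\epsilon^2}$ imply the hypotheses $\log n \ge \frac{8}{\epsilon^2}$ of Lemma~\ref{lem:largepart} and $\log n \ge \frac{4}{\epsilon}$ of Lemma~\ref{lem:small-large} (which they do, with room to spare, the extra slack absorbing the constants generated in the propagation step).
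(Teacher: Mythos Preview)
Your overall strategy is right: the theorem reduces to showing $\npart{\SP(g)} \le n/2$ for every gate, after which Lemma~\ref{lem:largepart} finishes. You also correctly observe that Lemma~\ref{lem:small-large} forces any bad gate to have $\SP(g)$ with at most $k' := \lceil \frac{8}{\epsilon}\frac{\log s}{\log n}\rceil$ non-singleton parts. But your proposed mechanism for the contradiction has a genuine gap.

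The claim that $\mathcal{Q} := \E\bigl(\{\SP(h) : h \text{ a child of } g\}\bigr)$ supports $g$ is false as stated. Since $\E$ produces a \emph{coarser} partition than each $\SP(h)$, you get $\stabu{\SP(h)} \subseteq \stabu{\mathcal{Q}}$, not the reverse; a permutation in $\stabu{\mathcal{Q}}$ need not fix any individual child. So no support for $g$ is obtained this way, and the downward ``propagation from input gates'' idea does not get off the ground. More broadly, there is no Helly-type coarsening argument here that works: the supporting partitions of the children can be individually small while $\SP(g)$ is near-discrete, because fixing each child separately is much stronger than fixing their common coarsening.

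The paper's actual argument is different in kind. One takes a \emph{topologically first} internal gate $g$ with $\npart{\SP(g)} > n/2$, so every child $h$ already satisfies $\npart{\SP(h)} \le n/2$ and hence (by Lemma~\ref{lem:largepart}) has $\spart{\SP(h)} \le \frac{33}{\epsilon}\frac{\log s}{\log n}$. Now partition $U$ into roughly $n/(k'+2)$ disjoint blocks $S_i$ of size $k'+2$. Because $\SP(g)$ has so many parts, for each block there is a permutation $\sigma_i$ supported entirely inside $S_i$ that moves $g$; by rigidity this means $\sigma_i$ moves some child $h_i$ \emph{out of} the child set $H$. The crucial combinatorial step is to show that many of these pairs $(h_i,\sigma_i)$ can be chosen \emph{independently}: build a graph on the blocks with an edge $S_i \to S_j$ whenever the small support of $h_i$ meets $S_j$, observe its out-degree is bounded by $\spart{\SP(h_i)} = O(k')$, and greedily extract an independent set of size $\Omega(n/k'^2)$. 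Any subset of these independent pairs, composed, yields a distinct image of $H$, so $|\orb(g)| \ge |\orb(H)| \ge 2^{\Omega(n/k'^2)}$, which for $\epsilon \ge 2/3$ exceeds $2^{n^{1-\epsilon}}$ --- the desired contradiction. The idea you were missing is this lower bound on $|\orb(g)|$ via many \emph{disjointly supported} automorphisms each dislodging a child, together with the independent-set extraction that makes them composable.
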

\begin{proof}
  Suppose $1 \le s < n$.  $C$ cannot have relational inputs, because
  each relational gate must have an orbit of size at least $n$, so
  each gate of $C$ computes a constant Boolean function.  The support
  of every gate $g$ in $C$ must be $\set{U}$, and hence $0 =
  \spart{\SP(g)} = \SP(C)$.  Therefore assume $s \ge n$.  
  
  To conclude the theorem from Lemma~\ref{lem:largepart} it
  suffices to argue that for all gates $g$, $\npart{\SP(g)} \le
  \frac{n}{2}$.  Suppose $g$ is a constant gate, then, because $g$ is
  the only gate with its label, it is fixed under all permutations and
  hence $\npart{\SP(g)} = \npart{\set{U}} = 1 < 
  \frac{n}{2}$.  If $g$ is a relational gate, then it is fixed by any
  permutation that fixes all elements appearing in $\Lambda(g)$ and moved by
  all others.  Thus, $\SP(g)$ must contain singleton parts for each
  element of $U$ in 
  $\Lambda(g)$ and a part containing everything else.  Thus, if
  $\npart{\SP(g)} > \frac{n}{2}$, $\SP(g)$ contains at least
  $\frac{n}{2}$ singleton parts, there is a contradiction using the
  bounds on $s, n,$ and $\epsilon$, $s \ge |\orb(g)| \ge
  \spart{\SP(g)}! \cdot \binom{n}{\spart{\SP(g)}} \ge
  \left\lfloor\frac{n}{2}\right\rfloor!  \ge
  2^{\lfloor\frac{n}{4}\rfloor} > 2^{n^{1-\epsilon}}.$

  It remains to consider internal gates. For the sake of contradiction
  let $g$ be a topologically first internal gate such that $\SP(g)$ has more
  than $\frac{n}{2}$ parts.  Lemma~\ref{lem:small-large} implies,
  along with the assumptions on $s, n,$ and $\epsilon$, that
  $n-\npart{\SP(g)} \le k' \de \left\lceil\frac{8}{\epsilon}\frac{\log
    s}{\log n}\right\rceil \le \frac{1}{4} n^{1 - \epsilon} <
  \frac{n}{2}.$

  Let $H$ denote the children of $g$.  Because $g$ is a topologically
  first gate with $\npart{\SP(g)} > \frac{n}{2}$, for all $h \in H$,
  $\SP(h)$ has at most $\frac{n}{2}$ parts.  As before, we
  argue a contradiction with the upper bound on $s$.  This done by
  demonstrating that there is a set of gate-automorphism pairs $S =
  \condset{(h,\sigma)}{h \in H, \sigma \in \Sym{U}}$ that are: (i) \emph{useful} -- the automorphism
  moves the gate out of the set of $g$'s children, i.e., $\sigma h
  \not\in H$, and (ii) \emph{independent} -- each child and its image
  under the automorphism are fixed points of the other automorphisms
  in the set, i.e., for all $(h,\sigma), (h', \sigma') \in S$, $\sigma' h =
  h$ and $\sigma' \sigma h = \sigma h$.  Note that sets which are useful and
  independent contain tuples whose gate and automorphism parts are all
  distinct.  The set $S$ describes elements in the orbit of $H$ with
  respect to $\Sym{U}$.

  \begin{claim}
    \label{claim:p}
    Let $S$ be useful and independent, then $|\Orb{}{H}| \ge
    2^{|S|}$.
  \end{claim}
  \begin{proof}
    Let $R$ be any subset of $S$.  Derive an automorphism from $R$:
    $\sigma_R \de \prod_{(h,\sigma) \in R} \sigma$ (since automorphisms
    need not commute fix an arbitrary ordering of $S$).

    Let $R$ and $Q$ be distinct subsets of $S$ where without loss of
    generality $|R| \ge |Q|$.  Pick any $(h,\sigma) \in R \bs Q \neq
    \es$.  Because $S$ is independent $\sigma_R h = \sigma h$ and $\sigma_Q\sigma
    h = \sigma h$.  Since $S$ is useful, $\sigma h \not\in H$.  Thus $\sigma h
    \in \sigma_R H$, but $\sigma h \not\in \sigma_Q H$.  Hence $\sigma_R H \neq
    \sigma_Q H$.  Therefore each subset of $S$ can be identified with a distinct element
    in $\orb(H)$ and hence $|\orb(H)| \ge 2^{|S|}$.
  \end{proof}

  Thus to reach a contradiction it suffices to construct a
  sufficiently large set $S$ of gate-automorphism pairs.  To this end, divide $U$
  into $\lfloor\frac{|U|}{k'+2}\rfloor$ disjoint sets $S_i$ of size
  $k'+2$ and ignore the elements left over.  Observe that for each $i$
  there is a permutation $\sigma_i$ which fixes $U \bs S_i$ but
  $\sigma_i$ moves $g$, because otherwise the supporting partition of $g$
  could be smaller ($n - (k'+2)+1$).  Since $g$ is moved by each
  $\sigma_i$ and $C$ is rigid, there must be an associated child $h_i \in H$
  with $\sigma_i h_i \not\in H$.  Thus let
  $(h_i,\sigma_i)$ be the gate-automorphism pair for $S_i$,
  these pairs are \emph{useful}.  Let $Q_i$ be the union of all but
  the largest part of $\SP(h_i)$.  Observe that for any $\sigma$ which
  fixes $Q_i$ pointwise $\sigma$ also fixes both $h_i$ and
  $\sigma_i h_i$, simply by the definition of support.

  Define a directed graph $K$ on the sets $S_i$ as follows.  Include
  an edge from $S_i$ to $S_j$, with $i \neq j$, if $Q_i \cap S_j \neq
  \es$.  An edge in $K$ indicates a potential lack of independence
  between $(h_i,\sigma_i)$ and $(h_j,\sigma_j)$, and on the other hand if
  there are no edges between $S_i$ and $S_j$, the associated pairs are
  independent.  Thus it remains to argue that $K$ has a large
  independent set.  This is possible because the out-degree of $S_i$
  in $K$ is bounded by $|Q_i| = \spart{\SP(h_i)} \le
  \frac{33}{\epsilon}\frac{\log s}{\log n}$ as the sets $S_i$
  are disjoint and Lemma~\ref{lem:largepart} can be
  applied to $h_i$.  Thus the average total degree (in-degree +
  out-degree) of $K$ is at most $9k'$.  Greedily select a maximal
  independent set in $K$ by repeatedly selecting the $S_i$ with the
  lowest total degree and eliminating it and its neighbours.  This
  action does not effect the bound on the average total degree of $K$
  and hence determines an independent set $I$ in $K$ of size at least
  $$\frac{\lfloor\frac{|U|}{k'+2}\rfloor}{9 k'+1} \ge
  \frac{n-(k'+2)}{(9k'+1)(k'+2)} \ge \frac{\frac{n}{2}-1}{9k'^2+10k'+2}
  \ge \frac{\frac{7}{16}n}{9k'^2+10k'+2} \ge \frac{n}{(7 k')^2}$$ where
  the first inequality follows by expanding the floored expression,
  the second follows because $k' < \frac{n}{2}$, the third follows
  from the lower bound on $n$, and the last follows because $k' \ge 1$
  as it is the ceiling of a positive non-zero quantity by definition.
 
  Take $S \de \condset{(h_i,\sigma_i)}{S_i \in I}$.  By the argument
  above $S$ is useful and independent.  By Claim~\ref{claim:p},
  conclude that $s \ge |\orb(g)| \ge |\orb(H)| \ge 2^{|S|} \ge
  2^{\frac{n}{(7k')^2}}.$ For $\epsilon \ge \frac{2}{3}$, $s \le
  2^{n^{1-\epsilon}}$, and $\frac{\epsilon}{56} \log n > 1$ the
  following is a contradiction
  $\log s \ge n \cdot (\frac{56}{\epsilon}\frac{\log s}{\log
        n})^{-2} > n \cdot (n^{1-\epsilon})^{-2} =
  n^{2\epsilon-1} \ge n^{1-\epsilon}.$
  Thus $\npart{\SP(g)}
  \le \frac{n}{2}$ for all $g \in C$ and the proof is complete by
  Lemma~\ref{lem:largepart}. 
\end{proof}

Observe that when $s$ is polynomial in $n$ the support of a rigid symmetric
circuit family is asymptotically constant.  This is the case for
polynomial-size families.

\begin{corollary}
  \label{cor:support}
  Let $\C$ be a polynomial-size rigid symmetric circuit family, then
  $\SP(\mathcal{C}) = O(1)$.
\end{corollary}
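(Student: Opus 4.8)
The plan is to derive the corollary directly from the Support Theorem (Theorem~\ref{thm:support}) by a straightforward instantiation of parameters. Suppose $\C = (C_n)_{n \in \NN}$ is a polynomial-size rigid symmetric circuit family, so there is a constant $c$ with $|C_n| \le n^c$ for all sufficiently large $n$. The orbit of any gate is a subset of the gates, so $s_n \de \max_{g \in C_n} |\orb(g)| \le |C_n| \le n^c$.

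The key observation is that a polynomial bound on $s_n$ is comfortably sub-exponential. First I would fix $\epsilon = \frac{2}{3}$ (any value in $[\frac{2}{3},1]$ works; the precise choice only affects the constant). Then for all $n$ large enough we have both $n > 2^{56/\epsilon^2} = 2^{126}$ and $s_n \le n^c \le 2^{n^{1/3}} = 2^{n^{1-\epsilon}}$, since $n^c$ grows only polynomially while $2^{n^{1/3}}$ grows faster than any polynomial. Hence the hypotheses of Theorem~\ref{thm:support} are satisfied for all sufficiently large $n$, giving $\SP(C_n) \le \frac{33}{\epsilon}\frac{\log s_n}{\log n} \le \frac{99}{2} \cdot \frac{c \log n}{\log n} = \frac{99 c}{2}$, a constant independent of $n$. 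For the finitely many small $n$ not covered by the theorem, $\SP(C_n)$ is trivially bounded (it is at most $n$, or one can simply absorb these finitely many values into the big-$O$ constant). Therefore $\SP(\C) = O(1)$.

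There is essentially no obstacle here: the entire content has been pushed into Theorem~\ref{thm:support}, and the corollary is just the remark that $\poly(n)$ satisfies the theorem's size bound and yields a constant because $\log(\poly(n))/\log n$ is bounded. The only mild care needed is to note that the Support Theorem carries a lower bound requirement on $n$, so the statement should be read asymptotically — the bound $\SP(\C) = O(1)$ holds with a constant that may need to account for the excluded small values of $n$, which is exactly what the $O(1)$ notation permits.
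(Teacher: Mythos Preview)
Your proposal is correct and matches the paper's approach exactly: the paper treats this corollary as an immediate observation, noting just before it that ``when $s$ is polynomial in $n$ the support of a rigid symmetric circuit family is asymptotically constant,'' which is precisely the instantiation of Theorem~\ref{thm:support} you spell out. Your version is simply more explicit about the choice of $\epsilon$ and the handling of small $n$, but there is no difference in substance.
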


\section{Translating Symmetric Circuits to Formulas}
\label{sec:sym-def}

In this section, we deploy the support theorem to show that $\PT$-uniform
families of symmetric circuits can be translated into
formulas of fixed-point logic.  As a first step, we argue in
Section~\ref{subsec:rigid} that we can restrict our attention to rigid circuits, by
showing that every symmetric circuit can be
converted, in polynomial time, into an equivalent rigid symmetric
circuit.  In Section~\ref{subsec:supports} we show that there are
polynomial-time algorithms that will determine whether a circuit is
symmetric and, if so, compute for every gate its coarsest supporting
partition and therefore its canonical support.  In
Section~\ref{subsec:evaluate} we give an inductive construction of
a relation that associates to each gate $g$ of $C$ a set of tuples
that when assigned to the support of $g$ result in $g$ being evaluated
to true.  This construction is turned into a definition in fixed-point
logic in Section~\ref{subsec:formulas}.

\subsection{Rigid Circuits}\label{subsec:rigid}

We first argue that rigid circuits uniquely induce automorphisms.

\begin{proof}[Proof of Proposition~\ref{prop:rigid}]
  Let $\sigma \in \Sym{U}$ induce the automorphisms $\pi, \pi'$ of
  $C$.  We show $\pi g = \pi' g$ for all gates $g$ in $C$, and hence
  $\pi = \pi'$.

  Observe that if $g$ is an output gate, the image of $g$ under any
  automorphism induced by $\sigma$ must be
  $\Omega(\sigma\Omega\inv(g))$, because $\Omega$ is a function, and
  hence $\pi g = \pi' g$ is unique and completely determined by $\sigma$.
  Therefore assume that $g$ is not an output gate.  We proceed by
  induction on the height of $g$ to show that $\pi g = \pi' g$.

  In the base case $g$ is an input gate.  If $g$ is a constant gate,
  $g$ is the only constant gate of its type and hence all
  automorphisms of $C$ must fix it.  If $g$ is a relational gate, $g$
  is the only relational gate with its type $\Sigma(g)$ and label
  $\Lambda(g)$ and it must map to the similarly unique gate with type
  $\Sigma(g)$ and tuple $\sigma \Lambda(g)$ and hence $\pi g = \pi
  g'$.

  In the induction step $g$ is an internal gate.  By rigidity of $C$,
  $g$ is unique for its children and type.  Moreover, by induction the
  children of $g$ map in the same way under $\pi$ and $\pi'$, and
  hence the image of $g$ must be the same in both automorphisms.  Thus
  $\pi g = \pi' g$ for all gates of $C$.
\end{proof}

To see that any symmetric circuit can be transformed in polynomial
time into an equivalent rigid symmetric circuit, observe that we can proceed
inductively from the input gates, identifying gates whenever they
have the same label and the same set of children.  This allows us to
establish the following lemma.
\begin{lemma}
  \label{lem:alg-rigid}
  Let $C = \tup{G,W,\Omega,\Sigma,\Lambda}$ be a $(\B,\tau)$-circuit
  with universe $U$.  There is a
  deterministic algorithm which runs in time $\poly(|C|)$ and outputs
  a rigid $(\B,\tau)$-circuit $C'$ with gates $G' = G$ such that for
  any $g \in G$, any input $\tau$-structure $\str{A}$ and any
  bijection $\gamma$ from $A$ to $U$, $C[\gamma\str{A}](g) =
  C'[\gamma\str{A}](g)$. 
Moreover, $C'$ is symmetric if $C$ is.
\end{lemma}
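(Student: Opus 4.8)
The plan is to give an explicit algorithm that iteratively merges "duplicate" gates until no pair violating rigidity remains, and then argue both correctness (the evaluation of every original gate is preserved) and that symmetry is inherited. First I would process the gates of $C$ in topological order, maintaining a current circuit that agrees with $C$ on all gate values. At each gate $g$, having already made the sub-circuit below $g$ rigid, I check whether there is another already-processed gate $g'$ with $\Sigma(g)=\Sigma(g')$, $\Lambda(g)=\Lambda(g')$, $\Omega\inv(g)=\Omega\inv(g')$, and the same set of children; if so, I redirect all wires out of $g$ to $g'$ (and leave $g$ in $G'$ as an isolated, now irrelevant gate, so that $G'=G$ as required by the statement, or more carefully I keep $g$ but note that it is equivalent to $g'$). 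Since children are compared only after being processed, and since two gates with the same label and the same children compute the same Boolean value on every input, the merging step preserves $C[\gamma\str{A}](g)=C'[\gamma\str{A}](g)$ for every $g$; this is a straightforward induction on height. Termination and the $\poly(|C|)$ bound are immediate because each gate is processed once and the duplicate-check is polynomial (comparing child sets and labels over at most $|G|$ candidates).

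Next I would verify that the output $C'$ is rigid: after processing, no two surviving gates can share label, output-preimage, and child set, because the algorithm would have merged them; one must be slightly careful that a later merge does not re-create a violation among earlier gates, but since merging only redirects wires \emph{into} a canonical representative and never changes the child set of any surviving gate, the invariant "no two processed surviving gates are duplicates" is maintained throughout. The key point is to fix, for each equivalence class of duplicate gates, a canonical representative chosen by a rule that is itself permutation-invariant — for instance, the gate that is least in some order definable purely from the circuit structure already made rigid — or, more simply, to observe that the merging is confluent so the result does not depend on the order within a topological layer.

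For the "moreover" clause, suppose $C$ is symmetric, and let $\sigma\in\Sym{U}$. Then $\sigma$ induces an automorphism $\pi$ of $C$. I would show $\pi$ descends to an automorphism of $C'$: since $\pi$ preserves $\Sigma$, $\Lambda$ (up to the $\sigma$-action), $\Omega$-preimages, and the wiring relation, it maps duplicate pairs to duplicate pairs, hence it permutes the equivalence classes used in the merge. Therefore, provided the choice of canonical representative is made $\Aut{C}$-equivariantly (which is why confluence / a symmetric selection rule matters), $\pi$ restricts to a bijection on the surviving gates of $C'$ that still respects wiring, labels, and outputs — i.e., $\sigma$ induces an automorphism of $C'$. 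The main obstacle here, and the one point needing care, is exactly this equivariance of the merge: a naive ordering of gates could break symmetry, so the algorithm must merge \emph{all} members of a duplicate class simultaneously into a representative that is determined by the isomorphism type of its sub-circuit (equivalently, merge by confluence), so that the construction commutes with every $\pi\in\Aut{C}$. Once that is set up, checking the four clauses of the induced-automorphism definition for $C'$ is routine, using that $C$ is rigid-after-merge so (by Proposition~\ref{prop:rigid}) the induced automorphism is in fact unique and hence the assignment $\sigma\mapsto\pi|_{C'}$ is a well-defined homomorphism.
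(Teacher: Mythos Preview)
Your approach has the right starting point---identify equivalence classes of gates with the same label, output-preimage, and child set---but the ``merge and redirect'' step cannot deliver what the lemma actually requires. The statement asks for $G'=G$ and $C'[\gamma\str A](g)=C[\gamma\str A](g)$ for \emph{every} $g\in G$, including the duplicates you want to suppress. If you redirect the outgoing wires of a duplicate $g$ to a canonical representative $g'$ but leave $g$ with its original children, then $g$ and $g'$ still have identical $\Sigma$, $\Lambda$, $\Omega^{-1}$, and child sets (rigidity is defined via incoming wires only), so $C'$ is not rigid. If instead you strip the incoming wires of $g$ to make it ``isolated'', then $g$ is an internal gate with an empty child set and its value in $C'$ is no longer the value it had in $C$. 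Either way one of the conclusions fails; ``leaving $g$ as an irrelevant isolated gate'' is not compatible with both rigidity and per-gate value preservation. Your rigidity argument speaks only of ``surviving'' gates, but the non-surviving ones are still in $G'$.

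The paper resolves this with a different local modification. Within a duplicate class $E=\{g_1,\dots,g_{|E|}\}$ (arbitrarily ordered), it keeps $g_1$ unchanged and turns $g_2,\dots,g_{|E|}$ into a chain of single-input AND gates, $g_{i+1}$ having $g_i$ as its sole child. Every $g_i$ still evaluates to the original common value, but the child sets are now pairwise distinct, so rigidity is restored without deleting gates or altering any gate's value. The outgoing wires from $E$ to each parent $f$ are redistributed among the $g_i$ so that $f$ receives the same number of inputs from $E$ as before. For the ``moreover'' clause the paper does not attempt to make the ordering equivariant; instead, given an automorphism $\pi$ of $C$ induced by $\sigma$, it builds a \emph{new} automorphism of $C'$ by mapping the $i$th gate of each class to the $i$th gate of the image class $\pi E$. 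This works because $\pi$ preserves, for each parent $f$, the count of wires from a class into $f$, and the chain structure depends only on position within the class. Your concern about choosing representatives equivariantly is thus sidestepped rather than solved.
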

  \begin{proof}
  Partition the gates $G$ into equivalence classes where gates in the
  same class have the same labels, output markings, and children.  If
  $C$ is rigid every class has size one, otherwise there is at least
  one class containing two gates.  

  Let $E$ be a minimum height equivalence class containing at least two gates. 
  Order the gates in $E$: $g_1,g_2,\ldots,g_{|E|}$.   For each gate $f \in G \bs E$, let $c_f$ denote the number of wires from $E$ to $f$, and note that $c_f \le |E|$.   For all gates in $E$ remove all outgoing wires.   For all gates $E \bs \set{g_1}$: (i) remove all input wires, and (ii) set their operation to AND.  For each $i$, $1 \le i \le |E|-1$, add a wire from $g_i$ to $g_{i+1}$.  For each $f \in G \bs E$ and $i \in [|E|]$, add a wire from $g_i$ to $f$ if $c_f \le i$.  This completes the transformation of the gates in $E$.
  
  We now argue that this does not effect the result computed at any
  gate $g$.
  First observe that no output gates appear in $E$,  because
  $\Omega$ is injective and hence each output gate must be the sole
  member of its equivalence class.  All gates in $E$ originally had
  identical sets of children and labels and hence they must have evaluated to the
  same value.  The modifications made do not change this property as
  $g_1$ computes the value it originally would have, then passes this
  value to the other gates in $E$, along a chain of single input AND
  gates.  The modifications to the outgoing wires of $E$ insure that
  each gate that originally took input from $E$ has the same number of
  inputs from $E$ (each with the same value) in the modified circuit.
  Taken together this means that the result computed at any gate in
  the modified circuit is the same as that computed at that gate in $C$.
  
  We next argue that the local modification of $E$ makes strict progress towards producing a rigid circuit $C'$.  The local modification of $E$ can only change equivalence classes above $E$ because
the changes to the output wires of $E$ are the only thing that can possibly effect other equivalence classes. After the modification all gates in $E$ must be in singleton equivalence classes because each gate in $E$ is designed to have a unique set of children.  

  Greedily applying the above local modification simultaneously to all topologically minimal non-singleton equivalence classes of $C$, until none remain, produces a rigid circuit $C'$ that computes the same query as $C$, because, as we have just argued, equivalence classes cannot grow as a result of this local modification.  Moreover, this must happen after at most $|C|$ many local modifications, because the number of equivalence classes is at most $|C|$.
 
  We now show that this transformation preserves symmetry.   Suppose $C$ is symmetric.   
  Fix any permutation $\sigma \in \Sym{U}$.   Let $\pi$ be an automorphism induced by $\sigma$ on $C$.  Observe that any induced automorphism on $C$ must map equivalence classes
  to equivalence classes because labels and children are preserved.   
  It is easy to translate $\pi$ into an induced automorphism of $C'$.
  Let $E$ and $E'$ be two equivaluence classes such that $\pi E = E'$
  where $g_1,\ldots,g_{|E|}$ and $g_1',\ldots,g_{|E'|}'$ are the
  ordering of the gates in $E$ and $E'$ in $C'$.  It can be argued by
  induction that mapping $g_i$ to $g_i'$ for all $1 \le i \le
  |E|=|E'|$ preserves all labels and wires and hence is an induced
  automorphism of $\sigma$ in $C'$.  Since $\sigma$ is arbitrary, we
  conclude that the resulting circuit is symmetric.

  The construction of equivalence classes and, indeed, the overall
  construction of $C'$ can be easily implemented in time polynomial in
  $|C|$ when given the circuit in a reasonable binary encoding.  Finally, as gates are only being rewired and relabelled, $G = G'$.
\end{proof}

\subsection{Computing Supports}\label{subsec:supports}

By Lemma~\ref{lem:alg-rigid}, we know that there is a polynomial-time
algorithm that converts a circuit into an equivalent rigid circuit
while preserving symmetry.  In this subsection we show how to, in
polynomial time, check whether the resulting circuit is symmetric, and
if it is, compute the support of each gate.  To this end we first describe an algorithm for determining induced automorphisms of a rigid circuit.

\begin{lemma} \label{lem:alg-action}
  Let $C$ be a rigid $(\B,\tau)$-circuit with universe $U$ and $\sigma
  \in \Sym{U}$.  There is a deterministic algorithm which runs in time
  $\poly(|C|)$ and outputs for each gate $g \in G$ its image under the
  automorphism $\pi$ induced by $\sigma$, if it exists.
\end{lemma}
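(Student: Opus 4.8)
We want: given a rigid circuit $C$ and $\sigma \in \Sym{U}$, compute in polynomial time the image $\pi(g)$ of every gate under the (unique, by Proposition~\ref{prop:rigid}) automorphism induced by $\sigma$, or report that no such automorphism exists.

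\textbf{Proof proposal for Lemma~\ref{lem:alg-action}.}

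The plan is to compute $\pi$ inductively from the bottom of the circuit up, mirroring the argument in the proof of Proposition~\ref{prop:rigid}, and then to verify at the end that the map produced is genuinely an induced automorphism. First I would build a lookup table that, for each gate $g'$ of $C$, records its \emph{signature} $(\Sigma(g'), \Lambda(g'), \Omega^{-1}(g'), \{h : W(h,g')\})$, where the children are encoded as a sorted list of gate indices and $\Lambda(g')$, $\Omega^{-1}(g')$ are included only when $g'$ is a relational gate, respectively an output gate. Rigidity of $C$ (Definition~\ref{def:rigid}) is precisely the statement that this signature is injective, so the table lets us recover a gate from its signature, and there is at most one candidate at every step of the construction below.

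Next I would process the gates in a topological order, so that every child of a gate is handled before the gate itself, assigning a candidate image $\pi(g)$ as follows. For a constant input gate, $\pi(g)$ is the unique constant gate of the same type (itself). For a relational input gate $g$ with $R = \Sigma(g)$, $\pi(g)$ is the gate whose signature has type $R$ and label $\sigma\Lambda(g)$; if no such gate exists, halt and report that $\sigma$ induces no automorphism. For an internal gate $g$, first form the set $\{\pi(h) : W(h,g)\}$ from the already-computed images, then set $\pi(g)$ to the gate whose signature has type $\Sigma(g)$, children-set equal to this set, and --- if $g$ is also an output gate --- $\Omega^{-1}$-value equal to $\sigma(\Omega^{-1}(g))$; again halt with failure if no matching gate is found. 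By rigidity each such step is well-defined, and the whole pass runs in time $\poly(|C|)$, since each signature has size $\poly(|C|)$ and there are $|G|$ of them (and $|U|$ is itself $\poly(|C|)$ because every element of $U$ occurs in some relational gate's label).

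Finally I would verify that the computed map $\pi : G \to G$ satisfies the four clauses in the definition of an induced automorphism --- wire-preservation, $\pi\Omega(x) = \Omega(\sigma x)$, $\Sigma(g) = \Sigma(\pi(g))$, and $\sigma\Lambda(g) = \Lambda(\pi(g))$ on relational gates --- together with injectivity of $\pi$, which, the domain and codomain being the finite set $G$, upgrades to bijectivity. Each check is a straightforward scan over the gates, wires, and output tuples of $C$ and so takes time $\poly(|C|)$. If every check passes we output $\pi$; otherwise we report that $\sigma$ induces no automorphism. For correctness: if $\sigma$ induces an automorphism $\pi_0$, the inductive argument in the proof of Proposition~\ref{prop:rigid} forces our candidate $\pi$ to equal $\pi_0$ at every gate, so verification succeeds and we output the unique induced automorphism; and if $\sigma$ induces no automorphism, then either the bottom-up pass fails to find a matching signature or the verification step detects a violated clause, so we correctly report failure.

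The one genuinely delicate point is the last step: the bottom-up construction alone does not certify that $\pi$ is an automorphism --- a map can be built successfully yet fail to preserve all wires or fail to be injective --- which is exactly why the explicit verification pass is needed, and why its soundness leans on the uniqueness guaranteed by Proposition~\ref{prop:rigid}. Some care is also required for gates that are simultaneously output gates and internal gates, since for those $\pi(g)$ must meet both the $\Omega$-constraint and the children-constraint; folding $\Omega^{-1}$ into the signature handles this uniformly. Everything else is routine bookkeeping.
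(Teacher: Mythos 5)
Your proposal is correct and follows essentially the same route as the paper: a bottom-up pass over the gates that uses rigidity to identify, for each gate, the unique candidate image matching the $\sigma$-translated label and the already-computed images of its children, halting with failure if no candidate exists, with uniqueness/correctness supplied by Proposition~\ref{prop:rigid}. The only difference is your explicit final verification pass; the paper omits it, implicitly relying on the fact that a successful bottom-up pass already yields an induced automorphism (injectivity, and hence two-sided wire preservation, follows by induction on topological order from the exact signature matching), but your extra polynomial-time check is harmless and arguably a cleaner way to justify the ``report failure'' branch.
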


\begin{proof}
  Process the gates of $C$ recursively building up a mapping $\pi$.
  Compute the mapping for the children of a gate $g$ before
  determining the mapping for $g$.  If at any point an image for $g$
  cannot be located, halt and output that there is no induced
  automorphism.

  Let $g$ be a constant gate, then $g$ is fixed under every
  automorphism.  Let $g$ be a relational gate, then there is at most
  one gate $g'$ in $C$ with $\Sigma(g) = \Sigma(g')$, $\sigma
  \Lambda(g) = \Lambda(g')$, and $\sigma \Omega\inv(g) =
  \Omega\inv(g')$.  If $g'$ exists, set $\pi g$ to $g'$, otherwise
  halt with failure.  Similarly, when $g$ is an internal gate use
  $\Lambda$, $\Omega$, and the action of $\pi$ on the children of $G$
  (via $W$) to determine a unique image of $g$, if it exists.

  By Proposition~\ref{prop:rigid} if $\sigma$ induces an automorphism
  of $C$, it is unique and will be discovered by the above algorithm.
  This algorithm clearly runs in time polynomial in $|C|$.
\end{proof}

Using the preceding lemma we can determine whether a given rigid
circuit is symmetric by computing the set of automorphisms induced by
transpositions of the universe.  If an induced automorphism fails to
exist the circuit cannot be symmetric.  Otherwise, it must be
symmetric because such transpositions generate the symmetric 
group.  If the circuit is symmetric, the coarsest supporting
partitions and orbits of each gate can be determined by examining the
transitive closure of the action of the automorphisms induced by
transpositions on the universe and the gates, respectively.

\begin{lemma}
  \label{lem:alg-sym}
  Let $C$ be a rigid $(\B,\tau)$-circuit with universe $U$.  There is a
  deterministic algorithm which runs in time $\poly(|C|)$ and decides
  whether $C$ is symmetric.  If $C$ is symmetric the algorithm also
  outputs the orbits and coarsest supporting partitions of every gate.
\end{lemma}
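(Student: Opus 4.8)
The plan is to combine the two preceding lemmas with the observation that transpositions generate $\Sym{U}$. First I would run the rigidification algorithm of Lemma~\ref{lem:alg-rigid} only as a hypothesis check — actually $C$ is already assumed rigid, so I begin directly. For each transposition $\tau = (uv)$ with $u,v \in U$ (there are $\binom{n}{2} = \poly(|C|)$ of these), invoke the algorithm of Lemma~\ref{lem:alg-action} to attempt to compute the induced automorphism $\pi_{\tau}$. If for some transposition no induced automorphism exists, then $C$ is not symmetric and the algorithm halts rejecting: this is correct because a symmetric circuit must have an induced automorphism for every permutation, in particular every transposition. Conversely, if every transposition $\tau$ yields an induced automorphism $\pi_{\tau}$, then $C$ is symmetric: given an arbitrary $\sigma \in \Sym{U}$, write $\sigma$ as a product of transpositions $\tau_1 \cdots \tau_\ell$, and observe that $\pi_{\tau_1} \cdots \pi_{\tau_\ell}$ is an automorphism of $C$ (composition of automorphisms) that is induced by $\sigma$ (the defining conditions on $W$, $\Omega$, $\Sigma$, $\Lambda$ compose correctly). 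Hence $C$ is symmetric, and the whole test runs in $\poly(|C|)$ time since it makes $\binom{n}{2}$ calls to a polynomial-time subroutine.

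For the second part, assuming $C$ has been found symmetric, I would compute orbits and coarsest supporting partitions of each gate by closure under the induced transpositions. For orbits: build a graph on the gate set $G$ with an edge between $g$ and $\pi_{\tau} g$ for every transposition $\tau$ and every gate $g$; the connected components of this graph are exactly the orbits under $\Aut{C} = \Sym{U}$, since the $\pi_{\tau}$ generate the automorphism group. This is a polynomial-time graph reachability computation. For the coarsest supporting partition $\SP(g) = \SP(\stabu{g})$ of a fixed gate $g$: I would compute it as the transitive closure of the relation on $U$ that places $u \sim v$ in the same block whenever the transposition $(uv)$ fails to fix $g$, i.e.\ whenever $\pi_{(uv)} g \neq g$. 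The claim is that the resulting partition $\p_g$ equals $\SP(g)$. One inclusion is that $\p_g$ supports $g$: any $\sigma \in \stabu{\p_g}$ can be written as a product of transpositions $(uv)$ with $u,v$ in a common block of $\p_g$ — this uses the same transposition-path argument as in the proof of Proposition~\ref{prop:intersect_gen} — and each such transposition fixes $g$ by construction, so $\sigma$ does too. The other inclusion is that $\p_g$ is as coarse as any supporting partition: if $\p$ supports $g$ then whenever $(uv)$ moves $g$ we must have $u,v$ in different blocks of $\p$ (else $(uv) \in \stabu{\p} \subseteq \stabu{g}$), hence every block of $\p_g$ lies within a block of $\p$; combined with the first inclusion and Lemma~\ref{lem:supp-gen} this forces $\p_g = \SP(g)$.

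I expect the main obstacle — or rather the main point requiring care — to be the verification that the partition computed from failing transpositions is genuinely the \emph{coarsest} supporting partition, and in particular the transposition-path argument showing $\stabu{\p_g}$ is generated by transpositions within blocks of $\p_g$; this mirrors the reasoning already carried out in Proposition~\ref{prop:intersect_gen}, so it is not a new difficulty but must be invoked correctly. The complexity bookkeeping is routine: there are $O(n^2)$ transpositions, each $\pi_{\tau}$ is computed in $\poly(|C|)$ time by Lemma~\ref{lem:alg-action}, storing all of them takes $O(n^2 |C|)$ space, and the orbit and supporting-partition closures are standard union-find / reachability computations, so the total running time is $\poly(|C|)$.
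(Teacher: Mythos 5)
Your symmetry test and orbit computation match the paper's proof: check each of the $\binom{|U|}{2}$ transpositions with the algorithm of Lemma~\ref{lem:alg-action}, use the fact that transpositions generate $\Sym{U}$ (so that induced automorphisms for all of $\Sym{U}$ are obtained by composition), and compute orbits as the closure of $\set{g}$ under the maps $\pi_{(uv)}$. The genuine problem is in the supporting-partition step, where your construction is exactly inverted. You merge $u$ and $v$ into a common block whenever $\pi_{(uv)}g\neq g$, i.e.\ whenever the transposition \emph{moves} $g$. The correct relation --- and the one the paper uses, by combining via $\E$ the partitions $\p_{(uv)}$ for which $\pi_{(uv)}$ \emph{fixes} $g$ --- is the complementary one: merge $u$ and $v$ precisely when $(uv)$ fixes $g$. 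Your own verification betrays the error: in your first inclusion the phrase ``each such transposition fixes $g$ by construction'' is false by your construction, since the pairs you merge are exactly those whose transpositions move $g$; and in the second inclusion, from ``whenever $(uv)$ moves $g$, any supporting partition must put $u,v$ in \emph{different} blocks'' the correct conclusion is that your blocks straddle the blocks of every supporting partition, not that they are contained in them.

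A concrete counterexample: let $g$ be a relational gate with $\Lambda(g)=(1,2)$ over $U=[n]$, so $\SP(g)=\set{\set{1},\set{2},\set{3,\ldots,n}}$. Every transposition touching $1$ or $2$ moves $g$, so your relation links $1\sim 3$, $2\sim 3$, $1\sim 4,\ldots$, and its transitive closure is the single block $U$; but $\set{U}$ does not support $g$ at all. The fix is a one-line sign flip --- merge $u,v$ when $\pi_{(uv)}g=g$ --- after which your two inclusions become exactly the paper's argument: the transposition-path argument of Proposition~\ref{prop:intersect_gen} shows the resulting partition supports $g$, and for any supporting partition $\p$, two elements $u,v$ in a common part of $\p$ give $(uv)\in\stabu{\p}\sse\stabu{g}$, hence $u\sim v$, so every part of $\p$ lies inside a part of the computed partition, which is therefore coarsest. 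Everything else in your proposal (correctness of the symmetry test, the orbit computation, and the complexity accounting) is sound.
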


\begin{proof}
  For all transpositions $(uv) \in \Sym{U}$ run the algorithm of
  Lemma~\ref{lem:alg-action} to determine the unique automorphism
  $\pi_{(uv)}$ of $C$ induced by $(uv)$, if it exists.  Output that
  $C$ is symmetric iff every induced automorphism $\pi_{(uv)}$ exists.
  This is correct because the set of transpositions generates all of
  $\Sym{U}$, and therefore the automorphisms $\pi_{(uv)}$ generate all induced automorphisms of $C$.  

  If $C$ is symmetric, these induced automorphisms also indicate the
  supporting partitions and orbits of each gate $g$.  Let $\p_{(uv)} \de
  \set{\set{u,v}} \cup_{w \in U \bs \set{u,v}} \set{\set{w}}$ be a
  partition of $U$.  Note that $\pi_{(uv)}$ fixes $g$ iff $\p_{(uv)}$
  supports $g$.  Let $\p$ be the partition determined by combining the
  partitions $\p_{(uv)}$ which support $g$ using $\mathcal{E}$.
  Proposition~\ref{prop:intersect_gen} implies that $\p$ supports $g$.
  Suppose $\p$ is not the coarsest partition supporting $g$.  Then,
  there exists $u,v \in U$ which are not in the same part of $\p$ but
  in the same part of some partition supporting $g$.  But by
  the definition of $\p$, $\pi_{(uv)}$ cannot fix $g$---a contradiction.  Therefore $\p$ is the coarsest partition
  supporting $g$.

  To compute the orbit of a gate $g$: Start with $S_0 \de \set{g}$,
  and for $i \ge 0$, compute $S_{i+1} \de S_i \cup_{(uv) \in \Sym{U}}
  \pi_{(uv)} S_i$.  Let $S$ be the least fixed point of this process.  We argue that $S = \orb(g)$.  $S \sse \orb(g)$, because it consists
  of gates reachable from $g$ via a sequence of induced automorphisms
  of $C$.  $S \spe \orb(g)$, because the set of automorphisms induced
  by transpositions generate the the group of all induced
  automorphisms.

  Since there are only $\binom{|U|}{2}$ transpositions, and we can
  determine whether there is an induced automorphism for each
  transposition in time $\poly(|C|)$ and hence determine whether $C$
  is symmetric in time $\poly(|C|)$.  If $C$ is symmetric the
  computation of the supports and orbits of all gates also is computed
  in time $\poly(|C|)$ because each output is completely determined by
  the equivalence classes induced by the relations defined by the
  induced automorphisms $\pi_{(uv)}$.  Therefore the overall algorithm
  runs in time $\poly(|C|)$.
\end{proof}

\subsection{Succinctly Evaluating Symmetric Circuits}
\label{subsec:evaluate}

Let $\C = (C_n)_{n \in \NN}$ be a family of polynomial-size rigid symmetric
circuits computing a $q$-ary query.  Let $n_0$ be a constant sufficient to apply the
Support Theorem to $C_n$ for $n \ge n_0$ and fix such an $n$.  By
Theorem~\ref{thm:support}, there is a constant bound $k$ so that for
each gate $g$ in $C_n$ the union of all but the largest part of the
coarsest partition supporting $g$, $\SP(g)$, has at most $k$ elements.
Moreover, this union is a
\emph{support} of $g$ in the sense of Definition~\ref{def:support}.
We call it the \emph{canonical support} of $g$ and denote it by
$\spt{g}$.  In this subsection we show that how a gate $g$ evaluates in
$C_n$ with respect to a structure $\str{A}$ depends only on how the universe $U$ of
the structure is mapped to the canonical support of $g$.  This allows
us to succinctly encode the bijections which make a gate true (first
as injective partial functions and then as tuples).  This ultimately
lets us build a fixed-point formula for evaluating $C_n$---indeed, all
symmetric circuits---in the next subsection.

For any set $X \sse [n]$, let $U^X$ denote the set of \emph{injective}
functions from $X$ to $U$.  Let $X, Y \sse [n]$ and $\alpha \in
U^X, \beta \in U^Y$, we say $\alpha$ and $\beta$
are \emph{consistent}, denoted $\alpha \sim \beta$, if for all $z \in
X \cap Y, \alpha(z) = \beta(z)$, and for all $x \in X \bs Y$ and
$y \in Y \bs X$, $\alpha(x) \neq \beta(y)$.  Recall that any bijection
$\gamma: U \ra [n]$ determines an evaluation of the circuit $C_n$ on
the input structure $\str{A}$ which assigns to each gate $g$ the
Boolean value $C_n[\gamma\str{A}](g)$.  (Note that $\gamma\inv \in
U^{[n]}$.)  Let $g$ be a gate and let
$\Gamma(g) \de \condset{\gamma}{C_n[\gamma\str{A}](g) = 1}$ denote the
set of those bijections which make $g$ evaluate to $1$.  The following
claim proves that the membership of $\gamma$ in $\Gamma(g)$ (moreover,
the number of 1s input to $g$) depends only on what $\gamma$ maps to
$\spt{g}$.

\begin{claim}\label{claim.gamma}
Let $g$ be a gate in
$C_n$ with children $H$.  Let $\alpha \in U^\spt{g}$, then for all
$\gamma_1, \gamma_2 : U \rightarrow [n]$ with
$\gamma_1\inv \sim \alpha$ and $\gamma_2\inv \sim \alpha$,
\begin{enumerate}
\item $\gamma_1 \in \Gamma(g)$ iff $\gamma_2 \in \Gamma(g)$. \label{claim.gamma.i}
\item $|\condset{ h \in H }{\gamma_1 \in \Gamma(h)}| = |\condset{ h \in H
}{\gamma_2 \in \Gamma(h)}|.$ \label{claim.gamma.ii}
\end{enumerate}
\end{claim}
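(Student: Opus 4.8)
The plan is to replace the comparison of two evaluations by the action of a single permutation of $[n]$ that fixes $\spt{g}$ pointwise. Since $\gamma_1\inv$ and $\gamma_2\inv$ both restrict to $\alpha$ on $\spt{g}$, the permutation $\sigma \de \gamma_2 \circ \gamma_1\inv \in \Sym{n}$ satisfies $\sigma(x) = \gamma_2(\gamma_1\inv(x)) = \gamma_2(\alpha(x)) = \gamma_2(\gamma_2\inv(x)) = x$ for every $x \in \spt{g}$; that is, $\sigma$ fixes $\spt{g}$ pointwise. Because $C_n$ is rigid and symmetric, $\sigma$ induces a \emph{unique} automorphism $\pi$ of $C_n$ (Proposition~\ref{prop:rigid}), and because $\spt{g}$ supports $g$ in the sense of Definition~\ref{def:support}, we get $\pi g = g$. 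As an automorphism fixing $g$, the map $\pi$ sends children of $g$ to children of $g$, hence restricts to a bijection of the set $H$.

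The key step I would then carry out is an equivariance property of circuit evaluation: for every $\tau$-structure $\str{B}$ with universe $[n]$ and every gate $h$ of $C_n$,
\[ C_n[\sigma\str{B}](\pi h) = C_n[\str{B}](h), \]
which I would prove by induction on the height of $h$, using only the defining clauses of an induced automorphism and the assumption that every function in $\B$ is symmetric. The constant-gate case is immediate. For a relational gate $h$ with $\Sigma(h) = R$ and $\Lambda(h) = \bar a$, the definition of an induced automorphism gives $\Sigma(\pi h) = R$ and $\Lambda(\pi h) = \sigma\bar a$, so the claim reduces to the equivalences $C_n[\sigma\str{B}](\pi h) = 1 \iff \sigma\bar a \in R^{\sigma\str{B}} \iff \bar a \in R^{\str{B}} \iff C_n[\str{B}](h) = 1$, using $R^{\sigma\str{B}} = \sigma(R^{\str{B}})$. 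For an internal gate $h$ with operation $f$, the children of $\pi h$ are exactly the $\pi$-images of the children of $h$; reindexing along $\pi$ and applying the induction hypothesis, $\pi h$ receives under $\sigma\str{B}$ the same multiset of input bits that $h$ receives under $\str{B}$, so symmetry of $f$ forces the two gate values to agree. I expect this induction to be the only real content of the proof; the points that need care are the order of composition of the relabellings, namely $\sigma(\gamma_1\str{A}) = (\sigma\circ\gamma_1)\str{A} = \gamma_2\str{A}$, and the fact that $\pi$ is well defined, which is exactly where rigidity enters.

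Granting the equivariance property, both parts of the claim follow immediately. Instantiating it with $\str{B} = \gamma_1\str{A}$ and $h = g$, and using $\sigma(\gamma_1\str{A}) = \gamma_2\str{A}$ together with $\pi g = g$, yields $C_n[\gamma_2\str{A}](g) = C_n[\gamma_1\str{A}](g)$, which is part~\ref{claim.gamma.i}. For part~\ref{claim.gamma.ii}, instantiating the property at each $h \in H$ shows that $\gamma_1 \in \Gamma(h)$ iff $\gamma_2 \in \Gamma(\pi h)$; since $\pi$ restricts to a bijection of $H$, the assignment $h \mapsto \pi h$ is then a bijection from $\condset{h \in H}{\gamma_1 \in \Gamma(h)}$ onto $\condset{h \in H}{\gamma_2 \in \Gamma(h)}$, so these sets have the same cardinality. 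Note that this argument uses only that $\spt{g}$ supports $g$, not the size bound from the Support Theorem.
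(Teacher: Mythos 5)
Your proof is correct and follows essentially the same route as the paper's: both conjugate by the permutation $\sigma = \gamma_2\gamma_1\inv$ of $[n]$, observe that it fixes $\spt{g}$ pointwise and hence (by rigidity and the definition of support) its induced automorphism fixes $g$ and permutes $H$, and then transport the evaluation along that automorphism to get both parts. The only difference is that you prove the equivariance of circuit evaluation by induction on gate height, a step the paper treats as immediate from symmetry.
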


\begin{proof}
There is a unique permutation $\pi \in \Sym{n}$ such that $\gamma_1
= \pi\gamma_2$.  Moreover, $\pi$ fixes $\spt{g}$ pointwise, since
$\gamma_1\inv$ and $\gamma_2\inv$ are consistent with $\alpha$.  Since
$C_n$ is rigid and symmetric, $\pi$ is an automorphism of $C_n$, and
we have that $C_n[\gamma_1\str{A}](g) = C_n[(\pi\gamma_1)\str{A}](\pi
g)$.  Since $\pi$ fixes $\spt{g}$ pointwise, we have $\pi g = g$ and
therefore $C_n[\gamma_1\str{A}](g) = C_n[(\pi\gamma_1)\str{A}](g) =
C_n[\gamma_2\str{A}](g)$, proving part~\ref{claim.gamma.i}.
Similarly, for any child $h \in H$ we have that
$C_n[\gamma_1\str{A}](h) = C_n[(\pi\gamma_1)\str{A}](\pi h) =
C_n[\gamma_2\str{A}](\pi h)$.  Since $\pi$ fixes $g$, $\pi$
fixes $H$ setwise.  As this establishes a bijection between the children
$H$ that evaluate to 1 for $\gamma_1$ and $\gamma_2$, we conclude
part~\ref{claim.gamma.ii}.
\end{proof}

We associate with each gate $g$ a set of injective functions
$\EV{g} \sse U^{\spt{g}}$ defined by $\EV{g} \de \condset{\alpha \in
U^\spt{g}}{\exists \gamma \in \Gamma(g) \wedge \alpha \sim \gamma\inv}$
and note that, by Claim~\ref{claim.gamma}, this completely determines
$\Gamma(g)$.  We can use the following claim to recursively construct
$\EV{g}$ for all gates in $C$.

\begin{claim}\label{claim.ev}
Let $g$ be a gate in
$C$ with children $H$.  Let $\alpha \in U^\spt{g}$, then for all
$\gamma : U \rightarrow [n]$ with $\gamma\inv \sim \alpha$,
\begin{equation}
\hfill
|\condset{ h \in H }{\gamma \in \Gamma(h)}| = \sum_{h \in
H} \frac{|A_h \cap \EV{h}|}{|A_h|}, \label{eqn.ev} \hfill
\end{equation}
where for $h \in H$, $A_h \de \condset{\beta \in
U^{\spt{h}}}{\alpha \sim \beta}$.
\end{claim}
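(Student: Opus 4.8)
The plan is to prove \eqref{eqn.ev} by a double-counting argument, averaging over all bijections $\gamma$ consistent with $\alpha$. First I would record two easy reductions. Since $\spt{g}\subseteq[n]$ and $\alpha\in U^{\spt{g}}$, unwinding the definition of consistency shows that a bijection $\gamma\colon U\to[n]$ satisfies $\gamma\inv\sim\alpha$ exactly when $\gamma\inv$ extends $\alpha$, i.e.\ $\gamma\inv|_{\spt{g}}=\alpha$; there are precisely $(n-|\spt{g}|)!$ such $\gamma$, obtained by extending $\alpha$ to a bijection $[n]\to U$ in all possible ways. Second, by part~\ref{claim.gamma.ii} of Claim~\ref{claim.gamma}, the quantity $|\condset{h\in H}{\gamma\in\Gamma(h)}|$ on the left of \eqref{eqn.ev} takes the \emph{same} value, call it $N$, for every $\gamma$ with $\gamma\inv\sim\alpha$. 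So it suffices to show $N=\sum_{h\in H}|A_h\cap\EV{h}|/|A_h|$, which makes the statement independent of the particular $\gamma$ and licenses the averaging.

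Next I would count, in two ways, the number $P$ of pairs $(\gamma,h)$ with $h\in H$, $\gamma\inv\sim\alpha$, and $\gamma\in\Gamma(h)$. Summing first over $\gamma$ and using the first paragraph gives $P=N\cdot(n-|\spt{g}|)!$. For the other order, fix $h\in H$. By part~\ref{claim.gamma.i} of Claim~\ref{claim.gamma} (applied to the gate $h$), whether $\gamma\in\Gamma(h)$ depends only on the restriction $\gamma\inv|_{\spt{h}}\in U^{\spt{h}}$, and by the definition of $\EV{h}$ this happens iff $\gamma\inv|_{\spt{h}}\in\EV{h}$. Moreover, if $\gamma\inv\sim\alpha$ then $\alpha$ and $\gamma\inv|_{\spt{h}}$ are both restrictions of the injective map $\gamma\inv$, so $\alpha\sim\gamma\inv|_{\spt{h}}$ and hence $\gamma\inv|_{\spt{h}}\in A_h$. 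Thus, for this $h$, the relevant $\gamma$ are exactly those whose inverse extends $\alpha$ together with some fixed $\beta\in A_h\cap\EV{h}$; since $\alpha\sim\beta$, these two partial injections glue to a single injection on $\spt{g}\cup\spt{h}$, and the number of bijections $\gamma\inv\colon[n]\to U$ extending it is $(n-|\spt{g}\cup\spt{h}|)!$. As distinct $\beta$ yield disjoint sets of $\gamma$, summing over $\beta$ and then over $h$ gives $P=\sum_{h\in H}|A_h\cap\EV{h}|\cdot(n-|\spt{g}\cup\spt{h}|)!$.

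Finally I would compute $|A_h|$ directly: an element $\beta\in A_h$ is forced to agree with $\alpha$ on $\spt{g}\cap\spt{h}$ and, on the remaining $|\spt{h}\setminus\spt{g}|$ coordinates, is an arbitrary injection into $U\setminus\alpha(\spt{g})$ (the images of the exclusive parts must be disjoint by the consistency condition, and disjoint from $\alpha(\spt{g}\cap\spt{h})$ by injectivity), so $|A_h|=(n-|\spt{g}|)!/(n-|\spt{g}\cup\spt{h}|)!$ (nonempty since $\spt{g}\cup\spt{h}\subseteq[n]$). Equating the two expressions for $P$ and dividing by $(n-|\spt{g}|)!$ gives $N=\sum_{h\in H}|A_h\cap\EV{h}|\cdot(n-|\spt{g}\cup\spt{h}|)!/(n-|\spt{g}|)!=\sum_{h\in H}|A_h\cap\EV{h}|/|A_h|$, which is \eqref{eqn.ev}. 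The only delicate step is the second one: one must be careful that the paper's two uses of ``consistent''—between $\alpha$ and a candidate $\beta$, and between $\alpha$ and the inverse of a full bijection—line up, and that the fibres of the restriction map $\gamma\mapsto\gamma\inv|_{\spt{h}}$ have exactly the claimed size. The clean way to handle both is the observation that $\alpha\sim\beta$ is precisely the statement that $\alpha$ and $\beta$ possess a common injective extension; once that is pinned down, all the factorial counts are routine, and the degenerate cases $\spt{h}\subseteq\spt{g}$ (where $|A_h|=1$) and $\spt{g}=\es$ are absorbed automatically.
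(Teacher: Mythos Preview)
Your proof is correct and follows essentially the same double-counting strategy as the paper: both arguments average over all bijections whose inverse is consistent with $\alpha$, invoke part~\ref{claim.gamma.ii} of Claim~\ref{claim.gamma} to make the left-hand side constant across that set, then swap the order of summation and use part~\ref{claim.gamma.i} (implicitly, via the definition of $\EV{h}$) to factor through $A_h$. The only difference is presentational: you compute the fibre sizes as explicit falling factorials $(n-|\spt{g}|)!$ and $(n-|\spt{g}\cup\spt{h}|)!$ and then identify their ratio with $1/|A_h|$, whereas the paper leaves these as abstract cardinalities $|\condset{\delta\in U^{[n]}}{\delta\sim\alpha}|$ and $|\condset{\delta\in U^{[n]}}{\delta\sim\alpha\wedge\delta\sim\beta}|$ and observes directly that their ratio is $1/|A_h|$.
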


\begin{proof}
We have,
\begin{equation} 
\begin{aligned}
|\condset{ h \in H }{\gamma \in \Gamma(h)}| &\cdot |\condset{\delta \in
 U^{[n]}}{\delta \sim \alpha}| \\
&= \sum_{\condset{\delta \in
 U^{[n]}}{ \delta \sim \alpha}} |\condset{h \in
 H}{\delta\inv \in \Gamma(h)}|\\
&= \sum_{h \in H} \sum_{\condset{\delta \in
U^{[n]}}{ \delta \sim \alpha}} |\set{\delta\inv \in \Gamma(h)}| \\
&= \sum_{h \in H} \sum_{\beta \in A_h} \sum_{\condset{\delta \in
U^{[n]}}{ \delta \sim \alpha \wedge \delta \sim \beta}}
|\set{\delta\inv \in \Gamma(h)}| \\
&= \sum_{h \in H} \sum_{\beta \in A_h} |\set{\beta \in \EV{h}}| \cdot
|\condset{\delta \in
U^{[n]}}{ \delta \sim \alpha \wedge \delta \sim \beta}| \\
\end{aligned}
\end{equation}
where the first equality follows from Claim~\ref{claim.gamma}
Part~\ref{claim.gamma.ii}, the second by linearity of addition (note
that $|\set{\delta\inv \in \Gamma(h)}| \in \set{0,1}$), the third by
the definitions of $\sim$ and $A_h$, and the fourth by the definition of
$\EV{h}$.  Observing that $|\condset{\delta \in
U^{[n]}}{ \delta \sim \alpha \wedge \delta \sim \beta}| /
|\condset{\delta \in U^{[n]}}{ \delta \sim \alpha}| = \frac{1}{|A_h|}$,
we conclude that
$$|\condset{ h \in H }{\gamma \in \Gamma(h)}| = \sum_{h \in
H} \sum_{\condset{\beta \in
U^\spt{h}}{\beta \sim \alpha}} \frac{|\set{\beta \in \EV{h}}|}{|A_h|}
= \sum_{h\in H} \frac{|A_h \cap \EV{h}|}{|A_h|}.$$
\end{proof}
Note that implicit in the claim is that the r.h.s.\ side of \eqref{eqn.ev}
is integral.

Since $[n]$ is linearly ordered, $X \sse [n]$ inherits this order and
we write $\vec{X}$ for the ordered $|X|$-tuple consisting of the elements of
$X$ in the inherited order.  For $\alpha \in U^X$ we write
$\vec{\alpha} \in U^{\vec{X}}$ to indicate the tuple
$\alpha(\vec{X})$.  Observe that this transformation is invertible.
This allows us to succinctly encode such injective functions as tuples
over $U$ and, further, to write relational analogs of the sets of
injective functions we considered before, e.g.,
$\oEV{g} \de \condset{\vec{\alpha}}{\alpha \in \EV{g}}.$ Using
Claim~\ref{claim.ev} is it easy to recursively define $\oEV{g}$ over
$C_n$.

\begin{itemize}
\item 
Let $g$ be a constant input gate, then $\spt{g}$ is empty.  If
$\Sigma(g) = 0$, then $\Gamma(g) = \emptyset$ and $\oEV{g}
= \emptyset$.  Otherwise $\Sigma(g) = 1$, then $\Gamma(g)$ is all
bijections and $\oEV{g} = \set{\tup{}}$, i.e., the set containing
the empty tuple.
\item 
Let $g$ be a relational gate with $\Sigma(g) = R \in \tau$, then
$\spt{g}$ is the set of elements in the tuple $\Lambda_R(g)$.  By
definition we have $\oEV{g} = \condset{\vec{\alpha} \in
U^{\ospt{g}}}{\alpha (\Lambda_R(g)) \in R^\A}$.
\item 
Let $\Sigma(g) = \text{AND}$ and consider $\vec{\alpha} \in
 U^\ospt{g}$.  By Claim~\ref{claim.ev},
 $\vec{\alpha} \in \oEV{g}$ iff $\vec{A_h} = \oEV{h}$
 for \emph{every} child $h$ of $g$, i.e., for \emph{every} child $h$
 and \emph{every} $\beta \in U^\spt{h}$ with $\alpha \sim \beta$, we
 have $\vec{\beta} \in \oEV{h}$.
\item 
Let $\Sigma(g) = \text{OR}$ and consider $\vec{\alpha} \in
U^\ospt{g}$.  By Claim~\ref{claim.ev}, $\vec{\alpha} \in \oEV{g}$
iff there is a child $h$ of $g$ where $\vec{A_h} \cap \oEV{h}$ is
non-empty, i.e., for \emph{some} child $h$ of $g$ and \emph{some}
$\beta \in U^\spt{h}$ with $\alpha \sim \beta$, we have
$\vec{\beta} \in \oEV{h}$.
\item 
Let $\Sigma(g) = \text{NOT}$ and consider $\vec{\alpha} \in
U^\ospt{g}$.  The gate $g$ has exactly one child $h$.
Claim~\ref{claim.ev} implies that $\vec{\alpha} \in \oEV{g}$ iff
$\vec{A_h} \neq \oEV{h}$, i.e., for \emph{some} $\beta \in
U^\spt{h}$ with $\alpha \sim \beta$, we have
$\vec{\beta} \not\in \oEV{h}$.
\item 
Let $\Sigma(g) = \text{MAJ}$ and consider $\vec{\alpha} \in
U^\ospt{g}$.  Let $H$ be the set of children of $g$ and let
$A_h \de \condset{ \beta \in U^{\spt{h}}}{\beta\sim\alpha}$.  Then
Claim~\ref{claim.ev} implies that $\vec{\alpha} \in \oEV{g}$ if,
and only if,
\begin{equation}\label{eqn:maj-gate}
\hfill
\sum_{h \in H} 
\frac{| \vec{A_h} \cap \oEV{h} |}{|\vec{A_h}|} \ge \frac{|H|}{2}.
\hfill
\end{equation}
\end{itemize}
From $\oEV{}$ we can recover the $q$-ary query $Q$ computed by
$C_n$ on the input structure $\str{A}$ because the support of an output gate $g$ is exactly
the set of elements in the marking of $g$ by
$\Lambda_\Omega$.  In particular:
 $$Q = \condset{\bar a \in U^q}{\exists g \in
G, \vec\alpha \in \oEV{g}\text{ such that
} \Lambda_\Omega(\alpha\inv(\bar a)) = g}.$$ For Boolean properties $q
= 0$, and $Q = \set{\tup{}}$ indicates that $\str{A}$ has the property and
$Q = \emptyset$ indicates that it does not.

\subsection{Translating to Formulas of $\FP$}\label{subsec:formulas}

Let $\C = (C_n)_{n \in \NN}$ be a $\PT$-uniform family of symmetric
$(\B,\tau)$ circuits, where $\B$ is either $\Bs$ or $\Bm$.  Our aim is
to show that there is a formula $Q$ of $\FP$, or $\FPC$ in the
case of $\Bm$, in the vocabulary $\tau\uplus \set{\le}$ such that for
any $n$ and $\tau$-structure $\str{A}$ over a universe $U$ with
$|U|=n$, the $q$-ary query defined by $C_n$ on input $\str{A}$ is
defined by the formula $Q$ when interpreted in the structure
$\str{A}^\le \de \str{A} \uplus \tup{[n],\le}$.

Since $\C$ is $\PT$-uniform, by the Immerman-Vardi theorem and Lemma~\ref{lem:alg-rigid}, we have an
$\FP$ interpretation defining a rigid symmetric circuit equivalent to $C_n$---that we also call $C_n$---over the number sort of
$\str{A}^\le$, i.e., a sequence $\Phi \de
(\phi_G, \phi_W, \phi_\Omega,(\phi_s)_{s \in \B \uplus \tau \uplus \set{0,1}},
(\phi_{\Lambda_R})_{R\in \tau})$
of formulas of $\FP(\le)$ that define the circuit when interpreted in
$\tup{[n],\le}$.  Note that $C_n$ is defined over the universe
$[n]$.  Let $t$ be the arity of the interpretation, i.e., $\phi_G$
defines a $t$-ary relation $G \sse [n]^t$.  If $n$ is less than
$n_0$, the length threshold for applying the support theorem, $C_n$
can be trivially be evaluated by a $\FP$ formula which quantifies over
all (constantly-many) bijections from the point sort of $\str{A}^\le$
to the number sort of $\str{A}^\le$ and then directly evaluates the
circuit with respect to the bijection.  Thus we only need to consider
the case when $n \ge n_0$, and are able to use the recursive
construction of $\oEV{}$ from the last subsection along with a
constant bound $k$ on the size of the gate supports in $C_n$.

A small technical difficulty arises from the fact that we want to
define the relation $\oEV{g}$ inductively, but these are actually
relations of varying arities, depending on the size of $\spt{g}$.  For
the sake of a uniform definition, we extend $\oEV{g}$ to a $k$-ary
relation for all $g$ by padding it with all possible values to obtain
tuples of length $k$.  If $|\spt{g}| = \ell$, define $\tEV{g} =
\condset{(a_1\cdots a_k)}{(a_1\cdots a_\ell) \in \oEV{g} \mbox{ and } a_i
  \neq a_j \mbox{ for } i \neq j}$.

Define the relation $V \sse [n]^t \times U^k$ by $V(g,\bar a)$ if, and
only if, $\bar a \in \tEV{g}$.  Our aim is to show that the relation $V$ is
definable by a formula of $\FP$.  Throughout this subsection we use $\mu$ and $\nu$ to indicate $t$-tuples of number variables which denote gate indexes in $[n]^t$, and use the $k$-tuples of point variables $\bar x = (x_1,\ldots,x_k)$ and $\bar y = (y_1,\ldots, y_k)$ to denote injective functions that have been turned into tuples and then padded.

By Lemma~\ref{lem:alg-sym} and invoking the Immerman-Vardi theorem again,
we have a formula $\supp$ such that $\tup{[n],\le} \models \supp[g,u]$ if,
and only if, $\tup{[n],\le} \models \phi_G[g]$ (i.e., $g$ is a gate of $C_n$
as defined by the interpretation $\Phi$) and $u$ is in $\spt{g}$.
We use $\supp$ to define some additional auxiliary formulas.
First we define, for each $i$ with $1 \leq i \leq k$ a
formula $\supp_i$ such that  $\tup{[n],\le} \models \supp_i[g,u]$ if, and
only if, $u$ is the $i^{th}$ element of $\ospt{g}$.
These formulas can be defined inductively as follows, where $\eta$ is a number variable
$$
\begin{array}{rcl}
\supp_1(\mu,\eta) & \de & \supp(\mu,\eta) \land \forall \chi (\chi < \eta) \ra \neg\supp(\mu,\chi) \\
\supp_{i+1}(\mu,\eta) & \de &  \supp(\mu,\eta) \land \exists \chi_1
(\chi_1 < \eta \land \supp_i(\mu,\chi_1) \\
& &~~~~~~~~~~~~~~~~~~~~~~~\land \forall \chi_2 [(\chi_1
< \chi_2 < \eta) \ra \neg\supp(\mu,\chi_2)]) .
\end{array}
$$

We now define a formula $\agr(\mu,\nu,\bar{x},\bar{y})$
so that for a structure $\str{A}$,
$\str{A}^{\leq} \models \agr[g,h,\bar{a},\bar{b}]$ if, and only if,
$\alpha \sim \beta$ for $\vec\alpha \in U^\ospt{g}$, $\vec\beta \in U^\ospt{h}$ that
are the restrictions of the
$k$-tuples $\bar{a}$ and $\bar{b}$ to the length of $\ospt{g}$ and
$\ospt{h}$ respectively.
$$
\begin{array}{rcl@{}l}
\agr(\mu,\nu,\bar{x},\bar{y}) & \de & 
\displaystyle \bigwedge_{1\leq i,j, \leq k} & (\forall \eta
(\supp_i(\mu,\eta) \land \supp_j(\nu,\eta)) \ra x_i = y_j) \land \\
& & & (\forall \eta_1 \eta_2 (\supp_i(\mu,\eta_1) \land \supp_j(\nu,\eta_2) \land
x_i = y_j) \ra \eta_1 = \eta_2)
\end{array}
$$

With these, we now define a series of formulas
$(\theta_s)_{s \in \B \uplus \tau \uplus \set{0,1}}(\mu,x)$
corresponding to the various cases of the construction of the relation
$\oEV{g}$ from Section~\ref{subsec:evaluate}.  In these, $V$ is a relational variable for the relation
being inductively defined.
$$
\begin{array}{rcl@{}l}
\theta_0(\mu,\bar{x}) & \de & \text{false} \\
\theta_1(\mu,\bar{x}) & \de & \displaystyle \bigwedge_{1\leq i<j \leq k}
x_i \neq x_j \\
\theta_R(\mu,\bar{x}) & \de & \displaystyle  \bigwedge_{1\leq i<j \leq k}
x_i \neq x_j \land  \exists z_1\cdots z_r \exists \eta_1\cdots \eta_r 
R(z_1,\ldots,z_r) \land \phi_{\Lambda_R}(\mu,\eta) \land \\
& & \hspace{23ex} \displaystyle \bigwedge_{1\leq i \leq r
} \bigwedge_{1\leq j \leq k } (\supp_j(\mu,\eta_i) \ra z_i = x_j) \\
\theta_{\text{OR}}(\mu,\bar{x}) & \de & \displaystyle  \bigwedge_{1\leq i<j \leq k}
x_i \neq x_j \land \exists \nu \exists \bar{y}
(W(\nu,\mu) \land \agr(\mu,\nu,\bar{x},\bar{y})) \land V(\nu,\bar{y})) \\
\theta_{\text{AND}}(\mu,\bar{x}) & \de & \displaystyle  \bigwedge_{1\leq i<j \leq k}
x_i \neq x_j \land \forall \nu \forall \bar{y}
(W(\nu,\mu) \land \agr(\mu,\nu,\bar{x},\bar{y})) \ra V(\nu,\bar{y})) \\
\theta_{\text{NOT}}(\mu,\bar{x}) & \de & \displaystyle  \bigwedge_{1\leq i<j \leq k}
x_i \neq x_j \land \forall \nu \forall \bar{y}
(W(\nu,\mu) \land \agr(\mu,\nu,\bar{x},\bar{y})) \ra \neg V(\nu,\bar{y}))
\end{array}
$$

To define $\theta_{\text{MAJ}}$ we start with some observations.  We
wish to formalise Equation~\ref{eqn:maj-gate}, but there are a few
complications.  The relation $k$-ary $\tEV{h}$ we are defining
inductively is the result of padding $\oEV{h}$ with all tuples of
$k-|\spt{h}|$ distinct elements.  Thus the number of elements in
$\tEV{h}$ is $|\oEV{h}|\cdot\frac{(n-|\spt{h}|)!}{(n-k)!}$.  Similarly,
for any fixed $g$, $h$ and $\bar a$, if we write $\bar{A}_h$ for the set of
tuples $\bar b$ satisfying $\agr(g,h,\bar a,\bar b)$, then $|\bar{A}_h| =
|\vec A_h|\cdot\frac{(n-|\spt{h}|)!}{(n-k)!}$.  Finally, the tuples in
$\bar{A}_h \cap \tEV{h}$ are exactly those obtained by padding tuples
in $\vec A_h \cap \oEV{h}$ to length $k$ and there are therefore
$|\vec A_h \cap \oEV{h}|\cdot\frac{(n-|\spt{h}|)!}{(n-k)!}$ many of these.
Thus, 
$\frac{|\vec A_h \cap \oEV{h}|}{|\vec A_h|} = \frac{| \bar{A}_h \cap \tEV{h}|}{|\bar{A}_h|}$ and it suffices to compute the latter.
Observe that $|\vec A_h|$ and $|\bar{A}_h|$ are completely determined by $|\spt{g}|$, $|\spt{h}|$ and
$|\spt{g} \cap \spt{h}|$. 
We avoid dealing explicitly with fractions by noting that for any gate
$h$, the sum $\sum_{h' \in \orb(h)} 
\frac{| \vec{A}_{h'} \cap \oEV{h'} |}{|\vec{A}_{h'}|}$ is an integer
(by an argument analogous to Claim~\ref{claim.ev}).  Since
$|\bar{A}_{h'}|$ is the same for all $h' \in \orb(h)$, it suffices to
compute the sum of $| \bar{A}_{h'} \cap \tEV{h'} |$ for all $h'$ with a fixed
size of  $|\bar{A}_h|$ and then divide the sum by  $|\bar{A}_h|$.
This is what we use to compute the sum on the l.h.s.\ of
Equation~\ref{eqn:maj-gate}.   

For any fixed $i$ and $j$ with $0 \leq i \leq  j \leq k$, define the
formula $\spp_{ij}(\mu,\nu)$ 
so that
$\str{A}^{\leq} \models \spp_{ij}[g,h]$ iff $|\spt{h}| = j$ and
$|\spt{g}\cap\spt{h}| = i$.  This formula can be defined in $\FO$.

Using $k$-tuples of number variables in $\FPC$ we can represent
natural numbers less than $n^k$.  We assume, without giving detailed construction
of the formulas involved, that we can define arithmetic operations on
these numbers.  In particular, we assume we have for each $i,j$ as
above a formula
$\asize_{ij}(\mu,\xi)$, with $\xi$ a
$k$-tuple of number variables, such that
$\str{A}^{\leq} \models \asize_{ij}[g,e]$ iff $e = |\bar{A}_h|$ for
any gate $h$ with  $|\spt{h}| = j$ and $|\spt{g}\cap\spt{h}| = i$.

Using this, we define the formula $\numf_{ij}(\mu,\bar{x},\xi)$, with $\xi$ a $k$-tuple of
number variables,
so that
$\str{A}^{\leq} \models \numf_{ij}[g,\bar a,e]$ iff $e$ is the
number of gates $h$ with $\str{A}^{\leq} \models \spp_{ij}[g,h]$ which
are made true by some bijection 
that assigns the tuple $\bar a$ to
$\spt{g}$.  This formula is given by
$$
\begin{array}{rcll}
\numf_{ij}(\mu,\bar{x},\xi) & \de &
\exists \xi_1 \xi_2 & \asize_{ij}(\mu,\xi_1) \land \\
& & & \xi_2 = \CT_{\nu \bar y}(W(\nu,\mu) \land \spp_{ij}(\mu,\nu) \land
V(\nu,\bar{y}) \land \agr(\mu,\nu,\bar{x},\bar{y}) ) \land \\
& & & \xi \cdot \xi_1 = \xi_2.
\end{array}
$$
Now we can define the required formula $\theta_{\text{MAJ}}$ by 
$$
\theta_{\text{MAJ}}(\mu,\bar{x}) \de \displaystyle  \bigwedge_{1\leq i<j \leq k}
x_i \neq x_j \land \exists \xi (2\cdot \xi \geq
\CT_{\nu} W(\nu,\mu)) \land \xi = \sum_{0\leq
i \leq j \leq k} \condset{\xi'}{\numf_{ij}(\mu,\bar{x},\xi')},
$$
where the sum inside the formula is to be understood as shorthand 
for taking the sum over the bounded number of possible values of $i$
and $j$.  

Now, we can define the relation $V \sse [n]^t \times U^k$ given by
$V(g,\bar a)$ if, and only if, $\bar a \in \tEV{g}$ by the following formula
$$ 
\theta(\mu,\bar x) \de [\ifp_{V,\nu \bar y} \displaystyle \bigvee_{s \in  \B \uplus \tau \uplus \set{0,1}}
(\phi_s(\mu) \land \theta_s(\nu,\bar y))](\mu,\bar x).
$$
The overall $q$-ary output query computed by the circuit is
given by the following formula derived from final the construction in the
last subsection
\begin{equation}
\begin{aligned}
Q(z_1,\ldots,z_q) &\de \exists \mu\bar x \nu_1 \cdots \nu_q \eta_1 \cdots \eta_k
[ \theta(\mu, \bar x) \wedge \phi_\Omega(\nu_1,\ldots,\nu_q, \mu) \wedge \\
& \hspace{23ex} \bigwedge_{1 \le i \le
k} (\supp_i(\mu,\eta_i) \vee \forall \eta [\neg \supp_i(\mu,\eta)]) \wedge \\
& \hspace{23ex} \bigwedge_{1 \le i \le k} \bigwedge_{1 \le j \le q}
([\supp_i(\mu,\eta_i) \wedge x_i = z_j] \rightarrow \nu_j
= \eta_i) \wedge \\
& \hspace{23ex}\bigwedge_{1 \le j \le q} \bigvee_{1 \le i \le k} (x_i =
z_j \wedge \supp_i(\mu,\eta_i))]
\end{aligned}
\end{equation}
where the purpose of the last three lines is to invert the injective
function encoded in $\bar x$ and then apply it to $z_i$ to produce
$\nu_i$; in particular: the second line puts the ordered support of
$\mu$ into $\eta_1,\ldots,\eta_k$, the third line defines the map from
$z_i$ to $\nu_i$, and the fourth line insures that this map covered all
coordinates of $z_i$.

Note that this is a formula of $\FPN$ if $\B$ is the standard basis
and a formula of $\FPC$ if $\B$ is the majority basis.
Moreover, if the family $\C = (C_n)_{n \in \NN}$ of symmetric circuits
is not $\PT$-uniform, but given by an advice function $\Advice$, the
construction gives us an equivalent formula of $\FPG$ (for the
standard basis) or $\FPCG$ (for the majority basis).  This may be
formalised as follows. 
\begin{lemma}\label{lem:main}~
\begin{enumerate}
\item 
Any relational query defined by a $\PT$-uniform family
of circuits over the standard basis is definable in $\FPN$.
\item 
Any relational query defined by a $\PT$-uniform family
of circuits over the majority basis is definable in $\FPC$.
\item 
Any relational query defined by a $\Pp$-uniform family
of circuits over the standard basis is definable in $\FPG$, for some advice function $\Advice$. 
\end{enumerate}
\end{lemma}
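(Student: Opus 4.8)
The plan is to prove all three parts together, since they differ only in which uniformity theorem supplies the circuit interpretation and in whether the majority case (hence the counting operator $\CT$) is invoked. For part~1, take a $\PT$-uniform family $\C$ over the standard basis. By Lemma~\ref{lem:alg-rigid}, applied uniformly to each $C_n$, we may assume the circuits are rigid and symmetric, and by the Immerman--Vardi theorem there is an $\FP$ interpretation $\Phi$ of $C_n$ over the number sort $\tup{[n],\le}$ of $\str{A}^\le$. By Lemma~\ref{lem:alg-sym}, again via Immerman--Vardi, the support formula $\supp$ is $\FP$-definable over $\tup{[n],\le}$, and from it one builds the auxiliaries $\supp_i$, $\agr$, $\spp_{ij}$ as written. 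The Support Theorem, in the form of Corollary~\ref{cor:support}, provides the constant $k$ bounding $\spart{\SP(g)}$ over all gates once $n\ge n_0$; for the finitely many $n<n_0$ the query is computed by a fixed $\FO$ formula that cycles over the constantly many bijections $U\to[n]$ and evaluates the circuit directly, and this is folded in by a case split on $|U|<n_0$.

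For $n\ge n_0$, the core is to show that the inflationary fixed point $\theta$ defines exactly $V=\condset{(g,\bar a)}{\bar a\in\tEV{g}}$. I would do this by an induction that stages the fixed point along a stratification of the gates by depth in the circuit DAG. The ancestor relation on gate-indices (the transitive closure of $\phi_W$), and hence a depth stratification, is $\FP$-definable over $\tup{[n],\le}$, so the body of $\theta$ can be read --- adding, if necessary, an explicit guard or an auxiliary ``done'' relation in a simultaneous fixed point --- so that a gate $g$ contributes its tuples only once all of its children are settled in $V$. Granting correct children, the inductive step is exactly Claims~\ref{claim.gamma} and~\ref{claim.ev}: for each gate type $s$, the formula $\theta_s$ transcribes the matching bullet of Section~\ref{subsec:evaluate}, with $\agr$ realising the consistency relation $\sim$ and the padding of $\oEV{g}$ to the $k$-ary $\tEV{g}$ handled by the counting identities relating $|\bar A_h|$, $|\vec A_h|$, $|\bar A_h\cap\tEV{h}|$ and $|\vec A_h\cap\oEV{h}|$. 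Correctness of the output formula $Q$ then follows: $Q$ inverts the injective map encoded in $\bar x$, restricts it to $\ospt{g}$ for an output gate $g$, and applies $\phi_\Omega$, precisely as in the closing construction of Section~\ref{subsec:evaluate}.

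Part~2 is the same argument with basis $\Bm$, so the clause $\theta_{\mathrm{MAJ}}$ appears; here one additionally verifies that the arithmetic auxiliaries $\asize_{ij}$ and $\numf_{ij}$ are $\FPC$-definable (numbers below $n^k$ are coded by $k$-tuples of number variables, with the standard arithmetic on them), and that the fraction-avoidance device --- summing $|\bar A_{h'}\cap\tEV{h'}|$ over $h'\in\orb(h)$ and dividing once by the common integral value $|\bar A_h|$ --- faithfully realises the threshold \eqref{eqn:maj-gate}; the resulting formula lies in $\FPC$. Part~3 replaces Immerman--Vardi by its advice version from Section~\ref{sec:boolean}: a $\Pp$-uniform family has an $(\FPG)$-definable interpretation of $C_n$ in $\str{A}^\Advice$ for a suitable $\Advice$, and Lemmas~\ref{lem:alg-rigid} and~\ref{lem:alg-sym} still apply since they run in polynomial time; running the same construction over the number sort of $\str{A}^\Advice$ yields a formula of $\FPG$.

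I expect the delicate point to be the correctness of the inflationary fixed point --- in particular, making the depth-staging explicit so that negation (and the MAJ threshold) is evaluated only once a gate's children are determined --- together with checking that the padding-and-counting bookkeeping in the MAJ case matches \eqref{eqn:maj-gate} exactly, with no rounding slack. Everything else is a matter of transcribing the already-established results of Sections~\ref{subsec:rigid}--\ref{subsec:evaluate}.
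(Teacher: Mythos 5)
Your proposal follows the paper's own route essentially step for step: rigidification via Lemma~\ref{lem:alg-rigid}, the (advice-)Immerman--Vardi interpretation of $C_n$ over the number sort, supports via Lemma~\ref{lem:alg-sym} and the constant bound from Corollary~\ref{cor:support}, the finite case split for $n<n_0$, and the inductive definition of $V$ by the formulas $\theta_s$ justified clause by clause through Claims~\ref{claim.gamma} and~\ref{claim.ev}, with the padding/counting bookkeeping for $\theta_{\text{MAJ}}$. Your explicit staging of the inflationary fixed point by gate depth --- so that the negative occurrence of $V$ in $\theta_{\text{NOT}}$ and the threshold test in $\theta_{\text{MAJ}}$ are evaluated only after all of a gate's children are settled --- is a point the paper leaves implicit, and is a careful refinement of the same argument rather than a different one.
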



\section{Consequences}
\label{sec:conseq}

Formulas of $\FPN$ can be translated into $\PT$-uniform families of
symmetric Boolean circuits by standard methods and similar
translations hold for $\FPC$ and $\FPG$.  This combined with Lemma~\ref{lem:main} proves our main theorem. 
\begin{theorem}[Main]
  \label{thm:main}
  The following pairs of classes define the same queries on
  structures:
  \begin{enumerate}
  \item Symmetric $\PT$-uniform Boolean circuits
    and $\FPN$.
  \item  Symmetric $\PT$-uniform majority
    circuits and $\FPC$.
  \item  Symmetric $\Pp$-uniform majority
    circuits and $\FPCG$.
  \end{enumerate}
\end{theorem}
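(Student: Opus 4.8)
The statement comprises, for each of the three pairs, two inclusions. The direction from circuits to logic is already in hand: parts~(1) and~(2) of the theorem are parts~(1) and~(2) of Lemma~\ref{lem:main}, and part~(3) is the majority-basis variant recorded in the discussion just before that lemma, namely that when a symmetric family is $\Pp$-uniform via an advice function $\Advice$ and uses the basis $\Bm$, the construction of Section~\ref{subsec:formulas} yields an equivalent formula of $\FPCG$. So the plan is to prove the reverse inclusions: every formula of $\FPN$ is computed by a $\PT$-uniform family of symmetric Boolean circuits, every formula of $\FPC$ by a $\PT$-uniform family of symmetric majority circuits, and every formula of $\FPCG$ by a $\Pp$-uniform family of symmetric majority circuits.

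For the reverse inclusions I would run the classical inductive translation of a formula into a circuit, arranged so that symmetry is visible. Fix $n$ and work over $\str{A}^{\le}$ (respectively $\str{A}^{\Advice}$). The gates of $C_n$ are indexed by pairs consisting of a subformula $\psi$ of the given formula together with an assignment of elements of the point sort $U$ to the free point variables of $\psi$ and elements of the number sort $\set{0,\ldots,n}$ to its free number variables; the operator $\ifp$ is handled by unrolling it to its closure stage, which over a structure of size $n$ is reached within $\poly(n)$ steps, so the unrolling (even when nested) costs only a polynomial factor. An atomic $\tau$-formula becomes a relational input gate with the matching label; an equality of point variables, a number-sort atom, or (in the $\FPCG$ case) an atom of $\tarb$ or of $\le$ becomes a constant gate, its value being determined by $n$ and $\Advice(n)$ alone; Boolean connectives become AND, OR and NOT gates; a quantifier over a sort becomes an unbounded fan-in OR or AND gate whose children range over that sort; and a counting term $\CT_x\psi$, together with all arithmetic and comparisons applied to it, is realised by threshold gates, which over $\Bm$ are obtained from MAJ by padding with constant inputs, plus further Boolean sub-circuitry performing arithmetic on $\poly(n)$-bounded integers. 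The output gates are designated according to the free point variables of the formula.

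Symmetry of $C_n$ is then immediate: a permutation $\sigma \in \Sym{U}$ acts on gates by applying $\sigma$ to the point-variable part of each index while fixing the number-variable part (and permuting the output labels accordingly), and this map preserves wiring, gate operations and labels because the shape of $C_n$ is dictated purely by the syntax of the formula, which names no element of $U$; the disjoint linear order and the advice reside entirely on the number sort, which $\sigma$ leaves fixed. Hence $\sigma$ induces an automorphism, so $C_n$ is symmetric --- and if desired it can be made rigid by Lemma~\ref{lem:alg-rigid}. Uniformity is equally routine: there are only constantly many subformulas, each with a constant number of free variables, so $|C_n| = \poly(n)$, and a description of $C_n$ can be produced from $1^n$ in polynomial time --- in the $\FPCG$ case after first reading off $\Advice(n)$ --- which gives $\PT$-uniformity, respectively $\Pp$-uniformity with the same advice function.

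The point that genuinely needs care, and which I expect to be the main obstacle, is making the counting and arithmetic gadgets both correct and symmetric at once: the integer denoted by a counting term, and all computations on it, must be realised by majority and Boolean gates in a way that remains fixed under every induced automorphism. The clean device is to represent each such integer $\CT_x\psi(x)$ by the tuple of threshold bits $\CT_x\psi(x)\ge m$ for $0\le m\le n$; each of these is a single threshold gate whose children $\condset{\psi[a]}{a\in U}$ are merely permuted among themselves by $\sigma$, hence the gate is fixed, and every later arithmetic operation then acts on these order-type encodings, which are permutation invariant by construction. The leftover facts --- the polynomial bound on the closure ordinal of $\ifp$ on finite structures, and the standard implementation of arithmetic on $\poly(n)$-bounded integers by bounded-depth circuits over $\Bm$ --- are classical; with these, the reverse inclusions, and hence the theorem, follow.
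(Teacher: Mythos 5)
Your proposal is correct and follows essentially the same route as the paper: the circuit-to-logic direction is exactly Lemma~\ref{lem:main} (with the $\FPCG$ case read off from the remark preceding it), and the logic-to-circuit direction is the classical syntactic translation, which the paper dispatches in one sentence as ``standard methods.'' Your expanded account of that translation --- subformula/assignment-indexed gates, fixed-point unrolling, threshold-bit encoding of counting terms, and symmetry via the action on the point-variable part of the index --- is a sound filling-in of the details the paper omits.
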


One consequence is that properties of graphs which we know not to be 
definable in $\FPC$ are also not decidable by $\PT$-uniform
families of symmetric circuits.  The results of
Cai-F\"urer-Immerman \cite{CFI92} give graph properties that are polynomial-time decidable, but not definable in $\FPC$.   Furthermore, there are a
number of $\NP$-complete graph problems known not to be definable in $\FPC$, including Hamiltonicity and
3-colourability (see~\cite{Daw98}).  Indeed, all these proofs actually
show that these properties are not even definable in the infinitary
logic with a bounded number of variables and counting
($C^{\omega}_{\infty\omega}$---see~\cite{Ott97}).  Since it is not
difficult to show that formulas of $\FPCG$ can be translated into
$C^{\omega}_{\infty\omega}$, we have the following.
\begin{corollary}\label{cor:ham}
Hamiltonicity and 3-colourability of graphs are not decidable by
families of $\Pp$-uniform symmetric majority circuits.
\end{corollary}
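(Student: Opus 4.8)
The plan is to combine Theorem~\ref{thm:main}(3) with the known inexpressibility of these two properties in the infinitary counting logic $C^{\omega}_{\infty\omega}$. Suppose for contradiction that Hamiltonicity is decided by some $\Pp$-uniform family of symmetric majority circuits; the argument for $3$-colourability is verbatim the same. By Theorem~\ref{thm:main}(3) there is then an advice function $\Advice$ and a sentence $Q$ of $\FPC$ over the two-sorted vocabulary $\tau' \de \tau \uplus \tarb \uplus \set{\le}$ such that a graph $\str{G}$ is Hamiltonian if, and only if, $\str{G}^\Advice \models Q$, where $\str{G}^\Advice \de \str{G} \oplus \Advice(|\str{G}|)$.

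First I would translate $Q$ into $C^{\omega}_{\infty\omega}$. This is routine, and is exactly the translation alluded to just before the corollary: the counting term of $\FPC$ is directly available as a counting quantifier; an inflationary fixed point over a structure of size $\poly(n)$ stabilises after polynomially many stages, so it unwinds into an infinitary disjunction of its finite stage-approximants, each using only a fixed number of variables; and the extra sort together with the relations of $\tarb \uplus \set{\le}$ on it are simply part of the vocabulary $\tau'$ and contribute nothing beyond a constant to the variable count. Thus there is a constant $k$ and a sentence $Q'$ of $C^{k}_{\infty\omega}(\tau')$ agreeing with $Q$ on all finite $\tau'$-structures, so that $\str{G}$ is Hamiltonian iff $\str{G}^\Advice \models Q'$.

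Next I would invoke the lower bound of \cite{Daw98}: for this particular $k$ there are graphs $\str{G}$ and $\str{H}$ with $|\str{G}| = |\str{H}|$, with $\str{G}$ Hamiltonian and $\str{H}$ not, which satisfy exactly the same sentences of $C^{k}_{\infty\omega}$ — equivalently, Duplicator wins the $k$-pebble bijection game on $\str{G}$ and $\str{H}$. Because $|\str{G}| = |\str{H}|$, the two graphs receive the identical advice structure $\Advice(|\str{G}|) = \Advice(|\str{H}|)$ on the number sort, and this sort carries no relation linking it to the point sort. A disjoint union with a fixed structure preserves $C^{k}_{\infty\omega}$-equivalence: Duplicator plays her winning strategy on the graph part and the identity on the common advice part, and since there are no cross-sort relations the union of these two partial isomorphisms is again a partial isomorphism, so the combined position in the bijection game is always winning. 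Hence $\str{G}^\Advice$ and $\str{H}^\Advice$ agree on $Q'$, contradicting $\str{G}^\Advice \models Q'$ together with $\str{H}^\Advice \not\models Q'$. The same argument with the $C^{\omega}_{\infty\omega}$-inexpressibility of $3$-colourability from \cite{Daw98} finishes the proof.

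The step I expect to carry the weight is the observation in the last paragraph that adjoining a disjoint linear order — and, in fact, arbitrary advice placed on it — does nothing against these lower bounds. This hinges on two points that must be checked: that the hard instances of \cite{Daw98} can be chosen of equal cardinality (so that both sides receive the same advice), and that the advice is relationally insulated from the input graph (so that $C^{k}_{\infty\omega}$-equivalence survives the disjoint union). If either failed — for instance if the advice were permitted to mention the edge relation — the argument would break down.
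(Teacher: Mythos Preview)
Your argument is correct and follows essentially the same route as the paper: invoke Theorem~\ref{thm:main}(3), translate $\FPCG$ into $C^{\omega}_{\infty\omega}$, and appeal to the known inexpressibility of Hamiltonicity and $3$-colourability in that logic. The only cosmetic difference is that the paper would absorb the number sort into the infinitary formula (yielding a $C^{\omega}_{\infty\omega}(\tau)$ sentence directly), whereas you keep the two-sorted vocabulary and instead argue via the bijection game that adjoining an identical disjoint advice structure preserves $C^{k}_{\infty\omega}$-equivalence; both are standard and your flagged checkpoints (equal cardinality of the hard instances, relational isolation of the advice) are exactly the right ones.
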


\section{Coherent and Locally Polynomial Circuits}
\label{sec:coherence}

In this section we discuss connections with the prior work of Otto
\cite{O97}. Otto studies rigid symmetric Boolean circuits
deciding Boolean properties of structures and gives uniformity
conditions on such families that characterise bounded-variable
fragments of finite and infinitary first-order logic.  Otto defines
two properties to establish his notion of uniformity.  The first
property is called coherence; informally, a circuit family
$(C_n)_{n \in \NN}$ is coherent if $C_n$ appears as a subcircuit of
all but finitely many of the circuits at larger input lengths.

\begin{definition}[Coherence]
  Let $\C \de (C_n)_{n \in
    \NN}$ be a family of rigid symmetric $(\Bs,\tau)$-circuits
  computing a Boolean function.  The circuit $C_n$ \emph{embeds} into
  the circuit $C_m$ with $m > n$ if there is a subcircuit of $C_m$
  which is isomorphic to $C_n$.  An embedding is \emph{complete} if
  its images are exactly those gates of $C_m$ which are fixed by
  $\Sym{[m] \bs [n]}$.  The circuit family $\C$ is \emph{coherent} if
  for each $n \in \NN$, $C_n$ completely embeds into $C_m$ for all
  large enough $m > n$. 
\end{definition}

The second property is locally polynomial; informally, a circuit
family is locally polynomial if the size of the orbit of every wire is
polynomially bounded.

\begin{definition}[Locally Polynomial]
  A rigid circuit family $(C_n)_{n \in \NN}$ is \emph{locally polynomial of
    degree} $k$ if there is a $k \in \NN$ such that each $C_n$ and
  every subset $S \sse [n]$, the size of the orbit of every wire with
  respect to the automorphisms of the circuit induced by $\Sym{S}$ at
  most $|S|^k$.
\end{definition}

The main result of \cite[Theorem 6]{O97} establishes an equivalence
between  coherent
locally-polynomial (of degree $k$) families of rigid symmetric
$(\Bs,\tau)$-circuits computing Boolean functions on $\fin[\tau]$
and infinitary first-order logic with $k$ variables.
It should be noted that in Otto's definition of circuit families the individual circuits in the family may
themselves be infinite, as the only size restriction is on the orbits
of gates.
The theorem also
shows that if the circuit families are also constant depth they
correspond to the fragment of first-order logic with $k$ variables.


The common restriction of notions of uniformity we consider in this
paper is that the circuits have size polynomial in their input length.
If we restrict ourselves to  locally-polynomial coherent symmetric
families where the individual circuits are \emph{finite}, 
we can use the Support Theorem (Corollary~\ref{cor:support}) to
establish a direct connection with polynomial-size symmetric
circuit families, formally stated in the following
proposition. 
  
\begin{prop}\label{prop:coherent}
  Let $\C \de (C_n)_{n \in \NN}$ be a family of rigid symmetric
  Boolean circuits.
  \begin{enumerate}
  \item If $\C$ is a locally-polynomial coherent family, then $\C$ is
    polynomial size.
  \item If $\C$ is polynomial size, then $\C$ is locally polynomial.
  \end{enumerate}
\end{prop}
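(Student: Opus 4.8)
The plan is to prove the two implications separately. Implication~(2) is short and uses only the Support Theorem via Corollary~\ref{cor:support}. Implication~(1) is the substantive one: local polynomiality already forces every gate to have a support of bounded size (again through the Support Theorem), but bounded supports alone do \emph{not} bound the \emph{number} of gates --- it is coherence, realising each $C_n$ as the $\Sym{[m]\bs[n]}$-fixed part of a single large circuit $C_m$, that controls the number of orbits and hence the size. The delicate points will be to check that ``fixed by $\Sym{[m]\bs[n]}$'' coincides with ``canonical support inside $[n]$'', and that the per-orbit multiplicities depend only on the support size; everything else reduces to orbit--stabiliser together with the combinatorial lemmas of Section~\ref{subsec:partitions}.

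For~(2), let $\C$ be polynomial size, so by Corollary~\ref{cor:support} there is a constant $k_0$ with $|\spt{g}| \le k_0$ for every gate $g$ of every $C_n$. Given $C_n$, a set $S \sse [n]$ and a wire $w = (g,h)$, put $T \de (\spt{g}\cup\spt{h}) \cap S$, so $|T| \le 2k_0$ and, by the choice of $T$, $\spt{g}\cup\spt{h} \sse T \cup ([n]\bs S)$. Any $\sigma \in \Sym{S\bs T}$ fixes both $T$ and $[n]\bs S$ pointwise, hence fixes $\spt{g}\cup\spt{h}$ pointwise, hence fixes $g$ and $h$ and so fixes $w$. Thus $\Sym{S\bs T}$ lies in the $\Sym{S}$-stabiliser of $w$, and orbit--stabiliser gives that the orbit of $w$ under the automorphisms induced by $\Sym{S}$ has size at most $\ind{\Sym{S}}{\Sym{S\bs T}} = |S|!/(|S|-|T|)! \le |S|^{2k_0}$. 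So $\C$ is locally polynomial of degree $2k_0$; the finitely many circuits below the threshold of Corollary~\ref{cor:support} are absorbed by raising the degree, since on a fixed finite circuit every wire orbit has bounded size, which is at most $|S|^{k}$ once $k$ is large and $|S| \ge 2$ (the case $|S| \le 1$ being trivial).

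For~(1), let $\C$ be locally polynomial of degree $k$, each $C_n$ finite. First I would extract bounded supports \emph{without} appealing to Corollary~\ref{cor:support}, whose hypothesis is what we are trying to prove. Taking $S = [n]$, every wire of $C_n$ has an orbit under $\Aut{C_n} = \Sym{n}$ of size at most $n^k$; projecting a wire orbit onto an endpoint shows every gate lying on a wire has orbit at most $n^k$; each of the $O(n^r)$ relational input gates ($r$ the maximum arity in $\tau$) has orbit at most $n^r$; and every remaining gate --- a constant gate, the (single) output gate, or an internal gate with no incoming wire, of which rigidity permits only $O(1)$ --- is fixed by all automorphisms. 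Hence $s \de \max_{g}|\orb(g)| \le n^{\max(k,r)}$ is sub-exponential, so the Support Theorem (Theorem~\ref{thm:support}, say with $\epsilon=\tfrac23$) applies for all large $n$ and gives $\SP(C_n) \le \tfrac{33}{\epsilon}\tfrac{\log s}{\log n} = O(1)$. Let $k_0$ be a constant bounding $|\spt{g}|$ over all gates of all $C_n$.

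Now I would invoke coherence. Fix a large $n$ and pick $m$ with $m > 2\max(n,k_0)$, above the Support-Theorem threshold, and large enough that both $C_n$ and $C_{k_0}$ completely embed into $C_m$. For $\ell \in \{n,k_0\}$, I claim the gates of $C_m$ fixed by $\Sym{[m]\bs[\ell]}$ are exactly those $g$ with $\spt{g} \sse [\ell]$: since $\Sym{[m]\bs[\ell]}$ is the pointwise stabiliser of the partition $\{[m]\bs[\ell]\} \cup \{\{i\} : i\in[\ell]\}$, that partition supports the stabiliser of any such $g$, so $\SP(g)$ is at least as coarse as it; thus one part of $\SP(g)$ contains $[m]\bs[\ell]$ and, as $m>2\ell$, that part is strictly the largest, whence the canonical support $\spt{g}$ --- the union of the remaining parts --- lies in $[\ell]$; the converse holds because $\spt{g}$ is a support. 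By completeness of the embeddings, $|C_\ell| = |\{g\in C_m : \spt{g}\sse[\ell]\}|$ for $\ell\in\{n,k_0\}$. Split the gates of $C_m$ into $\Sym{m}$-orbits; by Lemma~\ref{lem:supp-conjug} the support size $j_O \le k_0$ is constant on each orbit $O$, and since $\Sym{m}$ acts transitively on $j_O$-subsets of $[m]$ and carries $\{g : \spt{g}=X\}$ to $\{g : \spt{g}=\sigma X\}$, the quantity $c_O \de |\{g\in O : \spt{g}=X\}|$ is independent of the $j_O$-subset $X$. Hence $|\{g\in O : \spt{g}\sse[\ell]\}| = \binom{\ell}{j_O}c_O$, so $\sum_O c_O \le \sum_O \binom{k_0}{j_O}c_O = |C_{k_0}|$ is a constant, and (using $j_O \le k_0 \le n/2$) $|C_n| = \sum_O \binom{n}{j_O}c_O \le \binom{n}{k_0}\sum_O c_O \le \binom{n}{k_0}|C_{k_0}| = O(n^{k_0})$, i.e.\ $\C$ is polynomial size (the finitely many remaining $n$ being trivially bounded). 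The main obstacle is exactly this final counting step, where coherence is what prevents the number of $\Sym{m}$-orbits from growing with $m$; the bookkeeping identities ``fixed by $\Sym{[m]\bs[\ell]}$'' $=$ ``support inside $[\ell]$'' and ``$c_O$ depends only on $j_O$'' are the places to be careful.
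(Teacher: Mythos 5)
Your proposal is correct and follows essentially the same route as the paper: both parts rest on the Support Theorem, and Part~1 uses the complete embeddings to identify $|C_\ell|$ with the number of gates of a large $C_m$ whose supports lie inside $[\ell]$, then counts those gates. The differences are presentational rather than substantive --- the paper runs Part~1 as a contradiction with an averaging step where you count $\Sym{m}$-orbits directly and get the explicit bound $|C_n|\le\binom{n}{k_0}|C_{k_0}|$, and you spell out two details the paper elides (why local polynomiality alone already lets the Support Theorem apply, and the equivalence between being fixed by $\Sym{[m]\bs[\ell]}$ and having canonical support inside $[\ell]$).
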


\begin{proof}
  We prove the two parts separately.
  
  \iparagraph{Part 1.}  Suppose to the contrary that $C_n$ has $s(n) =
  \omega(\poly(n))$ gates.  Because $\C$ is locally polynomial the
  Support Theorem gives a bound $k \in \NN$ on the size of the support
  of gates in $\C$.  Take $m \in \NN$ such that $C_m$ is a circuit
  such that $C_k$ completely embeds into $C_m$ and $s(k) \cdot m^k <
  s(m)$, such $m$ exists because $\C$ is coherent and $s$ is super
  polynomial.  By symmetry and averaging there are at least $\frac{s(m)}{m^k}$
  gates of $C_m$ whose supports are drawn from $[k]$.  These gates are
  necessarily fixed by $\Sym{[m] \bs [k]}$.  Since the embedding is
  complete, $C_k$ maps onto at least these gates.  But this is a
  contradiction because $s(k) < \frac{s(m)}{m^k}$.  Thus $\C$ has
  polynomially many gates.
  
  \iparagraph{Part 2.}  If $\C$ has polynomially many gates then the
  Support Theorem immediately implies that the supports of all gates
  in $\C$ is bounded by some $k \in \NN$.  Therefore for every $S \sse
  [n]$ and every wire in $C_n \in \C$ has its orbit size bounded by
  $|S|^{2k}$.  This is exactly the definition of locally polynomial.
\end{proof}
Since there are properties definable in an infinitary logic with
finitely many variables that are not decidable by polynomial-size
circuits, it follows from the above proposition that the use of
infinite circuits is essential in Otto's result.

Proposition~\ref{prop:coherent} implies that all uniform circuit families we
consider are locally polynomial.  However, it does not establish an
equivalence between a circuit family having polynomially many gates
and being locally polynomial and coherent.  Indeed there are
Boolean circuit families uniformly definable in \FON that are not
coherent.  To see this observe that such circuit families may
include gates that are completely indexed by the number sort and
hence are fixed under all automorphisms induced by permutations of the
point sort.  Moreover the number of such gates may increase as a
function of input length.  However, because coherence requires that
\emph{complete} embedding exist, the number of gates in each circuit
of a coherent family that are not moved by any automorphism must be
identical.  Thus there are uniform circuits that are not
coherent.

Consider weakening the definition of coherence to require only that an embedding
exists but not that the embedding is complete, and call this
\emph{partial coherence}.  One can show that any relation which can be
computed by a Boolean circuit family uniformly definable in \FON can also be computed by a partially coherent Boolean circuit
family with the same uniformity by appropriately creating copies of circuits relativised for
all shorter lengths.  
We omit any formal discussion of this construction.

\section{Future Directions}
\label{sec:conc}
One of the original motivations for studying symmetric majority
circuits was the hope that they had the power of choiceless polynomial
time with counting (\CPTC) \cite{BGS99}, and that, perhaps, techniques
from circuit complexity could improve our understanding of the
relationship between \CPTC and the invariant queries definable in
polynomial-time.  However, because $\FPC \ssn \CPTC$ \cite{DRR08}, our
results indicate that symmetry is too much of restriction on
$\PT$-uniform circuit families to recover $\CPTC$.

A natural way to weaken the concept of symmetry is to require only
that induced automorphisms exist only for a certain subgroup of the
symmetric group.  This interpolates between our notion
symmetric circuits and circuits on linearly-ordered structures, with
the latter case occurring when the subgroup is the identity.  An
easier first step may be to consider the action on structures with a
finite number of disjoint sorts and require only that automorphisms be
induced by permutations which preserve the sorts, e.g., structures
interpreting Boolean matrices whose rows and columns are indexed by disjoint
sets.

The Support Theorem is a fairly general statement about the structure
of symmetric circuits and is largely agnostic to the particular
semantics of the basis.  To that end the Support Theorem may find
application to circuits over bases not consider here.  The Support
Theorem can be applied to arithmetic circuits computing invariant
properties of matrices over a field; e.g., the Permanent polynomial is
invariant and one standard way to compute it is as a symmetric
arithmetic circuit, i.e., Ryser's formula \cite{R63}.  Finally, the form of
the Support Theorem can, perhaps, be improved as the particular upper
bound required on the orbit size does not appear to be fundamental to
the conclusion of the Support Theorem.

\paragraph{Acknowledgments.} The authors thank Dieter van Melkebeek for looking at an early draft of this paper.  This research was supported by EPSRC grant EP/H026835.

\bibliographystyle{plain}  
\bibliography{main}








\end{document}